\def\dj{d\kern-.30em\raise1.25ex\vbox{\hrule width .3em height .03em}}
\def\Dj{D\kern-.70em\raise0.75ex\vbox{\hrule width .3em height .03em}
\kern.03em}
\def\ginv{\Gamma_{ \mathrm{inv} }} 
\def\ddisp{\lambda}
\def\bla#1{$(${\it #1\/{}}$)$}
\def\FWP{\widehat{F}}
\def\hatphi{\widehat{\phi}}
\def\dP{d_{\!P}}
\def\cal{\mathcal}
\def\Bbb{\mathbb}
\def\frak{\mathfrak}
\newcommand{\id}{\mathrm{id}}
\newcommand{\ad}{\mathrm{ad}}
\newcommand{\Sum}{\displaystyle{\sum}}
\newcommand{\hor}{\frak{hor}}
\def\1{\varnothing}
\def\e{\varepsilon}
\def\S{S}
\def\St{\S}
\def\w{\mathrm{w}}
\def\wact{\blacktriangleleft}
\def\Guni{\Gamma^{\wedge}}
\def\Gbr{\Gamma^{\vee}}
\def\sw{\mu}
\newtheorem{theorem}{Theorem}[section]
\newtheorem{corollary}{Corollary}[section]
\newtheorem{prop}{Proposition}[section]
\theoremstyle{definition}
\newtheorem{definition}{Definition}[section]
\theoremstyle{remark}
\newtheorem{remark}[definition]{Remark}
\numberwithin{equation}{section}
\begin{document}

\title{Dunkl Operators for Arbitrary Finite Groups}

\author{Micho \Dj ur\dj evich}
\address{Instituto de Matem\'aticas,  
Universidad Nacional Aut\'onoma de M\'exico,\\
Area de la Investigaci\'on Cient\'{\i}fica, 
Circuito Exterior, \\
Ciudad Universitaria, CP 04510, Mexico City, MEXICO.}
\email{micho@matem.unam.mx}
\author{Stephen Bruce Sontz}
\address{Centro de Investigaci\'on en Matem\'aticas, A.C., 
(CIMAT)\\
Jalisco S/N, Mineral de Valenciana, CP 36023, 
Guanajuato, MEXICO.}
\email{sontz@cimat.mx}

\begin{abstract}
The Dunkl operators associated to a necessarily 
finite Coxeter group acting on 
a Euclidean space are generalized to any finite group 
using the techniques of 
non-commutative geometry, as introduced by the authors to view the usual 
Dunkl operators as covariant derivatives in a quantum principal 
bundle with a quantum connection. 
The definitions of Dunkl operators and 
their corresponding Dunkl connections 
are generalized to quantum principal bundles 
over quantum spaces which possess 
a classical finite structure group. 
We introduce {\em cyclic Dunkl connections} 
and their {\em cyclic Dunkl operators}.
Then we establish a number of interesting 
properties of these  structures, 
including the characteristic zero curvature property. 
Particular attention is given to 
the example of complex reflection groups, and their naturally 
generalized siblings called groups of {\em Coxeter type}. 
\end{abstract}

\keywords{Dunkl Operators, Finite Groups, Quantum Principal Bundles}

\maketitle

\section{Introduction}
\label{sec-1}
\parskip=2pt plus3pt minus1pt

In our previous article \cite{DS} we showed how one can 
view the Dunkl differential-difference operators as covariant 
derivatives in a specific quantum principal bundle endowed with 
a specific quantum connection, 
whose (quantum!) curvature turns out to be zero. 
The zero curvature then implies 
that the Dunkl operators 
associated to a Dunkl
connection commute among themselves. 
While this commutativity result dates back to 
Dunkl's original paper \cite{dunkl-1989}
on this topic, the viewpoint established in~\cite{DS} 
and continued here gives that result a 
geometric meaning. 
The Dunkl operators are associated 
to a finite Coxeter group which acts
on a finite dimensional Euclidean space 
as orthogonal transformations. 

But the general procedure used in \cite{DS} 
need not be restricted to 
Coxeter groups. 
Rather the theory can be developed, as we show here, 
in much the same detail on the
side of non-commutative geometry for 
arbitrary finite groups 
acting freely on a $C^\infty$ manifold 
which satisfies certain 
additional properties. 
This is the main topic of this paper, even though certain secondary topics, most 
especially cyclic structures as 
discussed later, will be necessary for
our presentation. 

This paper and \cite{DS} have 
opened up a bridge between theories 
that had been previously studied 
independently. On one side of this bridge
there is the non-commutative 
(or quantum) geometry of quantum principal bundles 
and their quantum connections. 
On the other side there is 
the harmonic analysis which 
originally motivated and then continued to 
grow out of the Dunkl
operators (see \cite{dunkl-1989} and \cite{rosler}), 
but also their applications in 
probability, Segal-Bargmann analysis 
and other areas of mathematical 
physics such as Sutherland-Calogero-Moser models.  
(See \cite{etingof}, \cite{rosler-voit}, \cite{sontz-heat} 
and references therein.) 
One can always cross the bridge back 
to other side of analysis 
in order to define and study the corresponding 
structures in harmonic analysis 
(such as a new generalization of the Fourier, or Dunkl,
transform) and in mathematical physics, 
though this is more 
straightforward in the case of curvature zero. 
However, we leave these interesting and 
important topics for 
consideration in our future investigations. 

Our paper is organized as follows. 
In the next section we begin by reviewing for the reader's convenience 
some background material on quantum differential calculus 
with an emphasis on finite groups,  quantum principal bundles, 
quantum connections and quantum covariant derivatives. 
We also give a new, quite general definition 
of quantum Dunkl connections, 
that generalizes the definition in \cite{DS}.  
In Section~3 we discuss the motivating example from \cite{DS} 
of the Dunkl operators associated with a Coxeter group acting 
on $\mathbb{R}^n$. 
In Section~4 we introduce a discrete geometry 
of points and lines in a finite set. 
This {\em cyclic geometry} is at the heart of the structure 
we wish to study. 
Our results come in Section~5 where we 
present a new construction, which enables us to define 
{\em cyclic Dunkl connections}  
for a quite large class of quantum principal bundles, 
including those with a quantum "total space" manifold 
having a finite structure group. 
We compute the curvature of these connections 
and then, under certain general conditions, we prove 
that the curvature is zero in many cases of interest 
or, more generally, is equal  
to the curvature of the initial background geometry. 
We also prove that these cyclic Dunkl connections possess 
an important multiplicative property. 

As our main illustration, we explain in Section~6 
how complex reflection groups 
and their associated Dunkl operators 
as introduced in \cite{dunkl-opdam} 
are included in this picture. In Section~7, we sketch a construction of a simple yet instructive 
class of examples of quantum principal bundles and their cyclic Dunkl connections, based on Cuntz algebras.
These examples work with arbitrary finite groups, and are such that both the base manifold and the bundle 
are truly quantum objects. 

The paper ends with three Appendices. 
In Appendix~A, having an independent interest, 
we analyze properties of an underlying geometrical structure, 
a kind of primitive cyclic geometry.  
This structure is associated to the space of 
left-invariant elements of a 
bicovariant, $*$-covariant first-order 
differential calculus over a finite group, 
and it is closely related to the properties of quantum Dunkl connections. 
Appendix~B presents some definitions, while
Appendix~C is a technical proof.

\section{Background Material}
\label{sec-2}

In this section we review in some detail the results on which the rest of
the paper is based. 
The references for this material are \cite{MichoQPB2}, 
\cite{MichoQPB3}, 
\cite{DS} and \cite{Part-II}. 
All vector spaces are over the complex numbers. 
All maps are linear over the field of complex 
numbers, unless otherwise indicated. 
For example, 
if $V$ is a non-zero vector space with 
an involutive, additive conjugation
$C : v \mapsto C(v) \equiv v^*$ satisfying 
$(\alpha v)^* = \alpha^* v^*$ for
all $v \in V$ and $\alpha \in \mathbb{C}$, then $C$ is not linear, 
but rather is {\em anti-linear}. 
Also, a linear map $T : V \to W$ that satisfies $T(v ^*) =  T(v)^*$,
where $V$ and $W$ are vector spaces with conjugation,  is called
a {\em $*$-morphism} or a {\em hermitian map}. 
The tensor product symbol $ \otimes $ without a subscript 
means the context appropriate tensor product over 
the complex numbers $ \mathbb{C} $. 
If $ \S $ is a (finite) set, we let $ \mathrm{Card} (\S) $ 
denote its (finite) cardinal number of elements. 
We assume familiarity with Sweedler's notation, 
which we sometimes use without explicit comment.

\subsection{Quantum Differential Calculus on Finite Groups}

We let $G$ denote a finite group. 
We put 
$$
\mathcal{A}:= 
\mathcal{F} (G) := \{ f : G \to \mathbb{C} \},
$$
the set of all complex valued functions with domain $G$. 
Then $\mathcal{A}$ is a $*$-Hopf algebra 
with identity element $1_{ \cal{A} } =1$, 
the constant function. 
The multiplication in $\cal{A}$ is defined point-wise, that is as
$(f_1 f_2) (g) := f_1(g) f_2(g)$ for $f_1, f_2 \in\cal{A}$ and $g \in G$,  
and so is commutative.
A complex number $\alpha \in \mathbb{C}$ will be used at times to 
denote the element $\alpha \, 1_{ \cal{A} } \in \cal{A}$. 

The antipode $\kappa : \cal{A} \to \cal{A}$, which is defined as 
$(\kappa (f) ) (g) := f(g^{-1})$ for $f \in \cal{A}$ and $g\in G$,  
satisfies 
$\kappa^2 = \mathrm{id}_{\cal{A}}$, 
the identity map on $\mathcal{A}$. 
So $\kappa$ is a bijection. 
We let $\phi : \cal{A} \to \cal{A} \otimes \cal{A}$ denote the 
co-product of $\cal{A}$, which is defined as the pull-back of the group 
multiplication function $G \times G \to G$. 
The co-product is co-commutative if and only 
if the group $G$ is abelian. 
The $*$-operation on $\mathcal{A}$ is given by 
point-wise complex conjugation, 
$f^* (g):= \overline{f (g)}$.  
The co-unit $\e : \cal{A} \to \mathbb{C}$ is defined by 
$\e (f) := f (e)$, where $e \in G$ is the identity element in~$G$. 

The right adjoint co-action 
$ \ad : \cal{A} \to \cal{A} \otimes\cal{A}$
is defined for all $ a \in \cal{A} $ by 
\begin{equation}
\label{define-right-adjoint}
  \ad (a) := a^{(2)} \otimes \kappa(a^{(1)}) \, a^{(3)}
\end{equation}
using Sweedler's notation 
for the double co-product of $a \in \mathcal{A}$ and 
the co-associativity of the co-product $ \phi $, 
namely
\begin{equation}
\label{phi-1-2-3-notation}
      (\phi \otimes \mathrm{id}) \, \phi (a) =
      (\mathrm{id} \otimes \phi) \, \phi (a) = 
      a^{(1)} \otimes a^{(2)} \otimes a^{(3)} 
      \in \cal{A} \otimes \cal{A} \otimes \cal{A}. 
\end{equation}

For each $g \in G$ we let $\delta_g \in \mathcal{A}$ denote the 
Kronecker delta function with value $1$ on $g$ and value $0$
on all other elements of~$G$. 
Then the set $\{ \delta_g~|~ g \in G \}$ is a vector space 
basis of $\cal{A}$. 
We suppose that $\S$ is a subset of $G$ satisfying 
these properties: 
\begin{enumerate}

\item
$\S^{-1} = \S$.

\item
$g^{-1} \S g = \S$ for all $g \in G$.

\item
$e \notin S$, where $e$ denotes the identity element of $G$.

\item 
$\S$ is non-empty. 

\end{enumerate}

At one extreme we could take $S = G \setminus \{ e \}$, 
where $G$ has at least two elements. 
At the other extreme we could have $ S = \{ g_0 \}$, 
where $ g_0 \in G \setminus \{ e \}$ is central and has 
order $2$. 
An example of the latter is the multiplicative 
group of $8$ quaternions 
$ \{ \pm 1, \pm i, \pm j, \pm k \} $ with $g_0 = -1$. 

Then $\S$ determines a unique 
{\em first-order differential calculus (fodc)}, 
$d : \mathcal{A} \to \Gamma$, which is 
{\em bicovariant} 
and {\em $*$-covariant}. 
The vector space $\Gamma$ is an $\cal{A}$-bimodule, and 
the {\em differential} $d$ satisfies the Leibniz rule 
$$
 d (a b ) = (d a ) b + a ( d b) 
$$ 
for $a,b \in \cal{A}$. 
Also $ \Gamma $ is generated as a vector space by all the 
elements of the form $ a (d b) $ for $a,b \in \cal{A}$. 
The bicovariance condition means that there is a 
{\em canonical left co-action} 
$ \Phi_L : \Gamma \to \cal{A} \otimes \Gamma $ 
and a {\em canonical right co-action 
$ \Phi_R : \Gamma \to \Gamma \otimes \cal{A}$} of $\cal{A}$ 
that co-act on $\Gamma$ in a compatible way
(meaning that these two co-actions commute); 
we say that $ \Gamma $ is an {\em $ \cal{A} $-bicomodule}. 
Also, the $*$-covariance means that the differential $ d $ as well as 
that the canonical right and left co-actions 
are $*$-morphisms. 

Without going into all the details 
we note that $\Gamma$ is 
isomorphic as 
a left $\cal{A}$-module to the free left $\cal{A}$-module 
$\cal{A} \otimes ( \ker \e / \cal{R} )$ 
for the (two-sided) ideal $\cal{R} \subset \ker \e \subset \cal{A}$  
that is determined by $\S$ by 
\begin{equation}
\label{define-cal-R}
         \cal{R} := \{ f \in \ker \, \e ~|~ f(s) = 0 
         \quad \mathrm{for~all~} s \in \S\}. 
\end{equation}
(In the general theory of fodc's $\cal{R}$ is a right ideal, but since
$\cal{A}$ is commutative every right ideal is a two-sided ideal.) 
Here $\e$ is the co-unit of $\cal{A}$. 
Under the isomorphism 
$$
\Gamma \cong \cal{A} \otimes ( \ker \e / \cal{R} )
$$ 
the right $\cal{A}$-module structure on $\Gamma$ is represented as 
\begin{equation}\label{Gamma-R}
(a\otimes [b])c=(ac^{(1)})\otimes [bc^{(2)}]. 
\end{equation}
The canonical left co-action of $\cal{A}$ on $\Gamma$ 
is identified with the map 
\begin{equation}
\phi \otimes id : \cal{A} \otimes ( \ker \e / \cal{R} ) \to 
\cal{A} \otimes \cal{A} \otimes ( \ker \e / \cal{R} ),
\end{equation}
where $\phi$ is the co-product of $\cal{A}$. 
The notation $ id $ here and always hereafter 
means the context appropriate identity map. 

The subspace of left-invariant elements 
of $ \Gamma $ under its canonical left co-action $ \Phi_L $, 
defined as 
$$ 
\ginv :=
\{ \omega \in \Gamma ~|~ 
\Phi_L (\omega) = 1_{\cal{A}} \otimes \omega \}, 
$$
is then identified with the subspace
$ 1_{\cal{A}}\otimes ( \ker \e / \cal{R} )$ 
of $\Gamma$.  
So we have 
$$
\ginv \cong \ker \e / \cal{R} \quad  \mathrm{and} \quad 
 \Gamma \cong \cal{A} \otimes \ginv.  
$$

The projection $ \pi_\mathrm{{inv}} : \Gamma \to \ginv $ is 
defined by $\pi_\mathrm{{inv}} ( a \otimes \omega ) := \e (a) \omega$ 
for every 
$ 
a \otimes \omega \in \cal{A} \otimes ( \ker \e / \cal{R} ) $. 
See \cite{Part-II} or \cite{Wdiff} for more details about fodc's, 
but notice that in those references $ \ginv $ is denoted 
as $ _{\mathrm{inv}}\Gamma $. 

Moreover, the condition $S \ne \emptyset$ 
is equivalent to the condition that 
the $\mathcal{A}$-bimodule $\Gamma$ is non-zero. 
It is important to point out that the finite group $G$ is uniquely a 
$C^\infty$  differential manifold in the discrete 
topology, in which case it has dimension zero 
and so its de~Rham fodc (that is, the 
space of differential $1$-forms on $G$) is the zero vector space. 
Therefore the fodc we have constructed from $\S$ is not the de~Rham fodc 
of classical differential geometry but rather is a quantum object. 
So, the condition $ \S \ne \emptyset $ is included here 
only to exclude the trivial case $ \Gamma = 0 $. 
This condition is not included in Theorem 13.2 in 
\cite{Part-II} where the notation $ J $ 
corresponds to our notation $ \S $. 

There is a surjective linear map 
$\pi : \mathcal{A} \to \,\ginv$ 
defined by $ \pi := \pi_\mathrm{inv} \, d $. 
This map, which is called the {\em quantum germs map},  
provides us with a basis $ \{ \pi(\delta_s) ~|~ s \in \S \}$
of the vector space $\ginv$. 
In particular, $\dim (\ginv) = \mathrm{card} (\S) \ne 0$, which is 
one way to understand the fact that 
$\Gamma  \cong \cal{A} \otimes \ginv \ne 0$. 
The quantum germs map $\pi$ is also closely connected to the 
differential $d : \cal{A} \to \Gamma$ by the identity  
for all $a\in\cal{A}$ 
\begin{equation}
\label{diff-id}
d a = a^{(1)} \pi (a^{(2)} ), 
\end{equation}
where $\phi (a) = a^{(1)} \otimes a^{(2)}$ 
in Sweedler's notation. 
See Section~6.4 of \cite{Part-II} for this identity. 

It is important to note that just the group $ G $ 
does not determine a 
unique bicovariant, $ * $-covariant fodc. 
In general, different choices of the subset $ \S $ lead 
to different (i.e., non-isomorphic) fodc's. 
For example, as noted above even the dimension of $ \ginv $
depends on $ \S $. 
 
The actual formula for $d$ has its own interest. 
The action of $d$ on the basis
$\{ \delta_g ~|~ g \in G \}$ of 
$ \cal{A} = \cal{F} (G) $ is given by 
\begin{equation}
\label{d-of-delta-g}
       d (\delta_g) = \sum_{s \in S} (\delta_{g s^{-1}} 
       - \delta_g ) \pi (\delta_s). 
\end{equation}
We remark that this is quite similar 
to a nearest neighbor formula 
as frequently used in mathematical physics. 
However, here the nearest neighbor differences appear as 
the coefficients of distinct elements of a basis of $\ginv$ 
rather than being added directly together. 
So the formula \eqref{d-of-delta-g} looks more like 
a "gradient at $g$" associated with a family 
of "directional derivatives" 
$\delta_{g s^{-1}} - \delta_g$ 
indexed by the elements $s\in\S$. 

\begin{remark} In what follows, we shall use the same symbol 
for diverse mutually naturally identifiable objects. 
In particular, we shall 
commonly switch between $g$ to denote an element 
of the group $G$ and to denote 
its associated Kronecker delta function $\delta_g$. 
We shall also write $[g]$, or simply $g$ 
when there is no risk of ambiguity, 
to denote the quantum germ $\pi(g)$. 
In much the same spirit we let $\e$ denote the identity element 
in $G$ as well the co-unit of $\cal{A}$.  
\end{remark}

\subsection{Quantum Principal Bundles}\label{QPB}

In general, we say that a triple $P = (\cal{B}, \cal{A}, F)$ is 
a {\em quantum principal bundle} 
if $\cal{B}$ is a $*$-algebra 
with identity element $ 1_{\mathcal{B}} $, 
$\cal{A}$ is a $*$-Hopf algebra 
with identity element $ 1_{\mathcal{A}} $
and $F : \cal{B} \to \cal{B} \otimes \cal{A}$ 
is right co-action of $\cal{A}$ on $\cal{B}$ that is also a unital, 
multiplicative, $*$-morphism of algebras. 
One also requires of $F$ a technical property that is dual to that 
of an action being free. 
Next, the $ * $-algebra $\mathcal{V}$ of the "base space" 
is defined to be the right invariant 
elements in $\mathcal{B}$ under the right co-action by $F$, 
namely,
\begin{equation}
\label{define-base-space}
       \mathcal{V}:= \{ b \in \mathcal{B} ~|~ 
       F(b) = b \otimes 1_{\mathcal{A}} \}. 
\end{equation}

Suppose the finite group $G$ acts freely 
on the right on $E$, a $C^\infty$ manifold. 
Then the quotient map $E \to E/G$ is a principal bundle with structure 
group $G$, a zero dimensional Lie group. 
So far this is purely a construction in classical differential geometry. 

We now modify this construction in order to get  
a quantum principal bundle 
in non-commutative geometry with the $ * $-Hopf algebra 
$\mathcal{A}= \mathcal{F} (G)$ being its structure "quantum group".  
In this construction we follow the method used in \cite{DS} closely, 
though there are some details here which will be different. 
We construct a {\em quantum principal bundle (QPB) 
with finite structure group $G$}  
to be the triple $P = (\mathcal{B}, \mathcal{A}, F)$, 
where 
$\mathcal{A} = \cal{F}(G)$ as already defined above and 
$\mathcal{B}:= C^\infty (E)$, the $*$-algebra of 
{\em complex valued} $C^\infty$ functions $f : E \to \mathbb{C}$, 
with $E$ as above. 
The $*$-operation on $\mathcal{B}$ is given by 
pointwise complex conjugation 
of such a function $f$. 
Lastly the linear map (which is a unital, multiplicative, 
$*$-morphism of algebras) 
$F :  \mathcal{B} \to \mathcal{B} \otimes \mathcal{A}$ 
is defined as the pull-back of the right action map 
$\alpha : E \times G \to E $. 
So, $F$ is a right co-action of $\mathcal{A}$ on $\mathcal{B}$. 
Since $\alpha$ is a free action, the co-action $F$ satisfies the 
technical condition dual to being a free action. 
This completely describes the QPB $P = (\mathcal{B}, \mathcal{A}, F)$ 
with finite structure group $G$. 

Of course, a  QPB with finite structure group $G$ 
is a special case of a QPB for which an extensive theory
has been developed. 
Much of what follows holds in the context of QPB's in general, though 
eventually we focus on the special case of a QPB with a finite 
structure group. 

But in general just a QPB in and of itself does not have enough 
structure for our purposes. 
We also need to associate differential calculi to the algebras. 
We do this first in the case 
of {\em any} algebra $\mathcal{A}$
with identity element $ 1_{\cal{A}} $, not necessarily 
$ \cal{B} $ or $\cal{F} (G)$. 
We start with a given fodc $d : \cal{A} \to\Gamma$. 
This is then associated to a graded 
algebra $\Lambda = \bigoplus _{k=0}^\infty \Lambda^k$, 
meaning that $ \Lambda^k \cdot \Lambda^l \subset \Lambda^{k+l} $. 
Each {\em homogeneous subspace} $ \Lambda^k $ 
is also required to be an $ \cal{A} $-bimodule. 
We require that $\Lambda$ have a {\em differential}  
$d : \Lambda^k \to \Lambda^{k+1}$, meaning $d^2 =0$, 
that also satisfies 
for all $\varphi_1 \in \Lambda^n$ the {\em graded Leibniz rule} 
$$
    d (\varphi_1 \varphi_2) =   
    d (\varphi_1) \, \varphi_2  + (-1)^n \varphi_1 \, d(\varphi_2). 
$$
We require that $\Lambda$ is an {\em extension} of the fodc $\Gamma$ 
in the sense that  $\Lambda^0 = \cal{A}$,  
$\Lambda^1 = \Gamma$ and 
the map $d : \Lambda^0 \to \Lambda^1$ 
coincides with the differential 
$d : \cal{A} \to \Gamma$ of the given fodc, 
and hence this justifies using the same notation $ d $. 
We require as well that the identity element 
$ 1_{\cal{A}} \in \cal{A}$ is 
also the identity element of $ \Lambda $. 
Any such object $\Lambda$ 
is called a 
{\em higher-order differential calculus (hodc)} that
{\em extends} the given fodc. 
Moreover, in the case of an fodc over $ G $ we require that the hodc 
$\Lambda$ must be generated as a differential algebra 
by $\Lambda^0$ or, in other words, every element in 
$\Lambda^k$ can be written as a finite sum of 
elements of the form $a_0 \, d a_1 \cdots d a_k$, where 
$a_0, a_1, \dots, a_k \in \Lambda^{0} = \cal{A}$ or,
equivalently, of the form  $d b_1 \cdots d b_k \, b_0$, 
where $b_1, \dots, b_k, b_0 \in \Lambda^{0} = \cal{A}$.  
If $ \cal{A} $ is a Hopf algebra, we also require bicovariance 
of each $ \Lambda^k $. 
So far the discussion in this paragraph does not 
involve $*$-operations and the corresponding $ * $-covariance. 
However, these are also to be included as presented 
in Chapter~11 of \cite{Part-II}. 
For the special case of a QPB with finite structure group 
the algebras $ \cal{A} = \cal{F} (G) $ and 
$ \cal{B} = C^\infty (E) $ are actually $ * $-algebras, 
and so it is natural to require a compatible 
$ * $-operation on all the associated 
vector spaces $ \Lambda^k $ introduced above. 
     
As an aside we note that a graded 
algebra $\Lambda = \bigoplus _{k=0}^\infty \Lambda^k$
is said to be {\em graded commutative} if 
$\varphi_1 \varphi_2 = 
(-1)^{n m} \varphi_2 \varphi_1 \in \Lambda^{n+m} $
whenever $\varphi_1 \in \Lambda^n$ and 
$\varphi_2 \in \Lambda^m$. 
In particular, this implies that $ \Lambda^0 $ is 
commutative. 
We do not require this property of an arbitrary hodc, 
although the classical de Rham calculus is graded commutative. 

A {\em higher-order differential calculus (hodc) 
for a QPB $P=(\cal{B}, \cal{A}, F)$} is a 
triple $ ( \Omega(P), \Lambda, \FWP )$, 
where $ \Omega(P) $ is an hodc for $ \cal{B} $ and 
$ \Lambda $ is a bicovariant, $*$-covariant hodc 
for $ \cal{A} $.  
Finally, $ \FWP : \Omega(P) \to \Omega(P) \otimes \Lambda $
is a grade preserving, 
right co-action of $ \Lambda $ on $ \Omega (P) $
which also extends the right co-action 
$F : \cal{B} \to \cal{B} \otimes \cal{A}$ and  
is a multiplicative, differential, unital $ * $-morphism. 
See Section~12.4 of \cite{Part-II} 
for the definition of what it means for 
$\FWP$ 
to be a right~co-action. 
The notation $ \FWP $ is used in the papers of the first author 
such as \cite{MichoQPB2},  
while the more cumbersome notation $_{ \Omega(P) }\Psi$ 
is used for this right co-action 
in the book \cite{Part-II} of the second author.

\subsection{A Special Case: The Hopf Algebra of 
	a Finite Group}

We now study the case $ \cal{A} = \cal{F} (G) $. 
For this case when the fodc 
$d : \cal{A} 
\to\Gamma$ is 
bicovariant and $*$-covariant 
there is a particular hodc 
$\Guni$ called the {\em universal hodc}
associated to the fodc~$\Gamma$, 
which itself has left and right co-actions of 
$\cal{A}$ co-acting on it. 
Every such hodc extending the fodc 
$d: \cal{A} \to \Gamma$ is a quotient of $\Guni$, 
thereby justifying the qualifier `universal.' 
(See \cite{MichoQPB1} or \cite{Part-II} 
for the construction of $\Guni$.) 
The theory also can be developed using another hodc, called 
the {\em braided differential calculus} of Woronowicz. 
(See \cite{Part-II} or \cite{Wdiff}.) 
The latter hodc, denoted as $\Gbr$, 
is based on the braided exterior algebra, 
which is associated to a canonical 
$\cal{A}$-bimodule braid-automorphism 
$\sigma\colon\Gamma\otimes_{\cal{A}}\Gamma\rightarrow 
\Gamma\otimes_{\cal{A}}\Gamma$. 
Its restriction 
$\sigma\colon\ginv\otimes\ginv\rightarrow\ginv\otimes\ginv$ 
is given for $\eta, \vartheta \in \ginv$ by 
\begin{equation}
\label{sigma-eta-vartheta} 
\sigma(\eta\otimes\vartheta)= 
\vartheta^{(0)} \otimes (\eta \circ \vartheta^{(1)}), 
\end{equation}
where $\circ$ denotes a canonical right action of 
$\mathcal{A}$ on $\ginv$ defined 
for $b, c \in \cal{A}$ by 
\begin{equation}
\label{define-circ-action}
\pi(b) \circ c := \pi (b c - \e (b) c ), 
\end{equation}
where $\pi$ is the quantum germs map. In terms of the identification $\ginv$ with $\ker \e /\cal{R}$ this is 
the factor right $\cal{A}$-module structure. 
Also, 
$\ad(\vartheta) = \vartheta^{(0)} \otimes \vartheta^{(1)}$ 
in Sweedler's notation, where  
$\ad\colon\ginv\rightarrow \ginv\otimes \cal{A}$ 
denotes the right adjoint co-action 
of $\cal{A}$ on $\ginv$. 
The relation of this $ \ad $ with the right adjoint co-action 
$ \ad $     defined in \eqref{define-right-adjoint} 
is given by the commutative diagram 
\begin{equation*}
\begin{CD}
\cal{A} @>{\mbox{$\ad$}}>> \cal{A} \otimes \cal{A} \\
@V{\mbox{$\pi$}}VV @VV{\mbox{$\pi$}}V\\ 
 \ginv @>>{\mbox{$\ad$}}> \ginv \otimes \cal{A}
\end{CD}
\end{equation*}
that is, the previously defined right adjoint co-action  
$ \ad $ passes
to the quotient by the quantum germs map $ \pi $. 
As a formula this diagram reads as
\begin{equation}
\label{ad-ad-relation}
  \ad \, \pi = (\pi \otimes \mathrm{id}_\cal{A}) \, \ad. 
\end{equation}
Using \eqref{define-right-adjoint}, 
\eqref{phi-1-2-3-notation} and 
\eqref{ad-ad-relation} 
we obtain 
the formula 
$\ad \, \pi(a)\!=\!\pi(a^{(2)})\otimes 
\kappa(a^{(1)}) \, a^{(3)}$ 
for all $ a \in \cal{A} $. 
The long, technical proof of \eqref{sigma-eta-vartheta} 
is given in Appendix~C.

\subsection{A Special Type of Hodc}

Let us here present an important `intermediary' context for an hodc.
This will be highly relevant in our considerations 
of general Dunkl connections, 
because all of the considered fodc's 
over finite groups are of this form. 
So we continue studying $ \cal{A} = \cal{F}(G) $. 
Namely, let us assume that there exists an element 
$q\in\ker(\e) \setminus \cal{R}$ 
that satisfies 
$$
(q-1)\ker(\e)\subseteq\cal{R} 
$$ 
where $\e : \cal{A} \to \mathbb{C}$ is the 
co-unit of the Hopf algebra $\cal{A}$ 
and $\cal{R}$ is the ideal in $\ker(\e)$ that determines 
the bicovariant, $ * $-covariant fodc 
$\Gamma$ on $\cal{A}$ given by the subset $S$, 
as defined in \eqref{define-cal-R}. 
Next, let us define 
$$
\tau:=\pi(q)\in\ginv. 
$$
Recalling that $\ker \, \pi \cap \ker \e = \cal{R}$, 
we see that $\tau \neq 0$. 
Also, we remind the reader that 
$\ker \, \pi = \cal{R} + \mathbb{C} \, 1_{\cal{A}}$. 

First, we note for all $a\in\cal{A}$ that 
\begin{equation*}
\pi \Big( q  \big( a - \e(a) \big) \Big) =
\pi \Big( (q-1)  \big( a - \e(a) \big) \Big) 
+ \pi \big( a - \e(a) \big) = \pi (a), 
\end{equation*}
where we used 
$(q-1)  \big( a - \e(a) \big) \in (q-1) \ker \, \e 
\subseteq\cal{R}\subset\ker \, \pi$ and $\pi(1) = 0$. 
On the other hand we have
$$
\pi \Big( q  \big( a - \e(a) \big) \Big) = \pi (q a) - \e(a) \pi(q) = 
\pi(q a) - \e(a) \tau. 
$$

Combining the last two equalities we arrive at 
\begin{equation}
\label{arrive-at}
\pi(q a) = \pi(a) + \e(a) \tau.
\end{equation}

Then it is easy to see that 
\begin{equation}
\label{int-circ}
\tau\circ a = \pi(q) \circ a = \pi \bigl(q a - \e(q) a \bigr) 
= \pi (q a) = \pi(a)  +  \e(a) \tau
\end{equation}
where we have used the definition 
\eqref{define-circ-action}
of $\circ$, 
$q \in \ker \, \e$ and \eqref{arrive-at}. 

We will also be using the identity 
\begin{equation}
\label{curious-identity}
\theta \, a = a^{(1)} ( \theta \circ  a^{(2)} ) 
\end{equation}
for $a \in \cal{A}$ and $\theta \in \ginv$, 
where $\phi (a) =  a^{(1)} \otimes a^{(2)}$
is Sweedler's notation for the co-product $\phi(a)$ of $a\in\cal{A}$. This is actually  
another way to describe the right $\cal{A}$-module structure \eqref{Gamma-R} on $\Gamma$. 
For the sake of completeness, we include a proof of this. 
So we have that 
\begin{equation*}
a^{(1)} ( \theta \circ  a^{(2)} ) 
= a^{(1)} \bigl( \kappa ( a^{(2)}) \, \theta \, a^{(3)} \bigr)  
= \e(a^{(1)}) \theta a^{(2)}
= \theta \, \e(a^{(1)}) a^{(2)} 
= \theta \, a. 
\end{equation*}
Here we used 
the co-associativity of the co-product expressed in 
Sweedler's notation and basic
Hopf algebra properties of the 
antipode $\kappa$ and the co-unit $\e$ as well as 
this  property of the action $\circ \,$: 
$$
   \theta \circ b = \kappa (b^{(1)}) \, \theta \, b^{(2)} 
$$
for $ \theta \in \ginv $ and $ b \in \cal{A} $, 
where $ \phi (b) = b^{(1)} \otimes b^{(2)} $. 
See Section~6.4 and Appendix~B of \cite{Part-II} 
for more details. 

Next, for any $a \in \cal{A}$ we compute
\begin{align}
\label{d-is-inner}
    d a &= a^{(1)} \pi ( a^{(2)} ) 
    = 
    a^{(1)} \bigl( \tau \circ a^{(2)} - \e(a^{(2)}) \tau \bigr)\\ &= 
    a^{(1)} \bigl( \tau \circ a^{(2)} \bigr) - a^{(1)}\e(a^{(2)}) \tau 
    = \tau a - a \tau, \nonumber
\end{align}
where we used the identity \eqref{diff-id}
for the differential $d$, 
the Hopf algebra 
property $ a = a^{(1)} \e (a^{(2)}) $ of the co-unit $\e$ 
as well as formulas \eqref{int-circ} and \eqref{curious-identity}.  

This result, $d a = \tau a  -  a \tau$, 
is a sort of commutation relation, 
because it measures the difference between the left and right 
$\cal{A}$-module structures acting on the element 
$\tau\in\ginv\subset\Gamma$ in the 
$\cal{A}$-bimodule $\Gamma$. 
This justifies saying that $d$ is an {\em inner derivation} 
with respect to the element $\tau$. 
Then any graded algebra 
$\Omega(G) = \bigoplus_{k=0}^\infty \, \Omega^k(G)$ 
which extends the fodc 
$d : \cal{A} = \cal{F}(G) \to \Gamma$ 
and in which $\tau^2=0\in\Omega^2(G)$ 
will automatically become an hodc 
with its extended differential being defined by the graded commutator 
with $\tau$, namely 
$$ 
d \psi := \tau\psi-(-1)^k  \, \psi\tau\in\Omega^{ (k+1) } (G), 
$$
where $\psi\in\Omega^k(G)$ is any 
homogeneous element of degree $k$. 
The point is that one can readily verify that $d^2 = 0$ and 
that $d$ satisfies the graded Leibniz rule. 
As we shall see, this is precisely what happens for 
bicovariant, $*$-covariant 
differential calculi 
over finite groups, where we can take $q=1_{\cal{A}}-\e$. 
Notice the similarity of this construction of an hodc with 
the {\em extended bimodule method} as used in \cite{Wtwisted} 
and \cite{Wdiff}. 
That method is also presented in Section~10.1 of  \cite{Part-II}. 

The mere existence of the element 
$q\in\ker(\e) \setminus \cal{R}$ implies that 
$\cal{R} \subsetneq \ker \e$ and so 
$ \ginv \cong \ker \e / \cal{R} \ne 0 $, which
in turn implies that 
$ \Gamma \cong \cal{A} \otimes \ginv \ne 0 $. 
So the fodc is non-zero in this situation. 
Recall that $ \Gamma = 0 $ for the de Rham theory 
of the zero-dimensional manifold $G$ of classical differential geometry. 
So the above discussion is only for the quantum case. 

\subsection{The Hodc of the Total Space}

Next we will define an hodc for the "total space"   
$\cal{B}$ of the QPB $P = (\cal{B}, \cal{A}, F)$ 
with finite structure group $G$ and with fodc $d : \cal{A} \to \Gamma$
associated to $G$ as above. 
More specifically, $ E $ is a $ C^\infty $~manifold on which 
$ G $ acts freely from the right and $ \cal{B} = C^\infty (E) $. 
As a graded vector space this is defined by this tensor 
product of graded vector spaces: 
\begin{equation}
\label{define-Omega-P}
    \Omega (P) := \mathcal{D} \otimes \,\ginv^\wedge 
    \quad \mathrm{or~explicitly} \quad  
    \Omega^m (P) :=  \bigoplus_{k + l = m} 
     \mathcal{D}^k \otimes \,\ginv^{\wedge \, l} 
\end{equation}
for integers $k,l,m \ge 0$. 

There is a lot to explain here. 
First, $\cal{D} := \Omega_{dR} (E) \otimes \mathbb{C}$ 
is the complexified 
de~Rham exterior calculus of the $C^\infty$ manifold $E$. 
Note 
$\Omega^{0}(P)=  \cal{D}^0 = C^\infty (E)=\cal{B}$ 
as it must be in order to have an extension. 
(As we shall see later $ \ginv^{\wedge 0} = \mathbb{C} $). 
Second, $\ginv^\wedge$ denotes the space 
of the left invariant elements 
in the universal hodc $\Guni$ under its canonical 
left co-action by $\mathcal{A} = \cal{F}(G)$. 

Instead of using  $ \Guni $, 
we could have used an {\em acceptable} hodc $\Omega(G)$, 
which is defined to be an hodc  
extending $ d : \cal{A} \to \Gamma $  
that is generated as a differential algebra by 
$ \Omega^0 (G) =\cal{A} = \cal{F}(G)$ 
and such that the co-multiplication  
$\phi : \cal{A} \to \cal{A} \otimes \cal{A}$ 
has a necessarily unique extension 
$\hatphi : \Omega(G) \to \Omega(G) \, \otimes \, \Omega(G)$
that is a differential algebra morphism. 
Examples of acceptable hodc's 
are the universal enveloping hodc $\Guni$ 
and the braided hodc $\Gbr$. 
Also, using the partial order of hodc's (defined 
by $ \Omega^\prime (G) \preceq \Omega (G) $ 
if $ \Omega^\prime (G) $ 
is a quotient of $ \Omega(G)$), 
all other acceptable hodc's are 
intermediates lying between the two extremes cases 
of $\Gbr$ and $\Guni$, which satisfy 
$ \Gbr \preceq \Guni $.
This opens up a richness in the quantum theory 
that is not available with the unique, functorial de Rham 
hodc of classical differential geometry. 
Also, we claim that the very existence 
of such an extension $ \hatphi $ 
implies that the fodc being extended, namely 
$d : \cal{A} = \Omega^0 (G) \to \Gamma = \Omega^1 (G) $, 
is bicovariant, since the map 
$$
\hatphi : \Gamma = 
\Omega^1 (G) \to (\Omega(G) \otimes \Omega(G))^1 
= (\Gamma \otimes \cal{A}) \oplus (\cal{A} \otimes \Gamma)  
$$ 
has projections to 
the first (resp., second) summand which give a right
(resp., left) co-action of $\cal{A}$ on $\Gamma$. 
Moreover, these two co-actions are compatible, 
thereby making the fodc $\Gamma$ bicovariant as claimed. 
However, hereafter $\Omega(P)$ means \eqref{define-Omega-P} 
with the universal hodc $ \Guni $ unless 
stated otherwise. 
 
Moreover, we remark that $\Omega(P)$, 
defined in \eqref{define-Omega-P}, 
can be given four operations:  
a multiplication, a $*$-operation, 
a differential $d_P$ of degree $ +1 $ and a right co-action 
$$
    \FWP : \Omega (P) \to \Omega (P) \otimes \Gamma^\wedge
$$
which extends $ F : \cal{B} \to \cal{B} \otimes \cal{A}$. 
All this gives us 
a graded $*$-algebra such that $d_P$ satisfies 
the graded Leibniz rule and is a $*$-morphism. 
Consequently, $\Omega(P)$ is an hodc. 
The definition \eqref{define-Omega-P} of $\Omega(P)$ seems 
to be a trivial product, but this is not 
so because these four operations 
involve a non-trivial `twisting' using the right co-action 
$_{\cal{D}}\Phi$ defined below in
\eqref{define-Coxeter-co-action}. 
For the definitions of these four operations 
see Appendix~B. 
Finally, we remark that $ \Omega (P) $ is generated 
as a differential algebra by $ \Omega^0 (P) = \cal{B} $. This is a kind of minimality condition for the calculus 
and ensures that the geometric and algebraic properties 
related to the differential calculus are uniquely defined 
by their restrictions to $\cal{B}$.  
 
Then $\Omega(P)$ is taken to be the hodc for the algebra 
$\mathcal{B} = C^\infty(E)$ of the total space. 
Notice again that this is much more than 
simply the hodc $\mathcal{D}$ of classical 
differential geometry.

\subsection{The Hodc of the Base space}

The hodc $\Omega(M)$ of the $ * $-algebra 
$\mathcal{V}$ of the base space 
is defined to be the right invariant elements of $\Omega(P)$, 
defined as the graded vector space 
\begin{equation}
\label{define-omega-M}
\Omega(M):= \{ \, \omega \in \Omega(P) ~|~ \FWP (\omega) = 
\,  _{ \Omega(P) }\Psi (\omega) = \omega \otimes 1_{\cal{A}} \, \}. 
\end{equation}
It turns out that $\Omega(M) \subset \Omega(P)$ is closed 
under the multiplication, the $*$-operation 
and the differential of $\Omega(P)$ and so, with those 
operations, is an hodc in its own right. 
However, this hodc is not necessarily generated as 
a differential algebra by its elements in degree $0$. 
But do notice that 
the degree $0$ elements of $\Omega(M)$ give exactly 
the algebra $\cal{V}$ of the "base space" 
as defined in \eqref{define-base-space}. 
Also beware that the notation can be misleading. 
The hodc $\Omega(M)$ is well defined for any QPB $P$
with an hodc $\Omega(P)$, 
even though there is no "base space" $M$.

\subsection{Horizontal Forms}

In classical differential geometry a principal bundle 
has canonically associated 
vertical tangent vectors of the total space. 
But the horizontal vectors of such a bundle 
are not uniquely determined, 
though any `reasonable' choice of them is (or is equivalent to) 
an Ehresmann connection on the principal bundle. 
(See \cite{spivak} for much more about this.)  
In the theory of non-commutative geometry 
we typically have a space corresponding to $1$-forms
(namely, the fodc~$\Gamma$) instead of a space of tangent vectors. 
So the quantum situation is dual to the classical situation, that is, 
a QPB has a canonically associated horizontal space  
$\mathfrak{hor} (P) \subset \Omega(P)$, 
which does not require the existence of a 
(quantum) connection for its definition. 
Specifically, we use the definition in \cite{MichoQPB2}, namely
\begin{equation}
\label{define-horizontal-forms}
    \mathfrak{hor} (P) :=  
    \{ \, \omega \in \Omega(P) ~|~ \FWP (\omega) 
     \in \Omega(P) \otimes \cal{A} \, \} 
    = \FWP^{-1} ( \Omega(P) \otimes \cal{A} ), 
\end{equation}
which is a $*$-subalgebra of $\Omega(P)$. 
We say that the elements of $ \mathfrak{hor} (P) $ 
are the {\em horizontal forms} 
of $\Omega (P)$. 
So $ \Omega(M) \subset \mathfrak{hor} (P)$, 
possibly with a proper inclusion, and
$ \mathfrak{hor} (P)$ inherits the structure 
of a graded vector space from $\Omega(P)$, that is 
$$
\mathfrak{hor} (P) = \bigoplus_{k = 0}^\infty \mathfrak{hor}^k (P). 
$$ 
The space $ \mathfrak{hor} (P)$ has many nice properties that justify 
calling it the {\em horizontal space} in $\Omega (P)$.
In this regard, see \cite{MichoQPB2}. 
One such nice property is that $\omega \in \mathfrak{hor} (P)$ implies 
that $\FWP (\omega) \in \mathfrak{hor} (P) \otimes \cal{A}$. 
So, $\FWP$ restricted to $\mathfrak{hor} (P)$ gives 
a right co-action of $\cal{A}$ on the horizontal space 
$\mathfrak{hor} (P)$. 
We want to emphasize that $  \mathfrak{hor} (P) $ is not always 
invariant under $d_P$, the differential of $\Omega(P)$. 
We will come back to this point when we discuss 
the covariant derivative. 

For the case when we have a QPB with 
finite structure group $G$ 
equipped with the hodc $\Omega(P)$, as defined 
in \eqref{define-Omega-P}, it turns out that 
$$
      \mathfrak{hor} (P) = 
      \mathcal{D} \otimes 1_{\mathcal{A}} \cong  \mathcal{D}, 
$$
that is, the horizontal forms are identified with 
the complexified de~Rham forms of the $C^\infty$ 
manifold $E$. 
(See Theorem~14.3 in \cite{Part-II}.) 
This fundamental fact serves as a motivation and 
justification for the definitions in 
\eqref{define-Omega-P} and Appendix~B.

\subsection{Quantum Connections}

A {\em (quantum) connection} on a general QPB $P$ 
with an hodc $\Omega(P)$ is a linear map 
$\omega : \,\ginv \to \Omega^1 (P)$ satisfying 
these two conditions for all $\theta \in \,\ginv $: 
\begin{align}
             \label{omega-is-star-morphism}
             \omega (\theta^*) &= \omega (\theta)^* 
             \qquad (\mathrm{optional})
             \\
             \label{characteristic-property-of-a-connection}
             \FWP \big( \omega (\theta) \big) 
             &= (\omega \otimes \mathrm{id}_{ \mathcal{A} }) \, 
             \ad (\theta) 
             + 1_{ \mathcal{B} } \otimes \theta.
\end{align}
As already mentioned, 
$\ad : \!\,\ginv \to \,\ginv\otimes \mathcal{A}$ 
is the right adjoint co-action of $\mathcal{A}$ on $\ginv$. 
We note that these conditions are {\em not} linear in $\omega$ due to the 
inhomogeneous term $1_{ \mathcal{B} } \, \otimes \, \theta$ 
in \eqref{characteristic-property-of-a-connection}. 
The second condition \eqref{characteristic-property-of-a-connection} 
 is an encoding of the properties of an 
Ehresmann connection in the dual context of non-commutative 
geometry. 
The property \eqref{omega-is-star-morphism} is a reality condition 
that is usually satisfied in classical differential geometry, 
since in that context the underlying scalar field is 
typically the reals $\mathbb{R}$. 
Unlike \eqref{characteristic-property-of-a-connection}, the 
property \eqref{omega-is-star-morphism} is devoid of 
interesting geometric content. 
And this is why \eqref{omega-is-star-morphism} 
is an optional condition. 

The object defined by the corresponding homogeneous conditions is
known as a {\em quantum connection displacement} or briefly a {\em QCD}. 
So a QCD is a linear map 
$\lambda : \,\ginv\to \Omega^1 (P)$ satisfying 
for all $\theta \in \!\,\ginv$ these two conditions: 
\begin{align*}
             \lambda (\theta^*) &= \lambda (\theta)^* 
             \quad (\mathrm{optional})
             \\ 
             \FWP \big( \lambda (\theta) \big) 
             &= 
             (\lambda \otimes \mathrm{id}_{ \mathcal{A} }) \, 
             \ad (\theta). 
\end{align*}
The second condition is simply 
the covariance of $\lambda$ with respect
to the two right co-actions, 
namely $\FWP$ and $\ad$ respectively, 
co-acting on the appropriate two spaces, namely 
$\Omega^1(P)$ and $\ginv$ respectively. 
This second condition plus the definition 
of the horizontal elements 
in $\Omega(P)$ immediately implies using 
\eqref{ad-ad-relation} that $\lambda (\theta)$ is 
horizontal for all $\theta \in \!\,\ginv$ 
or, in other words, 
$\lambda (\theta) \in \mathfrak{hor}^1 (P) \subset \Omega^1(P)$. 
The first condition is again an optional reality condition. 

Given a QPB, the set of its quantum connections, 
if it is non-empty, is an affine space 
associated to the vector space (which always exists, 
though it may be zero) of
quantum connection displacements. 

The existence of a quantum connection on a QPB 
is a delicate question in general, 
but in this case we can immediately define 
a quantum connection on the QPB $P$ with finite structure 
group and with hodc $\Omega(P)$ defined as above. 
(See \cite{MichoQPB2} for a proof of the existence of 
a quantum connection on a QPB with compact
matrix quantum group, which includes the present case.) 
We simply define 
$
     \tilde{\omega} : \,\ginv\to \Omega^1 (P) 
$
for $\theta \in \,\ginv$ to be 
\begin{equation}
\label{define-omega-tilde}
     \tilde{\omega}  (\theta) := 1_{\mathcal{B}} \otimes \theta.  
\end{equation}
Note that 
$1_{\mathcal{B}} \otimes \theta \in \mathcal{B} \otimes \,\ginv =
\mathcal{D}^0 \otimes \,\ginv \subset \Omega^1 (P)$.  
First, $ \tilde{\omega} $ satisfies 
\eqref{omega-is-star-morphism} by the definition of 
the $ * $-operation as given in Appendix~B. 
And second, \eqref{characteristic-property-of-a-connection} 
is shown in Theorem 14.4 in \cite{Part-II}. 

\subsection{A Right Co-action}

In order to establish notation and a definition 
for the next theorem 
we note that there is a right co-action of 
$\mathcal{A} = \cal{F}(G)$ on $\mathcal{D}$, denoted by 
$$
 _{\mathcal{D}} \Phi : \mathcal{D} \to \mathcal{D} \otimes \mathcal{A}
$$
and defined for $ \varphi \in \mathcal{D} $ 
as the finite sum 
\begin{equation}
\label{define-Coxeter-co-action}
   _{\mathfrak{D} }\Phi (\varphi) := 
   \sum_{g \in G} (g \cdot \varphi) \otimes \delta_g 
   \in \mathfrak{D} \otimes \mathcal{A},
\end{equation}
where $g \cdot \varphi$ denotes the left action 
of $g$ on a classical, complexified 
differential $k$-form $\varphi \in \mathcal{D}^k$.
This left action comes via pull-back from the 
right action of the group $G$ on
the $C^\infty$ manifold $E$.
We also write $\varphi_g := g \cdot \varphi$. 

In the following we also use Sweedler's notation 
\begin{equation}
\label{swee-note}
      _{\mathcal{D}} \Phi (\varphi) = 
      \varphi^{(0)} \otimes \varphi^{(1)} 
      \in \mathcal{D}^k \otimes \mathcal{A} 
      \quad \mathrm{for} \,\,  \varphi \in \mathcal{D}^k = 
      \frak{hor}^k (P).  
\end{equation}

Note that these two expressions 
\eqref{define-Coxeter-co-action} and 
\eqref{swee-note} for 
$_{\mathcal{D}} \Phi (\varphi)$ allow us 
to make these identifications: 
$\varphi^{(0)} = g \cdot \varphi = \varphi_g$ 
and $\varphi^{(1)} = \delta_g$, provided that we include an 
explicit sum over $ g \in G $.

\subsection{The Covariant Derivative}

\begin{definition}
Let $\omega$ be a quantum connection on the QPB $P$ 
with finite structure group $G$ and 
equipped with the 
hodc $\Omega(P)$ as introduced above in \eqref{define-Omega-P}. 
Then we define 
the {\em covariant derivative} $D_\omega$ of the 
quantum connection $\omega$ by 
\begin{align}
\label{define-covariant-derivative}
    D_\omega (\varphi) := 
     d_P \, (\varphi) - (-1)^k \varphi^{(0)} 
     \omega (\pi ( \varphi^{(1)} )) 
     \in \Omega^{k+1}(P) 
\end{align} 
for each $\varphi \in \mathfrak{hor}^k(P)$,  
where we use Sweedler's notation in \eqref{swee-note}. 
\end{definition}

One important result here is that 
$D_\omega : \frak{hor}^k (P) \to \frak{hor}^{k+1} (P)$. 
As we remarked earlier, 
in general $d_P (\varphi)$ is not a horizontal vector 
for $\varphi \in \mathfrak{hor}(P)$ 
or, in other words, the $*$-subalgebra $\frak{hor}(P)$ of 
$\Omega(P)$ is not necessarily invariant under $d_P$, 
the differential of $\Omega(P)$. 
So what is happening in the above  definition of the covariant 
derivative is that a term is being 
subtracted off of $d_P (\varphi)$ 
to give us the horizontal vector $ D_\omega (\varphi) $. 

Continuing with this notation we recall 
the following key result which is 
proved in Theorem~14.4 in \cite{Part-II}. 

\begin{theorem}
\label{D-tilde-omega-is-de-Rham}
The map  $\tilde{\omega} : \,\ginv\to \Omega^1 (P)$ 
defined in equation \eqref{define-omega-tilde} 
is a quantum connection on $P$, and its associated 
covariant derivative $D_{\tilde{\omega}}$ is given by 
$$
         D_{\tilde{\omega}} (\varphi) 
         = d_P  (\varphi) - (-1)^k \varphi^{(0)} \otimes 
         \pi (\varphi^{(1)}) 
$$
for $\varphi \in \mathfrak{hor}^k(P) \cong \mathcal{D}^k$. 

Also, $D_{\tilde{\omega}} = D_{dR}$, 
the complexified de~Rham differential in 
$\mathcal{D} =  \Omega_{dR} (E) \otimes \mathbb{C}$, 
the complexified 
de~Rham exterior differential calculus 
of the smooth manifold~$E$. 
\end{theorem}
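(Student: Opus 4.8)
The statement bundles three assertions: that $\tilde{\omega}$ is a quantum connection, the explicit formula for its covariant derivative $D_{\tilde{\omega}}$, and the identification $D_{\tilde{\omega}} = D_{dR}$. My plan is to treat them in that order, leaning throughout on the four operations (multiplication, $*$, $d_P$, $\hat{F}$) on $\Omega(P)$ constructed in Appendix~B. For the connection property, the optional reality condition \eqref{omega-is-star-morphism} is immediate, since by the definition of the $*$-operation in Appendix~B one has $(1_{\mathcal{B}} \otimes \theta)^* = 1_{\mathcal{B}} \otimes \theta^*$. For the characteristic property \eqref{characteristic-property-of-a-connection} I would apply $\hat{F}$ to $1_{\mathcal{B}} \otimes \theta$ and use that ${}_{\mathcal{D}}\Phi(1_{\mathcal{B}}) = 1_{\mathcal{B}} \otimes 1_{\mathcal{A}}$, which holds because $g \cdot 1_{\mathcal{B}} = 1_{\mathcal{B}}$ for every $g \in G$ and $\sum_{g \in G} \delta_g = 1_{\mathcal{A}}$. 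The co-action $\hat{F}$ then reduces to the adjoint co-action $\ad$ acting on the $\ginv$-factor, together with the inhomogeneous germ term; this is exactly Theorem~14.4 of \cite{Part-II}, which I would cite rather than reprove.

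For the formula, I would substitute $\omega = \tilde{\omega}$ into \eqref{define-covariant-derivative}, obtaining
\[
D_{\tilde{\omega}}(\varphi) = d_P(\varphi) - (-1)^k \varphi^{(0)} \, \tilde{\omega}\bigl(\pi(\varphi^{(1)})\bigr) = d_P(\varphi) - (-1)^k \varphi^{(0)} \bigl(1_{\mathcal{B}} \otimes \pi(\varphi^{(1)})\bigr).
\]
The one point needing verification is that $\varphi^{(0)}\bigl(1_{\mathcal{B}} \otimes \pi(\varphi^{(1)})\bigr) = \varphi^{(0)} \otimes \pi(\varphi^{(1)})$. Identifying the horizontal form $\varphi^{(0)} \in \mathcal{D}^k = \mathfrak{hor}^k(P)$ with $\varphi^{(0)} \otimes 1 \in \mathcal{D}^k \otimes \ginv^{\wedge 0}$ and evaluating the twisted product of Appendix~B, the twisting via ${}_{\mathcal{D}}\Phi$ is trivial because the $\mathcal{D}$-component of the right factor is the unit $1_{\mathcal{B}}$; hence the product collapses to the untwisted tensor, giving the claimed formula.

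For the identification $D_{\tilde{\omega}} = D_{dR}$, I would use the explicit action of $d_P$ on horizontal forms from Appendix~B. Applied to $\varphi \otimes 1 \in \mathcal{D}^k \otimes \ginv^{\wedge 0}$, the differential $d_P$ splits into its horizontal de~Rham part $D_{dR}\varphi$ and a vertical correction; because the $\ginv^\wedge$-degree of $\varphi \otimes 1$ is zero, the twisting produces this correction exactly in the form $(-1)^k \varphi^{(0)} \otimes \pi(\varphi^{(1)})$, so that $d_P(\varphi) = D_{dR}\varphi + (-1)^k \varphi^{(0)} \otimes \pi(\varphi^{(1)})$. Subtracting the vertical term according to the formula of the previous paragraph then leaves $D_{\tilde{\omega}}(\varphi) = D_{dR}\varphi$ under the identification $\mathfrak{hor}^k(P) \cong \mathcal{D}^k$.

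The hard part will be this last step: pinning down that $d_P$ on a form of $\ginv^\wedge$-degree zero produces precisely the germ-twisted term $(-1)^k \varphi^{(0)} \otimes \pi(\varphi^{(1)})$ beyond the de~Rham piece. This is where the genuine content lies — it is the quantitative form of the earlier remark that $\mathfrak{hor}(P)$ is not $d_P$-invariant — and it must be extracted carefully from the definition of the differential in Appendix~B. Once that formula is in hand, the two vertical contributions cancel and everything else is routine bookkeeping with the twisted multiplication.
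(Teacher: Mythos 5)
Your proposal is correct, but it takes a genuinely different route from the paper: the paper offers no internal proof of this theorem at all, simply recalling it as a result proved in Theorem~14.4 of \cite{Part-II} (having just before disposed of the reality condition \eqref{omega-is-star-morphism} via the Appendix~B definition of the $*$-operation and of \eqref{characteristic-property-of-a-connection} by the same citation). You cite \cite{Part-II} only for the characteristic connection property, and then prove the covariant derivative formula and the identification $D_{\tilde{\omega}} = D_{dR}$ directly from the Appendix~B operations. Your key computation does go through exactly as sketched: in the twisted product $(\varphi^{(0)} \otimes 1)(1_{\cal{B}} \otimes \pi(\varphi^{(1)}))$ the co-action in the multiplication formula lands on the $\cal{D}$-component of the \emph{right-hand} factor, which is $1_{\cal{B}}$, whose co-action is trivial; so the product collapses to $\varphi^{(0)} \otimes \pi(\varphi^{(1)})$, giving the displayed formula for $D_{\tilde{\omega}}$. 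What your approach buys is self-containedness: given Appendix~B, nothing outside the paper is needed except the connection axiom check. What the paper's approach buys is brevity and consistency with its policy of deferring the foundational calculus on $\Omega(P)$ to \cite{Part-II}.

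One small correction of emphasis: the step you flag as ``the hard part'' is in fact immediate. The Appendix~B definition
$$
d_P(\varphi \otimes \theta) = D(\varphi) \otimes \theta
+ (-1)^{j}\, \varphi^{(0)} \otimes \pi(\varphi^{(1)})\,\theta
+ (-1)^{j}\, \varphi \otimes d^\wedge(\theta),
$$
evaluated at $\theta = 1 \in \ginv^{\wedge 0}$, kills the third term (since $d^\wedge(1)=0$) and makes the second term exactly $(-1)^k \varphi^{(0)} \otimes \pi(\varphi^{(1)})$. Thus $d_P(\varphi) = D_{dR}(\varphi) + (-1)^k \varphi^{(0)} \otimes \pi(\varphi^{(1)})$ requires no careful extraction; it is the definition read off at $\theta = 1$, and the cancellation against the term subtracted in \eqref{define-covariant-derivative} is then instantaneous.
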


Let $\omega = \tilde{\omega} + \lambda$, where $\lambda$ is an 
arbitrary quantum connection displacement (QCD) on $P$. 
So $\omega$ is an arbitrary quantum connection on $P$. 
Then we can compute the covariant derivative of $\omega$ as follows:

\begin{theorem}
For any $\varphi \in \mathcal{D}^k$ we have
\begin{equation}
\label{D-omega-varphi}
        D_{\omega} (\varphi) = D_{\tilde{\omega} + \lambda} (\varphi) = 
         D_{ \tilde{\omega} } (\varphi) 
         + (-1)^k \sum_{s \in S} (\varphi 
         - \varphi_s) \lambda ( \pi(\delta_s) ) \in \cal{D}^{k+1}. 
\end{equation}
\end{theorem}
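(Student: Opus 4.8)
The plan is to substitute $\omega = \tilde{\omega} + \lambda$ directly into the defining formula \eqref{define-covariant-derivative} for the covariant derivative and to exploit the linearity of $\omega \mapsto \omega(\pi(-))$ in its connection argument. Writing the Sweedler co-action $_{\mathcal{D}}\Phi$ explicitly as a sum over the group (using the identifications $\varphi^{(0)} = \varphi_g$, $\varphi^{(1)} = \delta_g$), the definition reads
$$
D_\omega(\varphi) = d_P(\varphi) - (-1)^k \sum_{g \in G} \varphi_g \, \omega(\pi(\delta_g)).
$$
Splitting $\omega(\pi(\delta_g)) = \tilde{\omega}(\pi(\delta_g)) + \lambda(\pi(\delta_g))$, the term $d_P(\varphi)$ together with the $\tilde{\omega}$-part reassembles, by the very same definition applied to $\tilde{\omega}$, into $D_{\tilde{\omega}}(\varphi)$. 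Thus everything reduces to analyzing the single correction term $-(-1)^k \sum_{g \in G} \varphi_g \, \lambda(\pi(\delta_g))$.

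The heart of the argument is to evaluate $\pi(\delta_g)$ for each $g \in G$, and here the description $\ker \pi = \cal{R} + \mathbb{C}\, 1_{\cal{A}}$ together with the defining property \eqref{define-cal-R} of $\cal{R}$ does all the work. For $g \neq e$ one has $\e(\delta_g) = 0$, so $\delta_g \in \ker \e$; and if in addition $g \notin S$, then $\delta_g(s) = 0$ for every $s \in S$, whence $\delta_g \in \cal{R} \subset \ker \pi$ and therefore $\pi(\delta_g) = 0$. Consequently the only surviving contributions to the sum come from $g = e$ and $g \in S$. To handle the $g = e$ term I would use $\sum_{g \in G}\delta_g = 1_{\cal{A}}$ together with $\pi(1_{\cal{A}}) = 0$, which forces $\sum_{g \in G}\pi(\delta_g) = 0$ and hence, since $\pi(\delta_g)$ vanishes off $S \cup \{e\}$, the relation $\pi(\delta_e) = -\sum_{s \in S}\pi(\delta_s)$.

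Finally I would assemble these facts. The surviving correction term is $\varphi_e \, \lambda(\pi(\delta_e)) + \sum_{s \in S}\varphi_s \, \lambda(\pi(\delta_s))$; using $\varphi_e = \varphi$ (the identity of $G$ acts trivially on forms), the linearity of $\lambda$, and the relation $\pi(\delta_e) = -\sum_{s\in S}\pi(\delta_s)$, this collapses to $-\sum_{s \in S}(\varphi - \varphi_s)\,\lambda(\pi(\delta_s))$. Carrying the outer factor $-(-1)^k$ through then yields exactly $(-1)^k \sum_{s\in S}(\varphi - \varphi_s)\,\lambda(\pi(\delta_s))$, which is the claimed correction to $D_{\tilde{\omega}}(\varphi)$. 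The only genuinely non-formal step is the evaluation of $\pi(\delta_g)$; once one knows that $\pi$ annihilates every $\delta_g$ with $g \notin S \cup \{e\}$ and spreads $\pi(\delta_e)$ over $S$ with a minus sign, the remainder is bookkeeping with Sweedler's notation and the graded sign $(-1)^k$.
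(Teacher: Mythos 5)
Your proof is correct. Note that the paper itself gives no in-text proof of this theorem: immediately after the statement it simply cites Equation~(14.6) of \cite{Part-II}, so there is no internal argument to compare line by line. Your derivation supplies exactly the argument that citation stands in for, and it is the natural one: substitute $\omega = \tilde{\omega} + \lambda$ into the definition \eqref{define-covariant-derivative}, split by linearity so that $d_P(\varphi)$ and the $\tilde{\omega}$-part reassemble into $D_{\tilde{\omega}}(\varphi)$, and then evaluate the correction term using the case analysis of $\pi(\delta_g)$. The three facts you invoke --- that $\delta_g \in \cal{R} \subset \ker \pi$ (hence $\pi(\delta_g)=0$) whenever $g \notin \S \cup \{e\}$, that the $\pi(\delta_s)$ with $s \in \S$ span $\ginv$, and that $\pi(\delta_e) = -\sum_{s\in\S}\pi(\delta_s)$, i.e.\ identity \eqref{sum-S} --- are precisely the facts the paper records later (citing Section~13.4 and Corollary~13.2 of \cite{Part-II}) and uses in its own proof that $r_\omega = r_{\tilde{\omega}}$; your deduction of \eqref{sum-S} from $\sum_{g\in G}\delta_g = 1_{\cal{A}}$ and $\pi(1_{\cal{A}})=0$ is also valid, given $\ker\pi = \cal{R} + \mathbb{C}\,1_{\cal{A}}$. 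The sign bookkeeping with $(-1)^k$, the use of $\varphi_e = \varphi$, and the linearity of $\lambda$ needed to distribute it over $\pi(\delta_e) = -\sum_{s\in\S}\pi(\delta_s)$ are all handled correctly, so your proposal effectively reconstructs, in self-contained form, the argument the paper delegates to the reference.
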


This formula is Equation~(14.6) in \cite{Part-II} 
and is an immediate precursor to the 
formula for the Dunkl operators, 
which is the case when $k=0$ and the 
QCD $\lambda$ is chosen adequately. 

\subsection{The Quantum Connection Displacement}

Now the quantum connection displacement
(QCD) $\lambda$ is a linear map
$$
    \lambda :  \,\ginv\to \hor^1(P) \cong \mathcal{D}^1
$$
satisfying two conditions.  
But we drop the first condition (namely, that $\lambda$ 
is a $*$-morphism) as being inessential 
to the desired geometric structure. 
Being linear, $\lambda$  is completely determined 
by its values on the basis
elements $\pi (\delta_s)$ of $\ginv$, where $s \in S$. 
So for each $s \in S$ we define 
$$
   \eta_s := \lambda (\pi (\delta_s)) \in \mathcal{D}^1,
$$
the complexification of the space 
of de Rham $1$-forms on $E$. 
The second condition on $\lambda$ for it to be 
a QCD translates directly into
a condition on the $1$-forms $\eta_s \in \mathcal{D}^1$. 
So for all $s \in S$ we must have 
\begin{align}
             \label{second-condition}
             \FWP (\eta_s) 
             &= (\lambda \otimes 
             id_{ \cal{A} } ) \mathrm{ad} (\pi (\delta_s) ) 
             \\
             &= (\lambda \otimes id_{ \cal{A} } )
              (\pi \otimes id_{ \cal{A} } ) \mathrm{ad} (\delta_s) \nonumber  
              \\             
             &= (\lambda \pi \otimes id_{ \cal{A} } ) 
             \sum_{k \in G} \delta_{k s k^{-1} } \otimes \delta_k \nonumber 
             \\
             &= \sum_{k \in G} \eta_{k s k^{-1} } \otimes \delta_k 
             \in \mathcal{D}^1 \otimes \mathcal{A}
             \subset \Omega^1 (P) \otimes \mathcal{A}. \nonumber              
\end{align}
Here we used the identity \eqref{ad-ad-relation} 
as well as the explicit formula for $\mathrm{ad} (\delta_g)$ 
for all $ g \in G $: 
\begin{equation}
\label{ad-delta-g}
      \ad (\delta_g) 
      = \sum_{k \in G} \delta_{k g k^{-1}} \otimes \delta_k.
\end{equation}
For a proof of \eqref{ad-delta-g} see 
equation (13.3) in \cite{Part-II}. 
Note that (\ref{second-condition}) is a condition 
on each conjugacy class of $\S$. 
For example, 
if $G$ is abelian or even if $s$ is in the center of $G$, 
then this condition becomes 
\begin{equation*}
            \FWP (\eta_s) = 
            \sum_{k \in G} \eta_{ s } \otimes \delta_k = 
            \eta_{ s } \otimes \sum_{k \in G} \delta_k = 
            \eta_s \otimes 1_{\mathcal{A}}, 
\end{equation*}
that is, $\eta_s \in  \mathcal{D}^1$ 
is right invariant with respect to the right co-action $ \FWP $. 
So the covariance condition (\ref{second-condition}) 
can be thought of as a generalization of right invariance.

\subsection{Dunkl (Quantum) Connections}

So far we have the general condition on $\lambda$ so that 
$\omega = \tilde{\omega} + \lambda$ is a quantum connection, namely 
that $\lambda$ is a QCD. 
We now would like to add extra conditions  
on $\lambda$ in order to restrict to a class 
of quantum connections that is more closely related to the class
of Dunkl connections as described in \cite{DS}. 
To do this we first suppose from now on the following 
assumption and notations: 
\begin{itemize}
	
	\item Since the $ C^\infty $ manifold $E$ 
	is a covering space of its quotient $ E/G $, 
	it has 
	an open coordinate neighborhood $ U \subset E $ 
	which projects
	diffeomorphically down to the quotient 
	space $ E /  G $
    to a coordinate neighborhood $ U / G $. 
    We fix a set of 
    coordinate functions $ x_{i} $ for 
    $ i= 1, \dots, n = \dim \, E \ge 1  $
    in $ U/G $ and use the same notation for the 
    corresponding local coordinate functions
    defined on $ U $. 
    These coordinates define local vector fields 
    $ \partial/ \partial x_{i}  $ on the (possibly
    non-trivial)
    tangent bundle
    of $ E/G $ and local forms $ d x_{i} $ 
    on the cotangent bundle
    of $ E/G $. 
    In particular, the local forms $ d x_{i} $ are
    exact and thus closed. 
    Also, the local vector fields 
    $ \partial/ \partial x_{i}  $ commute among 
    themselves. 
    From now on we work in such a neighborhood $ U/G $
    with its distinguished coordinates functions 
    $ x_{i} $. 
    In particular, we will use the complexification 
    of these local trivializations 
    so that $ d x_{i} $ is a holomorphic form and
    $ d \overline{x}_{i} $ is the associated 
    anti-holomorphic form. 
    Similarly, we have the complex vector fields
    $ \partial / \partial x_{i} $ and 
    $ \partial / \partial \overline{x}_{i} $. 
    We emphasize that we are not requiring globally 
    defined coordinates on $ E/G $.  
    
\end{itemize}

An alternative setting is to assume that
the $ C^\infty $ manifold $E$ has a trivial tangent bundle 
and use globally defined vector fields 
and forms on it that are associated to a trivialization. 
In that case local coordinates are not needed.  
As an example, we could take $ E $ to be a Lie group 
with $ G $ being any finite subgroup of $ E $ that 
acts by right multiplication on $ E $. 
We plan to present this situation in a future paper, 
since the associated Dunkl operators have quite 
different properties. 

Next is a definition that imposes 
a covariance condition 
on a QCD $ \ddisp $. 
\begin{definition}
\label{define-dunkl-qcd}

Suppose the following: 
\begin{itemize}
\item 
For each $s \in S$ we have a non-zero element 
$\alpha_s \in \mathcal{D}^1 \cong \hor^{1}(P)$ 
such that $ \alpha_s $ depends only on 
the conjugacy class of $ s \in \S $, that is, 
$ \alpha_{g s g^{-1}} = \alpha_{s} $ for all 
$ g \in G $ and $ s \in \S $. 

\item
For each $s \in S$ 
there exists a $C^\infty$ function 
$h_s : E \to \mathbb{C}$ 
such that
$$
      h_{g s g^{-1}} (x) = h_s (x g)
$$
holds for all $x \in E$ and $g \in G$. 

\end{itemize}
(Note that since 
$\S$ is closed under conjugation, $g s g^{-1} \in S$ and so 
$ \alpha_{g s g^{-1}} $ and 
$h_{g s g^{-1}}$ make sense.  
Also $ x g $ denotes the right action of $g$ on $x$.) 
We then define 
$\lambda :  \,\ginv\to \hor^1 (P) \cong \mathcal{D}^1$ 
for  $s \in S$ and $x \in E$ by 
\begin{equation}
\label{formula-dunkl-qcd}
    \lambda( \pi(\delta_s) ) (x) = 
    h_s (x) \, \alpha_s \in \hor^1 (P) \cong \mathcal{D}^1 
\end{equation}
on the basis 
$\{ \pi(\delta_s) ~|~ s \in S \}$ of $\ginv$ 
and then extend linearly to $\ginv$. 
We say that $\lambda$ is a 
{\em Dunkl (quantum) connection displacement} 
or a {\em Dunkl QCD}. 

In this case, we also say that 
$\omega = \tilde{\omega} + \lambda$ 
is a {\em Dunkl (quantum) connection}, where 
$ \tilde{\omega} $ is defined 
in \eqref{define-omega-tilde}.
\end{definition}

We remark that choosing $ h_s \equiv 0$  for every 
$ s \in \S $ shows that $ \ddisp \equiv 0 $ is a Dunkl 
QCD and hence $ \tilde{\omega} $ is a Dunkl connection. 

This generalizes the definition given in 
\cite{Part-II} of a Dunkl connection 
in the context of finite Coxeter groups acting on Euclidean space, 
since there Definition~14.3 of a Dunkl QCD 
refers only to the Coxeter group case with a specific choice 
of the set $ \S $. 
Consequently, Definition~14.4 of a quantum Dunkl connection 
in \cite{Part-II} also only 
refers to the Coxeter case. 
But in this paper we are dealing with an arbitrary finite group. 
The present theory also includes Coxeter groups, but with 
other choices for the set $ \S $. 
For example, we can consider $G = \Sigma_{n}$,   
the symmetric group on $ n \ge 3$ letters 
acting on $ \mathbb{R}^n $ 
by permuting its coordinates  
(which {\em is} a Coxeter group), 
with $ \S $ being the set of all $ k $-cycles for 
$3 \le k \le n$, which does define a bicovariant,
$ * $-covariant fodc on $\Sigma_{n}$.
We remind the reader that the case when $ \S $ is the 
set of all $ 2 $-cycles in $\Sigma_{n}$ 
is a special case of the theory in \cite{DS}. 

Of course, this discussion would be besides the point 
without the following result whose proof is essentially
the same as that of Proposition~14.2 of \cite{Part-II}. 

\begin{theorem}
Every Dunkl QCD is a QCD. 
Consequently, every Dunkl quantum connection is a 
quantum connection.
\end{theorem}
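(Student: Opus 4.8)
The plan is to reduce the statement to the single covariance identity \eqref{second-condition} and then verify that identity directly from the two defining properties in Definition~\ref{define-dunkl-qcd}. Since the optional $*$-condition has been dropped, a linear map $\ddisp : \ginv \to \Omega^1(P)$ is a QCD precisely when $\hat{F}(\ddisp(\theta)) = (\ddisp \otimes \id_{\cal{A}})\,\ad(\theta)$ for all $\theta \in \ginv$. Because $\{\pi(\delta_s) \mid s \in \S\}$ is a basis of $\ginv$, it suffices to check this on each $\pi(\delta_s)$; and the computation leading to \eqref{second-condition} shows that, writing $\eta_s := \ddisp(\pi(\delta_s))$, this covariance is equivalent to the requirement $\hat{F}(\eta_s) = \sum_{k \in G} \eta_{ksk^{-1}} \otimes \delta_k$. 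Thus the whole theorem comes down to verifying this one identity for the specific horizontal $1$-forms $\eta_s = h_s\,\alpha_s$ of \eqref{formula-dunkl-qcd}.

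First I would compute the left-hand side. Since $\eta_s \in \mathcal{D}^1 \cong \hor^1(P)$, the restriction of $\hat{F}$ to the horizontal forms is the co-action ${}_{\mathcal{D}}\Phi$ of \eqref{define-Coxeter-co-action}, so that $\hat{F}(\eta_s) = \sum_{g \in G} (g \cdot \eta_s) \otimes \delta_g$. As the left $G$-action on $\mathcal{D}$ is by pull-back, hence multiplicative, I would factor $g \cdot \eta_s = g \cdot (h_s \, \alpha_s) = (g \cdot h_s)\,(g \cdot \alpha_s)$ and treat the two factors separately. For the function factor, $(g \cdot h_s)(x) = h_s(xg) = h_{gsg^{-1}}(x)$ by the second defining property of a Dunkl QCD, so $g \cdot h_s = h_{gsg^{-1}}$.

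For the form factor, the key identity I want is $g \cdot \alpha_s = \alpha_{gsg^{-1}}$. Here $\alpha_s$ has constant coefficients with respect to the distinguished coordinate $1$-forms $dx_i$, and these are fixed by the $G$-action by the choice of the distinguished coordinates (respectively the fixed global trivialization), so $g \cdot \alpha_s = \alpha_s$; the first defining property $\alpha_{gsg^{-1}} = \alpha_s$ then lets me rewrite this as $g \cdot \alpha_s = \alpha_{gsg^{-1}}$. Multiplying the two factors gives $g \cdot \eta_s = h_{gsg^{-1}}\,\alpha_{gsg^{-1}} = \eta_{gsg^{-1}}$, and substituting back yields $\hat{F}(\eta_s) = \sum_{g \in G} \eta_{gsg^{-1}} \otimes \delta_g$, which is exactly the required identity after relabelling $g$ as $k$. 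Hence $\ddisp$ is a QCD. The consequence is then immediate: $\tilde{\omega}$ is a quantum connection by Theorem~\ref{D-tilde-omega-is-de-Rham}, and since the quantum connections form an affine space modelled on the vector space of QCDs, $\omega = \tilde{\omega} + \ddisp$ is again a quantum connection.

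The hard part is the form factor, namely justifying $g \cdot \alpha_s = \alpha_s$: this is exactly where the geometric hypotheses on $E$ are indispensable, and it is what makes the two separate conjugacy-covariance assumptions on $h_s$ and $\alpha_s$ mesh into the single identity $g \cdot \alpha_s = \alpha_{gsg^{-1}}$ that drives \eqref{second-condition}. This step also pinpoints how the present argument mirrors Proposition~14.2 of \cite{Part-II}, where the corresponding form transforms by $g \cdot \alpha_s = \alpha_{gsg^{-1}}$ for a different reason; everything else is routine bookkeeping with the pull-back action and the basis $\{\pi(\delta_s)\}$.
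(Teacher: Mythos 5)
Your proof is correct and takes essentially the route the paper intends: where the paper simply delegates to Proposition~14.2 of \cite{Part-II}, you verify the covariance identity \eqref{second-condition} directly, and your splitting of $g \cdot \eta_s$ into the function factor $g \cdot h_s = h_{gsg^{-1}}$ and the form factor $g \cdot \alpha_s = \alpha_s = \alpha_{gsg^{-1}}$ matches the computation the paper itself performs later (in the Section~5 theorem characterizing Dunkl QCDs by the Covariance Property, whose proof likewise uses $g \cdot (\ddisp(s))(x) = \ddisp(s)(xg) = h_s(xg)\,\alpha_s$, keeping the covector part fixed). Your justification of the form factor — that the distinguished frame $dx_i$ is $G$-invariant because the coordinates are pulled back from the base — is exactly the convention the paper operates under, so no gap remains.
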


\subsection{Dunkl Gradients}

\noindent 
Using the local coordinates introduced above, we 
write the covariant derivative $ D_\omega $ as a 
{\em Dunkl gradient} in $ U $, namely as  
\begin{equation}
\label{define-Dunkl-gradient}
    D_\omega ( \varphi ) = 
    \sum_{j = 1}^n D_\omega^j ( \varphi ) \, d \, x_j 
\end{equation}
for $ \varphi \in \cal{D}^0 = C^\infty (E) $, 
where the components 
$ D_\omega^j $ 
are called the {\em generalized Dunkl operators} 
associated to the local forms $ d x_{j} $ defined above. 
It follows from \eqref{D-omega-varphi} that  
for $ \varphi \in \cal{D}^0 = C^\infty(E) $ and 
for $ 1 \le j \le n $ the equation in $ U $
\begin{equation}
\label{gen-dunkl-operators}
  D_\omega^j ( \varphi ) = 
  \dfrac{\partial \varphi}{\partial x_j} 
  + \sum_{s \in \S} (\varphi - \varphi_s) \lambda_j (\pi(\delta_s)) 
\end{equation}
(where $ \lambda_j $ is the $ j $-th component of 
$ \lambda $ with respect to the same local forms) 
gives us 
an explicit formula for the generalized Dunkl operator. 
Here $ \partial \varphi / \partial x_j $ means the above
defined local vector field $ \partial / \partial x_j $ 
acting on the $ C^\infty $ function $ \varphi \in \cal{D}^0 $. 

\subsection{How We Build on This Background}

Nonetheless, in order to get interesting results we can impose 
various other conditions, such as those 
in Definition \ref{define-dunkl-qcd}, on the QCD $ \lambda $. 
An example of this is given in the next section. 
In Section~\ref{general-picture-section}
we will introduce a new condition on a QCD
for the case of a QPB with finite structure group.  
This is the central, new concept of this paper. 
Nonetheless, we could take the main result of \cite{DS} 
as motivation for saying that a generalized Dunkl 
operator is the covariant derivative of any quantum 
connection in any QPB, including the case when all the 
spaces are quantum spaces and the structure group is 
a quantum group. 
We take the point of view here that a generalized 
Dunkl operator should be of this type plus some 
extra structure such as a curvature zero condition,  
to name one possibility.

\section{Basic Example: The Dunkl Operators}
\label{sec-3}

The motivating example for this paper is the case 
discussed in \cite{DS},
where one takes $G$ to be a (necessarily finite) Coxeter group 
acting as orthogonal transformations on 
a finite dimensional Euclidean space $V$ over $ \mathbb{R} $. 
In that case one takes
$$
      \S = \{ s \in G ~|~ s \ne e \quad \mathrm{and} \quad s^2 =e \},
$$
which turns out to be the same as 
the set of reflections $\sigma_\alpha$
for $\alpha \in R \subset V$, 
where $ R $ is
the non-empty set of non-zero roots in $ V $
which are used to define $G$. 
The orthogonal map $ \sigma_\alpha : V \to V $ fixes point-wise 
the hyperplane perpendicular to $ \alpha \ne 0 $ 
and maps $ \alpha $ to $ - \alpha $. 
(See \cite{grove-benson} or \cite{hump} for more definitions 
and basic properties of a Coxeter group.) 
One should proceed with a bit of caution 
when comparing the results of
this paper with those of \cite{DS}, since the mapping 
$\alpha \mapsto \sigma_\alpha$
is $2$ to $1$ from the set of roots $R$ (as used in \cite{DS}) 
onto the set $\S$ used here. 
In this case one also lets 
$$
E = V \setminus \bigcup_{ \alpha \in R } H_\alpha,
$$
where $H_\alpha \subset  V$ is the hyperplane of fixed points 
of the reflection $\sigma_\alpha$. 
Finally, one defines the QCD $\ddisp$ by 
\begin{equation}
\label{define-coxeter-ddisp}
        \eta_s (x) = \ddisp (\pi (\delta_s)) (x) = 
        \dfrac{\nu (\alpha)}
        {\langle \alpha , x \rangle} \alpha^{\star}
\end{equation}
where $s = \sigma_\alpha = \sigma_{-\alpha}$, $\alpha \in R$, 
$x \in E$ and $\nu : R \to [0, \infty)$ is a 
$G$-invariant function, meaning that $ \nu $ is constant 
on each orbit of $ G $. 
Then $ \nu $ called a {\em multiplicity function}. 
In particular, $\nu (-\alpha) = \nu (\alpha)$, 
since $\sigma_\alpha (\alpha) = - \alpha$ and so 
$\alpha$ and $-\alpha$ lie in the same $G$-orbit. 
Also $\alpha^{\star}$ is the $1$-form on $E$ defined 
via the inner product with the vector $\alpha$. 
With these identifications we obtain the Dunkl operators 
as originally introduced in \cite{dunkl-1989} and then 
the main result of \cite{DS} is a special case 
of this paper. 

This section gives an example of the 
theory developed in Section~2. 
It also is an example of the new theory in Section~5  
as we shall see.

\section{Cyclic Geometry on $\S$}
\label{sec-4}

Our results for Dunkl operators admit elegant 
and far reaching generalizations 
to geometrical contexts where finite groups act 
on classical or quantum spaces. 
We shall start by extracting the key algebraic property 
of root vectors; 
that property is what enabled us to prove in \cite{DS} 
that the curvature of any Dunkl connection 
associated to a Coxeter group is always zero. 
In this paper
this topic has a auxiliary, though important, 
role. 
But we find it to be of independent interest. 
More details on this are in Appendix~A.  

Let us begin by considering an interesting geometrical 
structure on the space $\S$, 
any subset of a finite group $ G $
closed under conjugation by all the elements of $G$. 
(The motivating example, of course, is the set $\S$ defining a 
$*$-covariant, bicovariant fodc for $\cal{F} (G)$ as described earlier. 
We will come back to this example, but for now we could even take $\S$ 
to be a finite subset of any group, provided that $\S$ is closed 
under conjugation just by elements in $\S$.) 
The canonical {\em flip-over operator} $\sigma$ 
acts naturally on $\S\times\S$. 
Explicitly, by definition $\sigma$ and its inverse $\sigma^{-1}$ 
act on $(g,h) \in \S \times \S$ by 
\begin{equation}
\label{define-flip-over}
\begin{aligned}
\sigma&\colon(g,h)\mapsto (ghg^{-1},g) \in \S \times \S\\  
\sigma^{-1}&\colon (g,h)\mapsto (h, h^{-1}g h) \in \S \times \S.
\end{aligned} 
\end{equation}
For example, if $g$ and $h$ commute, then $\sigma (g,h) = (h,g)$. 
So we can think of $\sigma$ as a twisted interchange. 
Moreover, $\sigma^2 (g,h ) = (g,h)$ 
if and only if $g$ and $h$ commute. 
Another immediate observation is that $\sigma (g,h ) = (g,h)$
if and only if $g =h$. 
When it is convenient, 
the elements $(g,h) \in \S \times \S$ will be identified with the 
tensor products 
$g\otimes h \equiv \delta_g \otimes \delta_h 
\in \cal{A} \otimes \cal{A}$, 
where $\cal{A} = \cal{F} (G)$. 

Accordingly, the finite space $\S\times\S$ splits 
into finite orbits of this action. 
It is obvious from the above formula 
\eqref{define-flip-over}
for $\sigma$ 
that for all pairs in an orbit 
the product of their two entries is the same element in $G$, 
which is then an invariant of the orbit.  
Each orbit consists of 
$n \ge 1$ distinct pairs in $\S \times \S$, say 
\begin{equation}
\label{typical-orbit}
(q_1,q_2), ( q_2, q_3 ), \dots , (q_{n-1}, q_n), (q_n, q_{n+1}), 
\end{equation}
where each pair starting with $ ( q_2, q_3 )$ is the image under 
$\sigma^{-1}$ of the previous pair and 
$\sigma^{-1}(q_n, q_{n+1}) = (q_1,q_2)$. 
This last equality implies that $q_{n+1} = q_1$. 
By the above remarks we see that there are always orbits with 
$n=1$ and, if $ \mathrm{card}(G) \ge 2 $, for $n=2$ as well. 

We claim that 
$q_1,\dots, q_n$ in the orbit (\ref{typical-orbit}) 
are $n$ distinct elements of $\S$. 
Suppose to the contrary that there exists 
$k,l$ with $1 \le k < l \le  n$ 
such that $q_k = q_l$. 
Then by the invariance of the product along an orbit we have that 
$q_k  q_{k+1} = q_l  q_{l+1}$. 
This together with $q_k = q_l$ implies that $q_{k+1} =  q_{l+1}$ 
and therefore $(q_k , q_{k+1} ) = ( q_l , q_{l+1} )$, 
that is to say, the pairs 
in (\ref{typical-orbit}) are not all distinct.  
But this contradicts that (\ref{typical-orbit}) 
is an orbit with $n$ distinct pairs. 
Therefore the elements $q_1,\dots, q_n$ of $\S$ must be distinct. 
The fact that an orbit $ \cal{O} $ lies in $\S \times \S$ implies 
that $ \mathrm{card}(\cal{O}) \le k^2 $, where 
$ k = \mathrm{card}(\S) $. 
This is a very weak upper bound, since the result of 
this paragraph tells us that $ \mathrm{card}(\cal{O}) \le k $. 

In other words, every orbit of $ \sigma $ 
in $\S\times \S$ projects into 
a certain ordered sequence $(q_1, q_2,  \dots, q_n)$ of
$n$ distinct elements of $G$, where $n$ is 
the cardinality of the orbit. 
Since the same orbit is also associated with 
the ordered sequences  $(q_2, q_3, \dots, q_n, q_1)$, 
$(q_3, \dots, q_n, q_1, q_2)$ and so forth 
for all cyclic permutations, 
we then define 
a {\em cyclic set} to be the set whose
elements are all of those ordered sequences 
arising from one orbit and which are so related 
by cyclic permutations. 
If we now consider $\S$ together with the family 
$\cal{T}$ of all of these cyclic sets, 
we get a kind of discrete geometry on $\S$ 
with the cyclic sets being 
interpretable as `lines' and the elements of $\S$ as `points'. 
Cyclic sets are preferably called {\em cyclic lines} in order to 
emphasize this geometrical interpretation. 
Notice that the cyclic line associated to $(q_1, q_2,  \dots, q_n)$ 
uniquely determines the orbit (\ref{typical-orbit}), since 
the second entries in those pairs are  $(q_2, q_3, \dots, q_n, q_1)$. 
We write the cyclic line $\ell$ associated to the ordered sequence 
$(q_1, q_2,  \dots, q_n)$ as 
$\ell \leftrightarrow (q_1, q_2,  \dots, q_n)$. 
This means that each orbit $ \cal{O} $ 
of $ \sigma $ is also associated with a unique cyclic line 
$\ell \leftrightarrow (q_1, q_2,  \dots, q_n)$. 
We write this relation as 
$\cal{O} \leftrightarrow (q_1, q_2,  \dots, q_n)$.

\begin{definition}
\label{define-group-invariant}
We define the {\em group invariant} of an orbit 
$ \cal{O} \leftrightarrow (q_1, q_2, \dots, q_n)$ of $ \sigma $ 
for $ n \ge 2 $ by
$$
   \mathrm{Inv} (\cal{O}) := q_k q_{k+1} \in G,  
$$
where $ k $ is any integer satisfying $1 \le k \le n$. 
As usual we let $ q_{n+1} = q_1 $. 

For $ n = 1 $ we define $  \mathrm{Inv} (\cal{O}) := q_1^2$. 
\end{definition}

By our remarks above we have that 
$  \mathrm{Inv} (\cal{O})$ is well-defined, 
that is, its value does not depend on 
the particular choice of $ k $. 
Also, it does not change value under cyclic 
permutations of $ (q_1, q_2, \dots, q_n) $ and so
is a function of the orbit~$ \cal{O} $. 

For their own independent interest these 
{\em cyclic geometries} are discussed 
in more detail in Appendix~A, 
where we have in particular defined the concepts of a 
{\em cyclic space} and its {\em representations}. 

We also wish to note that the flip-over operator 
$ \sigma $ induces braidings. 
(Compare with the Woronowicz braid operator 
in \eqref{Woronowicz-sigma-quadratic-formula}.) 
We recall that the {\em braid group} $ B_{n} $ is defined 
for each integer $ n \ge 2 $ as the group generated 
by $ \{ g_{1}, \dots , g_{n-1}  \} $ with 
the relations $ g_{j} g_{k} = g_{k} g_{j} $ 
whenever $ | j-k | \ge 2 $ and 
$ g_{j} g_{j+1} g_{j} = g_{j+1} g_{j} g_{j+1} $ 
for $ j=1, \dots, n-2 $. 
We define the bijection 
$ \sigma_{12} $ of $ \S \times \S \times \S $ 
as $ \sigma $ acting on the first two factors, 
that is $ \sigma_{12} (a,b,c) := (\sigma(a,b), c) $. 
Here $ a,b,c \in \S $. 
Similarly, we define 
$ \sigma_{23} (a,b,c) := (a,\sigma(b,c)) $. 
One readily checks that the braid equation holds, 
namely 
\begin{equation}
\label{braid-equation}
   \sigma_{12}\sigma_{23}\sigma_{12}= \sigma_{23}\sigma_{12}\sigma_{23}.
\end{equation}
Next, letting $ \S^{n} = \S \times \cdots \times \S  $ with 
$ n $ factors, we define the bijection $ \tau_{k} $ 
of $ \S^{n} $ for $ 1 \le k < n $ to be $ \sigma $ 
acting on the factors $ k $ and $ k+1 $ and to be the 
identity on the remaining factors. 
This defines an action of the braid group $ B_{n} $ 
on $ \S^{n} $ for every integer $ n \ge 2 $, where 
we use \eqref{braid-equation} when $ n \ge 3 $.   
Since $ \Gamma \cong \mathcal{A} [\S] $, the free 
$ \mathcal{A} $-module with basis $ \S $, 
and $ \ginv \cong \mathbb{C}[\S] $, the vector space 
with basis $ \S $, we also have induced representations 
of $ B_{n} $ on 
$ \Gamma^{\otimes n} = 
\Gamma \otimes_{\mathcal{A}} \cdots 
\otimes_{\mathcal{A}} \Gamma \cong \mathcal{A} [\S^n]$ 
and on 
$\! \ginv^{\otimes n} \!=\! 
\ginv \otimes_{\mathbb{C}} \cdots 
\otimes_{\mathbb{C}} \ginv \! \cong \!
\mathbb{C}[\S^n]$,   
coming from the action of $ B_{n} $ on the space $ \S^n $.

\section{A General Picture}
\label{general-picture-section}

Let us incorporate all this into the context of 
quantum principal bundles. 
We shall assume that 
a quantum principal bundle $P = (\cal{B}, \cal{A}, F)$ 
with a classical finite structure group $G$ is given, 
that is, $\cal{A} = \cal{F}(G)$. 
We shall also assume here that 
a bicovariant, $*$-covariant  fodc $\Gamma \neq 0$ on $G$ is fixed, 
and as explained in Section~2 it is associated to a non-empty set 
$\S\subseteq G\setminus\{\e\}$ which is 
invariant under all conjugations by elements of $G$ 
and under the inverse operation on $G$. 
This is the set $\S$ which we will consider from now on. 
We let $\sigma$ and $\cal{T}$ be as above for this particular subset $S$ 
of the finite group $G$. 
In addition, we shall assume that $ \cal{B} $ has been 
extended to the higher-order 
differential calculus (hodc) $\Omega(P)$ defined 
in \eqref{define-Omega-P}. 
Then $\Omega(P)$ has its connection independent, 
horizontal forms $\hor(P)$ as defined 
in \eqref{define-horizontal-forms}.

\subsection{A Regular `Initial' Connection}

We also assume the existence of 
a regular `initial' connection $\tilde{\omega}$ on $\Omega(P)$ that 
has a covariant derivative 
$D=D_{\tilde{\omega}}\colon \hor(P)\rightarrow\hor(P)$ 
which acts as a hermitian, degree~$+1$ map satisfying 
the graded Leibniz rule and that also intertwines 
the right co-action $ \FWP $ of $\cal{A}$ 
on $ \hor(P) $ or, colloquially speaking, 
that "intertwines the action of $G$". 
Also, a connection $\omega$ said to be {\em regular} 
if for all $\varphi \in \frak{hor}^k (P)$  
and all $\vartheta \in \ginv$ we have 
\begin{equation}
\label{define-regular-connection}
    \omega(\vartheta) \varphi = 
    (-1)^k \varphi^{(0)} \omega (\vartheta \circ \varphi^{(1)} ) 
    \in \frak{hor}^{k+1} (P),
\end{equation}
where we have used Sweedler's notation for 
$\FWP ( \varphi ) 
= \varphi^{(0)} \otimes \varphi^{(1)} 
\in \frak{hor}^k (P) \otimes \cal{A}$. 

\begin{remark}
We remark that a connection $ \omega $ is regular if and only if 
its covariant derivative $D_\omega$ satisfies the graded 
Leibniz rule. 
This is proved in \cite{MichoQPB3}. 
Also see Theorem 12.14 in \cite{Part-II} where it is shown that
regularity of $ \omega $ implies the graded Leibniz rule 
for $ D_\omega $. 
\end{remark}

\begin{theorem}
The connection $ \tilde{\omega} $ defined in 
\eqref{define-omega-tilde} is regular. 
\end{theorem}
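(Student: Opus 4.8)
The plan is to verify the regularity identity \eqref{define-regular-connection} directly for $\omega = \tilde{\omega}$, exploiting the explicit descriptions available in the special case of a QPB with finite structure group. Recall that $\hor^k(P) = \mathcal{D}^k \otimes 1_{\mathcal{A}} \cong \mathcal{D}^k$, that the restriction of $\hat{F}$ to horizontal forms is the co-action ${}_{\mathcal{D}}\Phi$ of \eqref{define-Coxeter-co-action}, so that in the Sweedler notation \eqref{swee-note} one has $\varphi^{(0)} \otimes \varphi^{(1)} = \sum_{g \in G} \varphi_g \otimes \delta_g$ for $\varphi \in \mathcal{D}^k$, and that $\tilde{\omega}(\vartheta) = 1_{\mathcal{B}} \otimes \vartheta$. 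Writing $\varphi = \varphi \otimes 1 \in \mathcal{D}^k \otimes \ginv^{\wedge 0}$, the claim \eqref{define-regular-connection} reduces to a purely computational identity about the twisted product of $\Omega(P)$ defined in Appendix~B.

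First I would compute the left-hand side $\tilde{\omega}(\vartheta)\,\varphi = (1_{\mathcal{B}} \otimes \vartheta)(\varphi \otimes 1)$ using that twisted multiplication. The product on $\Omega(P)$ is built precisely so that moving a factor $\vartheta \in \ginv$ of $\ginv$-degree one past a horizontal form $\varphi$ of de~Rham degree $k$ produces the graded sign $(-1)^{k}$ together with the twisting $\varphi \mapsto \varphi^{(0)}$ and $\vartheta \mapsto \vartheta \circ \varphi^{(1)}$; hence $\tilde{\omega}(\vartheta)\,\varphi = (-1)^{k}\, \varphi^{(0)} \otimes (\vartheta \circ \varphi^{(1)})$.

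Next I would evaluate the right-hand side $(-1)^{k}\, \varphi^{(0)}\,\tilde{\omega}(\vartheta \circ \varphi^{(1)}) = (-1)^{k}\, \varphi^{(0)} \bigl(1_{\mathcal{B}} \otimes (\vartheta \circ \varphi^{(1)})\bigr)$. Here $\varphi^{(0)} \in \mathcal{D}^k$ sits on the left and the factor $1_{\mathcal{B}}$ has $\ginv$-degree zero, so the only twisting that could arise is governed by ${}_{\mathcal{D}}\Phi(1_{\mathcal{B}})$. The key point is that $1_{\mathcal{B}}$ is the constant function on $E$, hence $G$-invariant, so that ${}_{\mathcal{D}}\Phi(1_{\mathcal{B}}) = 1_{\mathcal{B}} \otimes 1_{\mathcal{A}}$; the product therefore merely reassembles the tensor without any further twisting and also equals $(-1)^{k}\, \varphi^{(0)} \otimes (\vartheta \circ \varphi^{(1)})$. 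Comparing the two expressions completes the verification, the common value lying in $\Omega^{k+1}(P)$.

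I expect the only real work to be bookkeeping: extracting from Appendix~B the precise form of the twisted multiplication together with its grading and sign conventions, confirming the graded factor $(-1)^{k}$, and checking that the extension of the action $\circ$ to $\ginv^{\wedge}$ acts trivially on the unit. The conceptual content is slight, since the product on $\Omega(P)$ is essentially designed so that $\tilde{\omega}$ obeys \eqref{define-regular-connection}; the invariance ${}_{\mathcal{D}}\Phi(1_{\mathcal{B}}) = 1_{\mathcal{B}} \otimes 1_{\mathcal{A}}$ is exactly what makes the right-hand side untwist and match the left.
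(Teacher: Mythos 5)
Your proposal is correct and coincides essentially with the paper's own computational proof: both verify \eqref{define-regular-connection} directly from the twisted product of Appendix~B, obtaining $(-1)^k\varphi^{(0)}\otimes(\vartheta\circ\varphi^{(1)})$ for the left-hand side and using the triviality of the co-action on $1_{\mathcal{B}}$ (equivalently, the identification $\mathcal{D}\cong\mathcal{D}\otimes 1_{\mathcal{A}}$) to untwist the right-hand side to the same expression. The paper additionally records a shorter abstract argument—regularity is equivalent to the graded Leibniz rule for $D_{\tilde{\omega}}$, which holds since $D_{\tilde{\omega}}=D_{dR}$ by Theorem~\ref{D-tilde-omega-is-de-Rham}—but your route is exactly its computational alternative.
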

\begin{proof}
By the previous remark this is equivalent to showing that 
$ D_{\tilde{\omega}} $ satisfies the  graded Leibniz rule. 
But by Theorem \ref{D-tilde-omega-is-de-Rham} 
$ D_{\tilde{\omega}} = D_{dR}$, 
the complexified de Rham differential, which indeed does
satisfy the graded Leibniz rule. 
This is an abstract way to prove this result. 

A computational proof starts with 
$ \varphi \in \hor^k (P) $ and uses 
Sweedler's notation for the right co-action $ _{\cal{D}}\Phi $ 
as follows: 
$$
_{\cal{D}}\Phi(\varphi) = \varphi^{(0)} \otimes \varphi^{(1)} 
\in \hor^k (P) \otimes \cal{A}.  
$$
Therefore for $\vartheta \in \ginv$ and 
$ \varphi \in \hor^k (P) $ we obtain 
\begin{align*}
\tilde{\omega}(\vartheta) \, \varphi 
&= (1_{\cal{B}} \otimes \vartheta) (\varphi \otimes 1_{A} ) 
= (-1)^k 1_{\cal{B}} \, \varphi^{(0)} 
\otimes (\vartheta \circ \varphi^{(1)} ) \, 1_{\cal{A}} 
\\
&= (-1)^k \varphi^{(0)} 
\otimes (\vartheta \circ \varphi^{(1)} ) 
= (-1)^k (\varphi^{(0)} \otimes 1_{\cal{A}}) 
          (1_{\cal{B}} \otimes \vartheta \circ \varphi^{(1)}) 
\\
&=(-1)^k \varphi^{(0)}\,\tilde{\omega} (\vartheta \circ \varphi^{(1)} )
\end{align*}
by using the definitions of $ \tilde{\omega} $ 
in \eqref{define-omega-tilde} and 
of the product in $ \Omega (P) $ 
as given in Appendix~B. 
We also used the isomorphism  
$ \cal{D} \cong  \cal{D} \otimes 1_{\cal{A}} $. 
\end{proof}

\subsection{Cyclic Dunkl Displacements}

In Definition \ref{define-dunkl-qcd} 
we defined a Dunkl QCD. 
The next definition is a similar, but different, concept. 

\begin{definition}
\label{define-cyclic-dunkl}
A {\em cyclic Dunkl displacement} 
with respect to a given (that is, `initial') 
regular connection $\tilde{\omega}$ 
on a quantum principal bundle $P$ with a finite 
structure group $ G $
is a set function 
$\ddisp\colon \S\rightarrow \hor^1(P)$, where $ \S \subset G$ 
has a cyclic geometry (with respect 
to the flip-over operator $ \sigma $) 
whose set of lines is  $\cal{T}$ and  
which satisfies the following three properties:  

\vskip 0.2cm \noindent 
\bla{i} Cyclic Property: 
\begin{equation}
\label{cyclic-property}
\ddisp (\ell) := 
\ddisp(s_1)\ddisp(s_2) + 
\cdots 
+ \ddisp(s_{n-1})\ddisp(s_n)+\ddisp(s_n)\ddisp(s_1) 
= 0 \in \frak{hor}^2 (P) 
\end{equation}
for every cyclic line 
$\ell \leftrightarrow (s_1, \dots, s_n) \in \cal{T}$, 
where $s_1, \dots, s_n \in \S$. 
Notice that this condition is invariant under cyclic 
permutations of the chosen ordered sequence, 
that is, it depends only on the cyclic line  $ \ell $ 
defined by $(s_1, \dots, s_n)$. 

\vskip 0.2cm \noindent 
\bla{ii} Covariance Property: 
$$
F^\wedge_g(\ddisp(s)) = \ddisp(gsg^{-1}) \in \hor^1(P)
$$
for every $g\in G$ and $s\in \S$. 
Here $F^\wedge_g=(\id\otimes g)F^\wedge$ 
and we identify points $g \in G$ with their 
associated characters $\chi_g$ on $\cal{A}$, 
where $\chi_g (f) := f (g)$ for all $f \in \cal{A}$. 
So, $\chi_g : \cal{A} \to \mathbb{C}$ is a 
unital, multiplicative, $*$-morphism. 

Here 
$F^\wedge := (\mathrm{id} \otimes p_0) \, \FWP : 
\Omega(P) \to \Omega(P) \otimes \cal{A}  $, 
where $ \FWP : \Omega(P) \to \Omega(P) \otimes \Gamma^\wedge  $ 
is the right co-action of $\Gamma^\wedge  $ on $\Omega(P)$ 
as introduced earlier. 
Also,  
$p_0: \Gamma^\wedge \to \cal{A}$ is the projection 
that maps all homogeneous elements 
of positive degree to $0$ and is the identity 
on all homogeneous elements of degree $0$. 
It follows that $F^\wedge$ is a right co-action 
of $\cal{A}$ on $ \Omega(P) $. 
Furthermore, $F^\wedge_g : \hor^1(P) \to \hor^1(P)$. 
For more details, such as why 
$F^\wedge_g[\ddisp(s)] \in \hor^1(P)$ holds, 
see \cite{MichoQPB2}.  

The reader should be very aware that $\FWP$ and $F^\wedge$ 
are not the same, but by definition are related by 
$F^\wedge = (\mathrm{id} \otimes p_0)  \, \FWP$. 
It turns out that the map $ _{\cal{D}}\Phi $ defined in
\eqref{define-Coxeter-co-action} 
is equal to $F^\wedge$ restricted to 
$ \cal{D} = \hor(P) \subset \Omega(P) $.

\vskip 0.2cm \noindent 
\bla{iii} Closed-ness Property: 
$$
D (\ddisp(s) ) = D_{\tilde{\omega}} (\ddisp(s) ) =0 \in  \hor^2(P)
$$  
for every $s\in\S$. 
This property depends on the choice of the
`initial' quantum connection $\tilde{\omega}$. 
When $\ddisp$ is extended linearly to $L(\S) \cong \ginv$, as 
discussed immediately below, 
then we will have $D_{\tilde{\omega}} (\ddisp( [s] ) ) =0$ 
for all the basis elements $[s]$ of $\ginv$ and therefore 
$D_{\tilde{\omega}} \ddisp = 0$, 
the zero map $\ginv \to \frak{hor}^2 (P)$. 
Notice that in this definition this is the only property 
which refers to the `initial' 
connection~$ \tilde{\omega} $. 
\end{definition}

\begin{remark}
We remark that the constant function $ \ddisp \equiv 0 $ 
is a cyclic Dunkl displacement. 
\end{remark}

\begin{remark} 
As we shall see shortly, every cyclic Dunkl 
displacement is indeed a QCD.    
It is worth observing that our main `cyclicity' condition 
\eqref{cyclic-property} is 
non-linear in $ \ddisp $.  
It picks out an interesting intersection of a quadratic conic 
with the vector space of all Dunkl QCD's. 
\end{remark}

The cyclic property \bla{i} says that $\ddisp$ is a 
representation of the cyclic space $(\S, \cal{T})$ in 
the $*$-algebra $\hor(P)$. 
(See Appendix~A for the appropriate definitions.) 
Since $S$ labels the basis $\{ \delta_s ~|~ s \in \S \}$ of $\ginv$,  
any set function $\ddisp\colon\S\rightarrow \hor(P)$ admits, 
by linearity, a unique linear extension to 
$\ddisp\colon L(\S) \cong\ginv\rightarrow \hor^1(P)$ 
(still denoted as $\ddisp$), 
where $L(\S)$ is the abstract vector space whose elements are formal 
linear combination of elements in $\S$ with complex coefficients. 
This extended $\ddisp$ itself extends further 
to a unique unital algebra morphism 
$\ddisp^{\otimes}\colon\ginv^\otimes\rightarrow \hor(P)$, where 
$ \ginv^\otimes $ is the space of left invariant elements in 
the tensor algebra 
$\Gamma^\otimes= \bigoplus_{k=0}^\infty \, \Gamma^{\otimes k}$. 
(See Definition~7.2 and Exercise~7.4 in \cite{Part-II} for 
the definition and properties of the left co-action
of $ \cal{A} $ on $ \Gamma^\otimes $.) 
Here $\Gamma^{\otimes k} = \Gamma \otimes \cdots \otimes \Gamma$ 
with $k$ factors for $ k \ge 1 $ and 
$ \Gamma^{\otimes 0} = \cal{A} $. 
It turns out that 
$ \ginv^{\otimes k} := 
(\Gamma \otimes \cdots \otimes \Gamma)_{\mathrm{inv}}
= \ginv \otimes \cdots \otimes \ginv$ for $ k \ge 1 $ 
and that 
$ \ginv^{\otimes 0} = \mathbb{C} $. 
Explicitly, 
$ \ddisp^{\otimes k} (\theta_1 \otimes \cdots \otimes \theta_k)
:= \ddisp (\theta_1) \otimes \cdots \otimes \ddisp(\theta_k)
$ 
for $ k \ge 1 $, where 
$ \theta_1, \dots , \theta_k \in \ginv $.
Also, we put 
$ \ddisp^{\otimes 0} (\alpha)  := \alpha \, 1_{\cal{B}} $ 
for $ \alpha \in \mathbb{C} $, since this must be 
an identity preserving map. 

The next result is an essential step in this theory. 
\begin{theorem}
\label{cyclic-dunkl-qcd}
Every cyclic Dunkl displacement is a QCD.
\end{theorem}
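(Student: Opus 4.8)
The plan is to verify that the linear extension $\ddisp\colon\ginv\to\hor^1(P)\subset\Omega^1(P)$ satisfies the defining covariance condition of a QCD, namely $\hat{F}\big(\ddisp(\theta)\big)=(\ddisp\otimes\id_\cal{A})\,\ad(\theta)$ for all $\theta\in\ginv$; the optional reality condition has already been dropped as inessential, so there is nothing else to check. Since $\ddisp$ is linear and $\{\pi(\delta_s)\mid s\in\S\}$ is a basis of $\ginv$, it suffices to verify this identity on each basis element $\theta=\pi(\delta_s)$, $s\in\S$. Throughout I write $\ddisp(t)$ for $\ddisp(\pi(\delta_t))$, using the identification of the set function with its linear extension.

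First I would compute the right-hand side. Combining $\ad\,\pi=(\pi\otimes\id_\cal{A})\,\ad$ from \eqref{ad-ad-relation} with the explicit formula \eqref{ad-delta-g}, one obtains $\ad\big(\pi(\delta_s)\big)=\sum_{k\in G}\pi(\delta_{ksk^{-1}})\otimes\delta_k$, and hence $(\ddisp\otimes\id_\cal{A})\,\ad\big(\pi(\delta_s)\big)=\sum_{k\in G}\ddisp(ksk^{-1})\otimes\delta_k$. This is precisely the expression already encountered on the right of \eqref{second-condition}.

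The remaining and main step is to identify the left-hand side $\hat{F}\big(\ddisp(s)\big)$ with this same sum, and here the covariance property \bla{ii} of the cyclic Dunkl displacement does all the work; notably, the cyclic property \bla{i} and the closed-ness property \bla{iii} are not needed. The key observation is that $\ddisp(s)\in\hor^1(P)$, so $\hat{F}\big(\ddisp(s)\big)\in\hor^1(P)\otimes\cal{A}\subset\Omega(P)\otimes\cal{A}$; since $p_0$ restricts to the identity on the degree-zero part $\cal{A}$ of $\Gamma^\wedge$, we get $F^\wedge\big(\ddisp(s)\big)=(\id\otimes p_0)\hat{F}\big(\ddisp(s)\big)=\hat{F}\big(\ddisp(s)\big)$. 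Expanding in the basis $\{\delta_k\mid k\in G\}$ of $\cal{A}$, I would write $\hat{F}\big(\ddisp(s)\big)=\sum_{k\in G}\omega_k\otimes\delta_k$ with $\omega_k\in\hor^1(P)$. Applying $(\id\otimes\chi_g)$ and using $\chi_g(\delta_k)=\delta_k(g)$ isolates the coefficient $\omega_g$, that is, $F^\wedge_g\big(\ddisp(s)\big)=\omega_g$. By property \bla{ii} this equals $\ddisp(gsg^{-1})$, so $\omega_g=\ddisp(gsg^{-1})$ for every $g\in G$. Substituting back yields $\hat{F}\big(\ddisp(s)\big)=\sum_{k\in G}\ddisp(ksk^{-1})\otimes\delta_k$, matching the right-hand side computed above, and the QCD covariance condition follows.

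I expect the only genuine subtlety to lie in the bookkeeping around the two distinct co-actions $\hat{F}$ and $F^\wedge$. One must take care that they agree on horizontal forms, which holds precisely because $\ddisp(s)$ lands in $\hor^1(P)$ and $p_0$ acts as the identity in degree zero, and that evaluating the second tensor leg of $F^\wedge\big(\ddisp(s)\big)$ at the character $\chi_g$ extracts exactly the coefficient of $\delta_g$. Once this dictionary between property \bla{ii} and the QCD covariance condition is in place, the equality of the two sides is immediate, and the statement that every cyclic Dunkl displacement is a QCD follows at once.
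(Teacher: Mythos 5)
Your proposal is correct and follows essentially the same route as the paper's own proof: reduce to the basis $\{\pi(\delta_s)\}$, compute the right-hand side via \eqref{ad-ad-relation} and \eqref{ad-delta-g}, expand $\hat{F}(\ddisp(s))$ over the basis $\{\delta_k\}$ of $\cal{A}$ using horizontality, and extract the coefficients with the characters $\chi_g$ so that the Covariance Property \bla{ii} identifies them as $\ddisp(ksk^{-1})$. Your added care in justifying $\hat{F} = F^\wedge$ on horizontal forms via $p_0$ is a point the paper passes over more quickly, but it is the same argument.
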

\begin{proof}
We have to show that any cyclic Dunkl displacement 
$$
\ddisp\colon L(\S) \cong\ginv\rightarrow \hor^1(P)
$$ 
is covariant with respect to the right co-actions of $\cal{A}$ 
on $\ginv$ and $\hor^1 (P)$. 
These right co-actions are $ \ad $ and $ \FWP $, respectively. 
Of course, this is a consequence of only the Covariance Property 
in Definition \ref{define-cyclic-dunkl}. 

Since $ \ddisp (s) $ is horizontal and 
$ \{ \delta_k ~|~ k \in G \} $ is a basis
of $ \cal{A} $, we have
$$
   \FWP (  \ddisp (s) ) = F^\wedge (  \ddisp (s) ) =
   \sum_{k \in G} \omega_k \otimes \delta_k 
   \in \hor^1(P) \otimes \cal{A}
$$
for unique elements $ \omega_k \in \hor^1(P) $. 
Then by the Covariance Property we obtain
$$
    \ddisp ( g s g^{-1}) = F^\wedge_g ( \ddisp(s) )  
    = (\id\otimes g)F^\wedge ( \ddisp(s) ) 
    =  \sum_{k \in G} \omega_k \otimes \delta_k (g) 
    = \omega_g. 
$$

So we conclude that 
\begin{equation}
\label{F-hat-of-ddisp}
   \FWP (  \ddisp (s) ) 
  = \sum_{k \in G} \ddisp ( k s k^{-1}) \otimes \delta_k.  
\end{equation}

On the other hand 
$$
    ( \ddisp \otimes \mathrm{id}_{ \cal{A} }) \, \ad ( s) = 
    ( \ddisp \otimes \mathrm{id}_{ \cal{A} }) \,  
    \sum_{k \in G} [ k s k^{-1} ]\otimes \delta_k 
    = \sum_{k \in G} \ddisp ( k s k^{-1}) \otimes \delta_k, 
$$
using our standard notation conventions and the 
identities \eqref{ad-ad-relation} and \eqref{ad-delta-g}.
This proves the desired covariance.  
\end{proof}

\subsection{Cyclic Dunkl Connections}

This theorem allows us to define a new generalization 
of a Dunkl connection as was originally defined in \cite{DS}. 
This is a central aspect of this paper. 
\begin{definition}

Let $\ddisp$ be a cyclic Dunkl displacement with respect to 
the `initial' regular connection $\tilde{\omega}$ 
as given above on the QPB $ P $ with finite structure group. 
By Theorem~\ref{cyclic-dunkl-qcd} 
$\omega:= \tilde{\omega} + \ddisp$ is a quantum 
connection (dropping the optional hermitian condition).
Then we say that $ \omega $ is a {\em cyclic Dunkl connection} 
(with respect to $\tilde{\omega}$). 

\end{definition}

In particular, since $ \ddisp \equiv 0 $ is 
a cyclic Dunkl displacement with respect to $\tilde{\omega}$, 
it follows by this definition 
that $\tilde{\omega}$ is a cyclic Dunkl connection 
(with respect to itself).  

Another crucial fact is established in the next result.
\begin{theorem}
Let $ \ddisp : \ginv \to \hor^1 (P) $ be defined by 
formula \eqref{formula-dunkl-qcd}, where the non-zero 
co-vectors satisfy   $ \alpha_{g s g^{-1}} = \alpha_{s} $ 
for all $ g \in G $ and $ s \in \S $. 
Then $ \ddisp $ is a Dunkl QCD if and only if $ \ddisp $ 
satisfies the Covariance Property (ii). 
\end{theorem}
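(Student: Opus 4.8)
The plan is to establish both implications at once, by writing out the Covariance Property (ii) explicitly and showing that, for a displacement of the special form \eqref{formula-dunkl-qcd}, it is equivalent to the single scalar functional equation $h_{gsg^{-1}}(x)=h_s(xg)$. Since the co-vector hypothesis $\alpha_{gsg^{-1}}=\alpha_s$ is assumed throughout, this functional equation is precisely the one remaining condition needed for $\ddisp$ to be a Dunkl QCD in the sense of Definition~\ref{define-dunkl-qcd}, so the equivalence ``Dunkl QCD $\Leftrightarrow$ Covariance (ii)'' will drop out immediately. The first ingredient is that each $\ddisp(\pi(\delta_s))=h_s\,\alpha_s$ lies in $\hor^1(P)\cong\cal{D}^1$, together with the fact recorded earlier that on horizontal forms $F^\wedge$ coincides with the geometric co-action $_{\cal{D}}\Phi$ of \eqref{define-Coxeter-co-action}. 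Evaluating the structure-group leg at the character $\chi_g$ then gives $F^\wedge_g(\varphi)=\sum_{h\in G}(h\cdot\varphi)\,\delta_h(g)=g\cdot\varphi$ for every horizontal $\varphi$, so that $F^\wedge_g$ is nothing but the classical pull-back $R_g^{*}$ of de~Rham forms along the right translation $R_g\colon x\mapsto xg$.

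Applying this to $\varphi=\ddisp(s)=h_s\,\alpha_s$ and using multiplicativity of the pull-back, I would compute $F^\wedge_g(\ddisp(s))=g\cdot(h_s\,\alpha_s)=(g\cdot h_s)\,(g\cdot\alpha_s)$, where $(g\cdot h_s)(x)=h_s(xg)$ is the action on the $0$-form $h_s$. The decisive point is that $g\cdot\alpha_s=\alpha_s$: writing $\alpha_s=\sum_i(\alpha_s)_i\,dx_i$ in the distinguished coordinates, each differential $dx_i$ is pulled back from the base $E/G$ via the covering projection $\mathrm{pr}\colon E\to E/G$, and since $\mathrm{pr}\circ R_g=\mathrm{pr}$ we get $R_g^{*}\,dx_i=dx_i$; hence $\alpha_s$ is a \emph{basic} form and $F^\wedge(\alpha_s)=\alpha_s\otimes 1_{\cal{A}}$. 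This yields $F^\wedge_g(\ddisp(s))(x)=h_s(xg)\,\alpha_s$. On the other side, the hypothesis $\alpha_{gsg^{-1}}=\alpha_s$ gives $\ddisp(gsg^{-1})(x)=h_{gsg^{-1}}(x)\,\alpha_{gsg^{-1}}=h_{gsg^{-1}}(x)\,\alpha_s$.

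Comparing the two expressions, the Covariance Property $F^\wedge_g(\ddisp(s))=\ddisp(gsg^{-1})$ holds for all $g\in G$, $s\in\S$ and $x\in E$ if and only if $h_s(xg)\,\alpha_s=h_{gsg^{-1}}(x)\,\alpha_s$; because $\alpha_s\neq 0$ by hypothesis this cancels to $h_s(xg)=h_{gsg^{-1}}(x)$, which is exactly the second defining condition of a Dunkl QCD. As the first (co-vector) condition holds by assumption, $\ddisp$ is a Dunkl QCD if and only if it satisfies Covariance (ii), and both directions are proved simultaneously.

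I expect the only genuinely delicate step to be the justification of $g\cdot\alpha_s=\alpha_s$, i.e. the claim that the identification $x\mapsto(x,\alpha_s)$ produces a form which is constant in a \emph{$G$-invariant} (basic) trivialization coming from the quotient coordinates $x_i$, rather than in an arbitrary global one. This invariance is what makes the covariance condition collapse to the clean scalar equation on the $h_s$; tellingly, it is exactly where the present framework parts company with the Coxeter example of Section~3, in which the root co-vectors $\alpha^{\star}$ transform non-trivially under $G$ and the hypothesis $\alpha_{gsg^{-1}}=\alpha_s$ fails.
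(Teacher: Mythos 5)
Your proof is correct and follows essentially the same route as the paper's: compute both $F^\wedge_g(\ddisp(s))$ and $\ddisp(gsg^{-1})$, then use $\alpha_{gsg^{-1}}=\alpha_s$ and $\alpha_s\neq 0$ to reduce the Covariance Property (ii) to the scalar equation $h_{gsg^{-1}}(x)=h_s(xg)$, which is exactly the remaining condition of Definition~\ref{define-dunkl-qcd}. The only difference is that you explicitly justify $g\cdot\alpha_s=\alpha_s$ via the $G$-invariance of the coordinate differentials $dx_i$ pulled back from $E/G$ --- a step the paper's proof uses tacitly when it writes $F^\wedge_g(\ddisp(s))(x)=\ddisp(s)(xg)=h_s(xg)\,\alpha_s$ --- and your closing observation that the Coxeter example of Section~3 violates the hypothesis $\alpha_{gsg^{-1}}=\alpha_s$ is likewise accurate.
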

\begin{proof}
On the one hand we calculate 
$$
   \ddisp( g s g^{-1}) (x) 
   = \ddisp( \pi (\delta_{ g s g^{-1} }) ) (x) 
   = h_{g s g^{-1}} (x) \alpha_{g s g^{-1}} 
$$
for all $ x \in E $, $ g \in G $ and $ s \in \S $. 
On the other hand, we have 
\begin{align*}
&F_{g}^{\wedge} (\ddisp(s)) 
= (\mathrm{id} \otimes g) F^{\wedge} ( \ddisp(s) )
= (\mathrm{id} \otimes g) 
\sum_{h \in G} h \cdot ( \ddisp(s) ) \otimes \delta_{h} 
\\
&= 
\sum_{h \in G} h \cdot ( \ddisp(s) ) \otimes \delta_{h} (g) 
= g \cdot ( \ddisp(s) ) \otimes 1 
\cong g \cdot ( \ddisp(s) ). 
\end{align*} 
Evaluating at $ x \in E $ we see that 
$$
F_{g}^{\wedge} (\ddisp(s)) (x) =  g \cdot ( \ddisp(s) ) (x) 
=  \ddisp(s) ( x g) = h_{s} (x g) \, \alpha_s. 
$$
Now we assume that $ \ddisp $ is a Dunkl QCD. 
So we use  
Definition~\ref{define-dunkl-qcd} to obtain that 
$$
 \ddisp( g s g^{-1}) (x) 
 = h_{g s g^{-1}} (x) \, \alpha_{g s g^{-1}} 
 = h_{s} (x g) \, \alpha_{s} 
 = F_{g}^{\wedge} (\ddisp(s)) (x). 
$$
And this shows that $ \ddisp $ 
satisfies the Covariance Property (ii). 

Conversely, suppose that 
the Covariance Property (ii) holds for $ \ddisp $. 
Then from the calculations above we get that
$$
 h_{g s g^{-1}} (x) \alpha_{g s g^{-1}} 
 = h_{s} (x g) \, \alpha_s 
$$
holds for all $ x \in E $, $ g \in G $ and $ s \in \S $. 
By the hypothesis on the $ \alpha_{s} $'s we have 
$$
 h_{g s g^{-1}} (x) \, \alpha_{s} 
 = h_{s} (x g) \, \alpha_s. 
$$
Since $ \alpha_{s} \ne 0 $ this implies 
$ h_{g s g^{-1}} (x) = h_{s} (x g) $, 
proving $ \ddisp $ is a Dunkl QCD.  
\end{proof}

\subsection{The Fodc of $ G $}

Next we present a more detailed description of the fodc on $G$. 
The following rather nice identity 
\begin{equation}\label{sum-S}
[\e]=-\sum_{s\in\S}[s]
\end{equation}
holds for every 
bicovariant, $*$-covariant fodc on $\cal{A} = \cal{F} (G)$. 
(See Section~13.4 and especially Corollary 13.2 
in \cite{Part-II}.) 
Recall that $[s] = \pi (s) = \pi (\delta_s)$  
and that the identity element 
$\e \in G$ also denotes its associated 
character $f \mapsto f(\e)$ for $f\in \cal{A}$, 
which is also the co-unit for the Hopf algebra $\cal{A}$. 

It is also worth recalling that 
$[\e] \in \Gamma$ is invariant under the 
right adjoint co-action of $\cal{A}$. 
We now prove this. 
Using $\e = \delta_e\in\cal{A}$ 
where $e\in G$ is the identity element in $G$ 
and \eqref{ad-delta-g}, 
we get 
\begin{equation*}
  \ad (\e) = \ad (\delta_e) 
  = \sum_{k \in G} \delta_{k e k^{-1}} \otimes \delta_k 
  = \sum_{k \in G} \delta_{ e } \otimes \delta_k 
  = \delta_{ e } \otimes \sum_{k \in G}  \delta_k 
  = \e \otimes 1_{\cal{A}}. 
\end{equation*}
Next, using this and the identity \eqref{ad-ad-relation}, 
it follows that 
\begin{equation}
\label{e-right-ad-invariant}
        \ad (  [ \e ] )
        \!=\! \ad ( \pi (\e) ) \!=\! ( \pi  \otimes \mathrm{id} ) \ad (\e)
        \!=\!  ( \pi  \otimes \mathrm{id} ) ( \e \otimes 1_{\cal{A}} ) 
        \!=\! \pi ( \e ) \otimes 1_{\cal{A}}
        \!=\! [ \e ] \otimes 1_{\cal{A}}, 
\end{equation}
thereby showing that $[ \e ]$ is right adjoint invariant 
as claimed. 

We also remark that for every $a\in \cal{A}$ we have 
\begin{equation}
\label{e-circle-a}
[\e]\circ a=\e(a)[\e]-[a],
\end{equation}
as follows from the definition \eqref{define-circ-action} 
when we simplify after putting $b = \delta_e$ and $c=a$ there. 
In particular, the identity 
$ \delta_e a = \e (a) \delta_e $ is used. 

Thus $[\e]$ is a kind of "vacuum state", that is, it is   
a right adjoint invariant, cyclic vector for the right $\cal{A}$-module $\ginv$. 

Next we note that the derivation 
$d\colon \cal{A}\rightarrow \Gamma$ 
of the fodc satisfies 
\begin{align*}
d a &= a^{(1)} \pi (a^{(2)}) 
\\
&= a^{(1)} [ a^{(2)} ] 
\\
&= a^{(1)} \bigl( \e( a^{(2)}) [\e] - [\e] \circ  a^{(2)} \bigr) 
\\
&= a^{(1)}  \e( a^{(2)}) [\e] - a^{(1)} \bigl( [\e] \circ  a^{(2)} \bigr) 
\\
&= a [\e] -   [\e] a. 
\end{align*}
Here we used the identity \eqref{diff-id} for $d a$, 
equations \eqref{e-circle-a} 
and \eqref{curious-identity} as well as 
the Hopf algebra identity 
$a = a^{(1)} \e ( a^{(2)} )$ 
for the co-unit $\e$. 
This result, written as 
\begin{equation}
\label{d-as-anti-inner-derivation} 
d a = [- \e]  a - a [-\e], 
\end{equation}
is the same sort of commutation relation as we saw earlier 
in \eqref{d-is-inner} 
since it measures the difference between the left and right 
$\cal{A}$-module structures acting on the element $[-\e] \in \ginv$ 
in the $\cal{A}$-bimodule $\Gamma$. 
Again, we say that $d$ is an {\em inner derivation}, 
but now with respect to $[-\e]$, which also can be
considered as a "vacuum state". 

As noted above (taking $ 1 $ to mean $ 1_{\cal{A}} $) 
we can put
$q = 1 - \e = 1 - \delta_e \in\cal{A}$ and define
$\tau := \pi(q) = [ q ] = [ -\e ] \ne 0$. 
Hence 
$(q - 1) \ker (\e) = -\delta_e \ker (\e) = \{ 0 \} \subset \cal{R}$ 
and $q \in \ker (\e)$. 
We also claim that $q \notin \cal{R}$.  
This is so 
since for any $s \in\S$ we have 
$$
q(s) = (1 - \delta_e) (s) = 1 -\delta_e(s) = 1 - 0 =1 \neq 0. 
$$
Here we used that $s \neq e$ for all $s\in\S$. 
But $\cal{R}$ by definition is the ideal in $\ker (\e)$ of those 
functions that annihilate $\S$. 
So $q \notin \cal{R}$ as claimed. 
This shows that this differential $d$ satisfies the general properties
that gave us \eqref{d-is-inner}. 
Even though the identity
\eqref{d-as-anti-inner-derivation} 
seems to be at odds with \eqref{d-of-delta-g}, 
both are identities for $d : \cal{A} \to \Gamma$. 
All of this concerns the fodc $d\colon \cal{A}\rightarrow \Gamma$. 
We will next investigate how this analysis can be extended to 
an hodc that extends this fodc.

\subsection{Extending to the Hodc}

We start out by remarking that $\sigma$, 
the Woronowicz braid operator, satisfies 
\begin{equation}
\label{slight-switch}
\sigma(\vartheta\otimes [\e])=[\e]\otimes \vartheta
\end{equation}
for every $\vartheta\in\ginv$, the space of left invariant elements 
of $\Gamma$ with respect to its canonical left co-action by $\cal{A}$. 
To see that this is so 
we note that the canonical right co-action of 
$\cal{A}$ on $[\e] \in \ginv$ 
is equal to the right adjoint co-action of $\cal{A}$ on $[\e]$. 
This is the content of diagram (6.22) in \cite{Part-II}.
Since $[\e]$ is right adjoint invariant, 
as shown in \eqref{e-right-ad-invariant}, it follows that 
$[\e]$ is right canonical invariant.  
However, 
$\sigma(\omega_1 \otimes \omega_2) = \omega_2 \otimes \omega_1$ 
provided that $\omega_1$ is left invariant 
and $\omega_2$ is right invariant
with respect to the canonical left and right co-actions, respectively.  
(See Section~7.2 in \cite{Part-II}.)  
And thus we have proved \eqref{slight-switch}. 
Then, by taking the special case $\vartheta = [\e] \in \ginv$, we obtain 
\begin{equation}
\label{sigma-of-e-tensor-e}
\sigma([\e] \otimes [\e])=[\e]\otimes [\e]. 
\end{equation}

The action of the Woronowicz braid operator 
$\sigma : \ginv^{\otimes 2} \to  \ginv^{\otimes 2}$ 
is calculated for $g,h \in S$ by using 
\eqref{sigma-eta-vartheta}, \eqref{ad-ad-relation} 
and \eqref{ad-delta-g} to be
\begin{equation}
\label{Woronowicz-sigma-quadratic-formula}
\sigma ([g] \otimes [h]) 
= \sum_{k \in G} [ k h k^{-1} ] \otimes ([g] \circ k) 
= [g h g^{-1}] \otimes [g]. 
\end{equation}
Here we also used $[g] \circ k = 0$ for $k \neq g$ 
and $[g] \circ g = [g]$, which follow 
from the definition \eqref{define-circ-action}. 
Notice that the set $\{ [g] \otimes [h] \, | \, g,h \in \S \}$ is a basis 
of $\ginv^{\otimes 2}$. 
So the Woronowicz braid operator corresponds to the permutation 
of the finite set $\S \times \S$ known as 
the flip-over operator in \eqref{define-flip-over}. 
This justifies using the same notation $\sigma$ 
for both of them. 

There is a very interesting characterization 
of the cyclic property 
in terms of a natural higher-order differential calculus on $G$, 
which we are going to describe now. 
Let us define the algebra $\Gamma^{\sim}$ 
as the quotient algebra of 
the tensor algebra 
$\Gamma^{\otimes} = \bigoplus_{k=0}^\infty \, \Gamma^{\otimes k}$ 
over the ideal $\cal{ I}_{\sigma \mathrm{inv} }$ in $\Gamma^{\otimes}$ 
generated by $\ker(\id-\sigma)$, the vector space 
of all the $\sigma$-invariant 
elements in the degree-$2$ subspace $\Gamma^{\otimes 2}$, 
where $\id$ means the identity map of $\Gamma^{\otimes 2}$. 
An ideal, such as  $\cal{ I}_{\sigma \mathrm{inv} }$, 
generated by elements of degree~$ 2 $ is called 
a {\em quadratic ideal}. 
In this quotient algebra $[\e]^2=0$, 
because of \eqref{sigma-of-e-tensor-e}. 
Thus we can consistently extend the differential 
$d\colon\cal{A}\rightarrow\Gamma$ to 
$d\colon\Gamma^{\sim}\rightarrow \Gamma^{\sim}$ 
by setting 
$$
d \psi := -\bigl([\e]\psi-(-1)^{\partial \psi}\psi[\e]). 
$$
(In general, $ \partial \psi $ denotes the degree of  
a homogeneous element $ \psi $.) 
This says the extended differential is a graded inner derivation 
with respect to the element $[-\e]$. 
Our construction is covariant, and so we have that 
$\Gamma^{\sim}\leftrightarrow\cal{A}\otimes \ginv^{\sim}$ 
where we define 
$$
  \ginv^{\sim} := \ginv^{\otimes} / \cal{ I}_{\sigma \mathrm{inv} }. 
$$
So, any bicovariant, $ * $-covariant fodc 
$ d : \cal{A} \to \Gamma $ has a functorially associated quadratic 
hodc $ \Gamma^\sim $.

\begin{remark} 
It is easy to see from the covariance and *-interrelation 
properties of the flip-over operator $\sigma$,  that this 
calculus has both a compatible bicovariance structure and 
compatible $*$-operations. 
Moreover, 
the calculus admits (a necessarily unique) extension of the 
coproduct to 
$\phi^\sim\colon \Gamma^\sim\rightarrow 
\Gamma^\sim \otimes \Gamma^\sim$, 
and therefore it is an acceptable hodc for the 
structure group $G$ when considering quantum principal 
$G$-bundles. 
Indeed, if we consider a generic quadratic relation 
represented by 
$$
\sum_{\vartheta\eta}\vartheta\otimes \eta 
\in  \ker (\id - \sigma), 
$$ 
then a straightforward calculation shows  
\begin{multline*}
0=\sum_{\vartheta\eta}\vartheta\eta
\mapsto \sum_{\vartheta\eta}\vartheta^{(0)}\eta^{(0)}\otimes \vartheta^{(1)}\eta^{(1)}
+\sum_{\vartheta\eta}\vartheta^{(0)}\otimes \vartheta^{(1)}\eta
-\sum_{\vartheta\eta}\eta^{(0)}\otimes \vartheta \eta^{(1)}\\
{}+\sum_{\vartheta\eta}1\otimes\vartheta\eta
=\sum_{\vartheta\eta}\vartheta^{(0)}\eta^{(0)}\otimes \vartheta^{(1)}\eta^{(1)}
+\sum_{\vartheta\eta}1\otimes\vartheta\eta=0. 
\end{multline*}
The second and third terms cancel out because of the 
$\sigma$-invariance of the initial quadratic expression, 
and the first and last terms are both zero. This proves the 
compatibility property of the ideal, and thus the existence 
of $\phi^\sim$. 

Therefore 
the hodc $\Gamma^\sim$ is situated, in terms of the 
richness of its generating relations, between the universal 
differential envelope $ \Guni $ of $\Gamma$ and the braided exterior 
calculus $ \Gbr $ over $\Gamma$. 
As we know from the general theory  
these two are the maximal and  minimal hodc, respectively, 
for  acceptable hodc's over $\Gamma$. 
\end{remark}

\subsection{The Cyclic Property Related to the Hodc}

We are ready to present the characterization 
of the cyclic property in terms 
of the hodc $ \Gamma^{\sim} $. 

\begin{theorem} 
The cyclic property for $\ddisp$ is equivalent to 
$$
( m_{\Omega(P)} \, \ddisp^{\otimes 2} ) ( \ker(\id-\sigma) ) = 0 . 
$$
In other words, in this case $\ddisp^\otimes$ is 
projectable---or $\ddisp$ is extendible---to a 
unital algebra morphism
$\ddisp^{\sim}\colon\ginv^{\sim}\rightarrow\hor(P)$. 
Here $ m_{\Omega(P)} $ denotes the multiplication in 
$ \Omega(P) $. 
\end{theorem}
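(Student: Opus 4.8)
The plan is to identify the $ \sigma $-invariant subspace $ \ker(\id - \sigma) \subset \ginv^{\otimes 2} $ explicitly and then to observe that applying $ m_{\Omega(P)} \ddisp^{\otimes 2} $ to its natural spanning set reproduces exactly the left-hand sides $ \ddisp(\ell) $ of the cyclic relations \eqref{cyclic-property}. The whole argument reduces to a bookkeeping match between orbits of the flip-over operator and cyclic lines, together with the elementary fact that a quotient by an ideal generated in degree two is controlled entirely by the behaviour of an algebra morphism on those degree-two generators.

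First I would describe $ \ker(\id - \sigma) $. By \eqref{Woronowicz-sigma-quadratic-formula} the Woronowicz braiding acts on the basis $ \{ [g] \otimes [h] \mid g,h \in \S \} $ of $ \ginv^{\otimes 2} $ by $ [g] \otimes [h] \mapsto [ghg^{-1}] \otimes [g] $, which is precisely the flip-over permutation of $ \S \times \S $ from \eqref{define-flip-over}. Thus $ \sigma $ merely permutes this basis, and $ \ginv^{\otimes 2} $ decomposes as the direct sum of the subspaces spanned by the individual orbits. On an orbit $ \cal{O} $ of cardinality $ n $ the operator $ \sigma $ acts as an $ n $-cycle, whose fixed subspace is one-dimensional and spanned by the orbit sum
$$ \xi_{\cal{O}} := \sum_{(g,h) \in \cal{O}} [g] \otimes [h]. $$
Summing over all orbits gives $ \ker(\id - \sigma) = \bigoplus_{\cal{O}} \mathbb{C} \, \xi_{\cal{O}} $.

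Next I would compute the image of a single orbit sum. Writing $ \cal{O} \leftrightarrow (q_1, \dots, q_n) $ as in \eqref{typical-orbit}, its pairs are $ (q_i, q_{i+1}) $ for $ i = 1, \dots, n $ with $ q_{n+1} = q_1 $, so that
$$ (m_{\Omega(P)} \ddisp^{\otimes 2})(\xi_{\cal{O}}) = \sum_{i=1}^{n} \ddisp(q_i) \ddisp(q_{i+1}) \in \hor^2(P). $$
This sum is, term by term, precisely the left-hand side $ \ddisp(\ell) $ of \eqref{cyclic-property} for the cyclic line $ \ell \leftrightarrow (q_1, \dots, q_n) $; the degenerate case $ n = 1 $, where the orbit is a single fixed point $ (s,s) $, yields $ \ddisp(s)^2 $, in agreement with the $ n = 1 $ reading of \eqref{cyclic-property}. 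Since the $ \xi_{\cal{O}} $ span $ \ker(\id - \sigma) $ and the lines $ \ell $ range over all of $ \cal{T} $, the cyclic property holds for every line if and only if $ m_{\Omega(P)} \ddisp^{\otimes 2} $ annihilates all of $ \ker(\id - \sigma) $, which is the claimed equivalence.

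Finally, for the reformulation in terms of projectability I would use that $ \ddisp^{\otimes} $ is a unital algebra morphism and that $ \ginv^{\sim} = \ginv^{\otimes} / \cal{I}_{\sigma \mathrm{inv}} $, where $ \cal{I}_{\sigma \mathrm{inv}} $ is the two-sided ideal generated by $ \ker(\id - \sigma) $. A unital algebra morphism vanishes on such an ideal if and only if it vanishes on a generating set; and since the restriction of $ \ddisp^{\otimes} $ to the degree-two component $ \ginv^{\otimes 2} $ equals $ m_{\Omega(P)} \ddisp^{\otimes 2} $, the condition $ (m_{\Omega(P)} \ddisp^{\otimes 2})(\ker(\id - \sigma)) = 0 $ is exactly what lets $ \ddisp^{\otimes} $ descend to a unital algebra morphism $ \ddisp^{\sim} \colon \ginv^{\sim} \to \hor(P) $. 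The one genuinely delicate point is the first step, the identification of the fixed subspace of a basis permutation with the span of its orbit sums; once that standard fact is secured (with the one-element orbits included), the remaining steps are direct matching computations.
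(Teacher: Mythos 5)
Your proof is correct and follows essentially the same route as the paper's: both arguments identify $\ker(\id-\sigma)$ with the span of the orbit sums (your $\xi_{\cal{O}}$, the paper's $\tau(\ell)$) and observe that $m_{\Omega(P)}\,\ddisp^{\otimes 2}$ sends each orbit sum to the left-hand side of the cyclic relation for the corresponding line. In fact you supply two details the paper compresses --- the verification that the orbit sums span the fixed subspace (the paper says only ``as can be easily checked'') and the explicit descent argument for $\ddisp^{\sim}$ via vanishing on the generators of the quadratic ideal --- both of which are handled correctly.
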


\begin{remark}
Recall that $ \ddisp^{\otimes} $ was defined just before 
Theorem~\ref{cyclic-dunkl-qcd}. 
\end{remark}

\begin{proof} 
This follows from the fact that the $\sigma$-invariant 
elements of $\ginv\otimes\ginv$~are 
precisely those expressible as linear combinations 
of certain elements 
associated to orbits or, equivalently, 
to cyclic lines as we now describe. 
Indeed, given a cyclic line $\ell$ associated to the ordered sequence 
$(q_1,\dots, q_n)$ of $n$ distinct elements of $\S \subset \ginv$,  
we can consider the tensor defined by 
$$
\tau(\ell) :=q_1\otimes q_2+ q_2 \otimes q_3 + \cdots 
+q_{n-1} \otimes q_n+q_n\otimes q_1 
\in \ginv\otimes\ginv, 
$$
which does not change under cyclic permutation of the elements 
in the ordered sequence $(q_1,\dots, q_n)$, 
that is, it only depends on $\ell$.  
In other words $ \tau(\ell) $ is the sum over 
the pairs of the corresponding orbit (\ref{typical-orbit}), 
where each such pair is interpreted as a tensor product. 
Each $\tau(\ell)$ is clearly $\sigma$-invariant, 
that is, $ \tau(\ell) \in \ker(\id-\sigma) $. 
Moreover the set of all $\tau(\ell)$
provides us with a basis for $ \ker(\id-\sigma) $ 
as can be easily checked.  
As we remarked above the cyclic property says that 
$ m_{\Omega(P)} \, \lambda^{\otimes 2}$ annihilates every $\tau(\ell)$. 
So $ \lambda $ satisfies the cyclic property 
if and only if $ m_{\Omega(P)} \, \lambda^{\otimes 2}$ 
annihilates $ \ker(\id-\sigma) $.  
\end{proof}

\subsection{A Geometrical Property of Cyclic Dunkl Displacements}

The following proposition shows us an interesting geometrical property of cyclic Dunkl displacements, namely 
that they turn out to be base space one-forms when evaluated in the canonical $\ad$-invariant generator $[-\e]$. This can thus be interpreted as an ingredient of the base space geometry. It is worth recalling here that in the theory of quantum characteristic classes, in a similar spirit, we consider the cohomology classes of the closed forms on the base, expressible in terms of the connection and its differential.

\begin{prop}
For every cyclic Dunkl 
displacement $\ddisp$ the element $\ddisp_0$ 
defined by 
\begin{equation}
\label{define-ddisp-0}
\ddisp_0:=\sum_{s\in \S}\ddisp(s)=\ddisp(-\e) \in \frak{hor}^1 (P) 
\end{equation}
is right invariant under the right co-action 
$\FWP$. 
Therefore $\ddisp_0$ belongs to the hodc 
$\Omega(M)$ of the "base space". 
\end{prop}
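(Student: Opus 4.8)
The plan is to recognize that the sum defining $\ddisp_0$ collapses to a single evaluation of $\ddisp$ on the distinguished element $[-\e]\in\ginv$, and then to exploit that $[-\e]$ is invariant under the right adjoint co-action together with the covariance already established in Theorem~\ref{cyclic-dunkl-qcd}. First I would use the identity \eqref{sum-S}, which gives $\sum_{s\in\S}[s]=-[\e]=[-\e]$, so that by linearity of the extended map $\ddisp$ one has
$$
\ddisp_0=\sum_{s\in\S}\ddisp(s)=\ddisp\Big(\sum_{s\in\S}[s]\Big)=\ddisp([-\e]),
$$
which is exactly the reformulation recorded in \eqref{define-ddisp-0}.

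The key computation then proceeds in three strokes. Since $\ddisp$ is a QCD by Theorem~\ref{cyclic-dunkl-qcd}, it satisfies the covariance condition $\hat{F}(\ddisp(\theta))=(\ddisp\otimes\id_{\cal{A}})\,\ad(\theta)$ for every $\theta\in\ginv$; I would apply this with $\theta=[-\e]$. Next I would invoke \eqref{e-right-ad-invariant}, which shows that $[\e]$, and hence $[-\e]$, is right adjoint invariant, i.e. $\ad([-\e])=[-\e]\otimes 1_{\cal{A}}$. Combining these two facts yields
$$
\hat{F}(\ddisp_0)=\hat{F}(\ddisp([-\e]))=(\ddisp\otimes\id_{\cal{A}})([-\e]\otimes 1_{\cal{A}})=\ddisp([-\e])\otimes 1_{\cal{A}}=\ddisp_0\otimes 1_{\cal{A}}.
$$
Finally, this is precisely the defining condition \eqref{define-omega-M} for membership in $\Omega(M)$, so $\ddisp_0\in\Omega(M)$ follows at once.

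The only point requiring care---rather than a genuine obstacle---is that the covariance identity of a QCD is phrased with $\hat{F}$ landing in $\Omega(P)\otimes\cal{A}$, which is legitimate here because $\ddisp([-\e])$ is horizontal and hence $\hat{F}$ applied to it lands in $\Omega(P)\otimes\cal{A}\subset\Omega(P)\otimes\Gamma^\wedge$; this is exactly what lets me identify right invariance under $\hat{F}$ with the defining property of $\Omega(M)$. Everything else is a direct assembly of \eqref{sum-S}, \eqref{e-right-ad-invariant}, and Theorem~\ref{cyclic-dunkl-qcd}, with no new estimate or construction needed.
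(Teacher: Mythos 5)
Your proof is correct, and it takes a genuinely different (more conceptual) route than the paper's. The paper proves right invariance by a direct double-sum computation: it expands $\hat{F}(\ddisp_0)=\sum_{s\in\S}\hat{F}(\ddisp(s))$ using the explicit formula \eqref{F-hat-of-ddisp}, interchanges the sums over $s\in\S$ and $k\in G$, re-indexes the inner sum via the permutation property $k\S k^{-1}=\S$, and collapses $\sum_{k\in G}\delta_k=1_{\cal{A}}$. You instead package the same underlying covariance into two previously established facts: the identity \eqref{sum-S} (so that $\ddisp_0=\ddisp([-\e])$ by linearity) and the right adjoint invariance \eqref{e-right-ad-invariant} of $[\e]$, which combined with the abstract QCD intertwining property from Theorem~\ref{cyclic-dunkl-qcd} give $\hat{F}(\ddisp_0)=(\ddisp\otimes\id_{\cal{A}})([-\e]\otimes 1_{\cal{A}})=\ddisp_0\otimes 1_{\cal{A}}$ in one stroke. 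There is no circularity, since both \eqref{sum-S} and \eqref{e-right-ad-invariant} appear in the paper before this proposition. What your argument buys is generality and transparency: it shows that \emph{any} QCD sends \emph{any} $\ad$-invariant element of $\ginv$ into $\Omega(M)$, with the specific statement being the case of the canonical invariant $[-\e]$; the conjugation-invariance of $\S$ is absorbed into the lemma about $[\e]$ rather than re-derived. What the paper's computation buys is self-containedness: it needs neither \eqref{sum-S} nor the $\ad$-invariance of $[\e]$, only the formula \eqref{F-hat-of-ddisp} already derived in the proof that cyclic Dunkl displacements are QCDs. Your closing remark about $\hat{F}$ landing in $\Omega(P)\otimes\cal{A}$ on horizontal elements is also the right point of care, and it is handled correctly.
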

\begin{proof}
We compute 
\begin{align*}
\FWP ( \ddisp_0 ) &= \sum_{s \in \S} \FWP \big( \ddisp(s) \big) 
= \sum_{s \in \S} \sum_{k \in G} 
\ddisp (k s k^{-1}) \otimes \delta_k 
\\
&=  \sum_{k \in G} \sum_{s \in \S}
\ddisp (k s k^{-1}) \otimes \delta_k 
\\
&=  \sum_{k \in G} \sum_{s \in \S}
\ddisp ( s ) \otimes \delta_k 
\\
&=  \sum_{s \in \S}
\ddisp ( s ) \otimes  \sum_{k \in G} \delta_k 
\\
&= \ddisp_0 \otimes 1_{\cal{A}} 
\end{align*}
using 
the identity  \eqref{F-hat-of-ddisp} in the second equality 
and the property  $g \S g^{-1} = \S$ in the fourth equality. 
This proves that $\ddisp_0$ is right invariant. 
The last statement in the proposition follows immediately 
from the definition of $\Omega(M)$. 
\end{proof}

\subsection{The (Quantum) Curvature}

\noindent 
The {\em (quantum) curvature} 
$r_\omega : \cal{A} \to \frak{hor}^2 (P)$ 
of a quantum connection $\omega$ 
is defined~as 
\begin{equation}
\label{define-quantum-curvature}
r_\omega(a) =  d_P \big(\omega \big( \pi(a) \big) \big) 
+ \omega ( \pi(a^{(1)}) ) \, \omega ( \pi(a^{(1)}) ) 
\end{equation}
for all $a\in\cal{A}$,  
where $\phi(a) = a^{(1)} \otimes a^{(2)}$ 
in Sweedler's notation.  

\begin{theorem} 
The curvature $r_\omega$ 
of the cyclic Dunkl connection 
$\omega = \tilde{\omega}+\ddisp$ is given by 
$r_\omega =  r_{\tilde{\omega}}$. 
\end{theorem}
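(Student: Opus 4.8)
The plan is to first establish a general ``curvature transformation formula'' and then to annihilate each of its two pieces using, respectively, the Closed-ness and Cyclic properties of $\ddisp$. Reading the second summand of the curvature \eqref{define-quantum-curvature} as $\omega(\pi(a^{(1)}))\,\omega(\pi(a^{(2)}))$ in Sweedler's notation and substituting $\omega=\tilde{\omega}+\ddisp$, the terms involving only $\tilde{\omega}$ reproduce $r_{\tilde{\omega}}(a)$ exactly, so the whole problem reduces to showing that
$$
r_\omega(a)-r_{\tilde{\omega}}(a)=D_{\tilde{\omega}}\big(\ddisp(\pi(a))\big)+\ddisp(\pi(a^{(1)}))\,\ddisp(\pi(a^{(2)}))
$$
and that both terms on the right vanish. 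I expect this transformation formula to be derivable directly, and it is also close to results in \cite{MichoQPB2} and \cite{MichoQPB3}.

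To obtain the formula I would expand $r_\omega(a)-r_{\tilde{\omega}}(a)$ into the four terms $d_P(\ddisp(\pi(a)))$, the two cross terms $\tilde{\omega}(\pi(a^{(1)}))\,\ddisp(\pi(a^{(2)}))$ and $\ddisp(\pi(a^{(1)}))\,\tilde{\omega}(\pi(a^{(2)}))$, and the quadratic term $\ddisp(\pi(a^{(1)}))\,\ddisp(\pi(a^{(2)}))$. Using the covariant-derivative formula \eqref{define-covariant-derivative} in degree one, namely $D_{\tilde{\omega}}(\varphi)=d_P\varphi+\varphi^{(0)}\tilde{\omega}(\pi(\varphi^{(1)}))$ for $\varphi\in\hor^1(P)$, together with the covariance $\hat{F}(\ddisp(\pi(a)))=\ddisp(\pi(a^{(2)}))\otimes\kappa(a^{(1)})a^{(3)}$ coming from Theorem~\ref{cyclic-dunkl-qcd}, the first three terms should collapse into $D_{\tilde{\omega}}(\ddisp(\pi(a)))$. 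The mechanism is the regularity \eqref{define-regular-connection} of $\tilde{\omega}$: applied to the first cross term it rewrites $\tilde{\omega}(\pi(a^{(1)}))\,\ddisp(\pi(a^{(2)}))$ as a horizontal form carrying $\tilde{\omega}$ on the right of $\ddisp$, and after expanding the $\circ$-action by \eqref{define-circ-action} and using the antipode identity $a^{(1)}\kappa(a^{(2)})=\e(a^{(1)})1$, one piece cancels the second cross term while the other reconstitutes $\ddisp(\pi(a^{(2)}))\,\tilde{\omega}(\pi(\kappa(a^{(1)})a^{(3)}))$. This multi-index counit/antipode bookkeeping on the triple co-product is the main computational obstacle, but it is purely formal once regularity is invoked.

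Granting the formula, the first term dies immediately: the Closed-ness property (iii) of Definition~\ref{define-cyclic-dunkl}, extended linearly to $\ginv$, says precisely $D_{\tilde{\omega}}(\ddisp(\theta))=0$ for all $\theta\in\ginv$, in particular for $\theta=\pi(a)$. For the quadratic term it suffices, by linearity, to treat $a=\delta_g$, where $\phi(\delta_g)=\sum_{xy=g}\delta_x\otimes\delta_y$ gives $\ddisp(\pi(a^{(1)}))\,\ddisp(\pi(a^{(2)}))=\sum_{xy=g}\ddisp([x])\,\ddisp([y])$; here only $x,y\in\S\cup\{\e\}$ contribute, since $[x]=\pi(\delta_x)=0$ otherwise. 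I would split this sum into the part with both $x,y\in\S$ and the part in which one of $x,y$ equals $\e$. The first part equals $\sum\tau(\ell)$ summed over the cyclic lines $\ell$ whose orbits have group invariant $g$ (the pairs $(x,y)\in\S\times\S$ with $xy=g$ being exactly partitioned into those orbits), so it lies in $\ker(\id-\sigma)$ and is annihilated by the Cyclic property \eqref{cyclic-property}. The remaining part is $\ddisp([\e])\,\ddisp([g])+\ddisp([g])\,\ddisp([\e])$ when $g\in\S$, and $\ddisp([\e])\,\ddisp([\e])$ when $g=\e$; both vanish because the product on $\hor(P)\cong\cal{D}$ is the graded-commutative de~Rham product, so these ordinary-symmetric combinations of $1$-forms cancel. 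Combining, $r_\omega(a)=r_{\tilde{\omega}}(a)$ for every $a\in\cal{A}$.

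I would flag two points as the real content. The regularity-driven cancellation of the $\tilde{\omega}$-cross terms is where the hypothesis that $\tilde{\omega}$ be regular is used, and it carries the Sweedler weight of the argument. The subtler conceptual point is that the quadratic term is \emph{not} $\sigma$-invariant in general---its $\e$-contributions obstruct this---so the Cyclic property alone does not kill it; one genuinely needs the graded-commutativity of the horizontal calculus $\hor(P)\cong\cal{D}$ to dispatch those extra terms.
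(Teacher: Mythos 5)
Your proposal is correct and is essentially the paper's own proof: the transformation formula you set out to establish is exactly Eq.~(3.10) of \cite{DS}, which the paper simply cites, and your treatment of the two remaining terms --- the Closed-ness property (extended linearly to $\ginv$) killing $D_{\tilde{\omega}}\big(\ddisp(\pi(a))\big)$, the partition of $\{(g,h)\in\S\times\S \mid gh=k\}$ into $\sigma$-orbits plus the Cyclic property killing the $\S\times\S$ part of the quadratic term, and graded commutativity of de~Rham $1$-forms killing the cross terms involving $[\e]$ --- coincides step for step with the paper's four-case analysis. The only difference is that you re-derive the cited transformation formula from the regularity of $\tilde{\omega}$ and the QCD covariance of $\ddisp$ (and your antipode/counit bookkeeping there is sound), which makes your argument marginally more self-contained but not a different route.
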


\begin{remark}
This theorem says that the cyclic Dunkl displacement 
$ \ddisp $ does not change the non-commutative (or quantum)
geometry described by the quantum curvature. 
Thinking of $ \ddisp $ as a certain type of 
perturbation, we can say 
that the curvature is invariant under such 
a perturbation. 
\end{remark}

\begin{proof} 
The hypothesis is that $ \ddisp $ 
is a cyclic Dunkl displacement with respect 
to $ \tilde{\omega} $. 
According to Equation~(3.10) in \cite{DS} as applied 
in this context, for all $ k \in G $ we have 
\begin{align}
\label{r-omega-curvature}
   r_\omega (\delta_k) &= r_{\tilde{\omega}+\ddisp} (\delta_k) 
   \\
   &=  
   r_{\tilde{\omega}} (\delta_k) 
   + D_{\tilde{\omega}} \big( \ddisp ([\delta_k]) \big) 
   + \ddisp ([\delta_k^{(1)}]) \ddisp ([\delta_k^{(2)}]) 
   \nonumber 
   \\
   &= 
   r_{\tilde{\omega}} (\delta_k) 
   + D_{\tilde{\omega}} \big( \ddisp ([\delta_k]) \big) 
   + \sum_{g h = k} 
   \ddisp ([\delta_g]) \ddisp ([\delta_h]). 
   \nonumber
\end{align} 
We now analyze the images $ \pi ( \delta_k ) = [ \delta_k ] $ 
of the basis $ \{ \delta_k ~|~ k \in G \} $ of $ \cal{A} $ 
under the quantum germs map $ \pi $. 
By Section 13.4 in \cite{Part-II} 
we have these three exclusive and exhaustive cases: 
\begin{itemize}

\item 
$  [\delta_s] $ for $ s \in \S \,  $,
which form 
a basis of $ \ginv $, and so $  [\delta_s] \ne 0 $.

\vskip 0.2cm
\item
$  [ \delta_e ] = - \sum_{s \in \S}  [ \delta_s ] \ne 0$. 

\vskip 0.2cm
\item
$ [ \delta_k ] = 0 $ for $ k \notin \S \cup \{ e \} $. 

\end{itemize}

By the third property we see that 
$ D_{\tilde{\omega}} \big( \ddisp ([\delta_k]) \big) = 0 $
for all $ k \notin \S \cup \{ e \} $. 
And by the first two properties and the 
Closed-ness property of $ \ddisp $ we see that 
$ D_{\tilde{\omega}} \big( \ddisp ([\delta_k]) \big) = 0 $
for all $ k \in \S \cup \{ e \} $. 
Consequently, 
$ D_{\tilde{\omega}} \big( \ddisp ([\delta_k]) \big) = 0 $
for all $ k \in G $. 
So \eqref{r-omega-curvature} becomes 
\begin{equation}
\label{getting-simpler}
r_\omega (\delta_k) = 
 r_{\tilde{\omega}} (\delta_k) 
   + \sum_{g h = k} \ddisp ([\delta_g]) \ddisp ([\delta_h]).
\end{equation}

It is worth observing that our curvature 
formula \eqref{getting-simpler}
can be rewritten for every $a\in\cal{A}$ in a more compact form  as
$r_\omega(a)=
r_{\tilde{\omega}}(a)+
m_{\Omega(P)} 
\left(\ddisp^{\sim} \left( \phi(a) \right) \right)$, 
where $ \phi $ is the co-multiplication of $ \cal{A} $ 
and $ m_{\Omega(P)}  $ is the multiplication in
$ \Omega(P) $.

Notice that the sum in \eqref{getting-simpler}
 is over all pairs $ (g,h) $ such 
that $ g h = k $, where $ k $ is the group element 
appearing on the left side. 
But by the three properties given just above a term 
in this summation can only be non-zero if both 
$ g \in \S \cup \{ e \} $ as well as 
$ h \in \S \cup \{ e \} $. 
In turn this breaks down into four 
mutually exclusive and exhaustive cases 
for the terms in the summation, 
recalling that $ e \notin \S $. 
We now examine these cases. 
Throughout $ k \in G $ is a given element. 

\noindent 
Case~1:  $ g \in \S $ and $ h \in \S $ 
with $ g h = k $, that is,
we are summing over the set
\begin{equation}
\label{define-M-k}
\cal{M}_k := 
\{ (g,h) \in \S \times \S ~|~ g h =k \} \subset \S \times \S. 
\end{equation}
We write the set $ \cal{M}_k $ as a (disjoint!) union 
of certain orbits in $ \S \times \S $ of the 
flip-over operator $ \sigma $, namely
\begin{equation}
\label{M-k-is-union-of-orbits}
\cal{M}_k = \bigcup \, \{ \cal{O} ~|~ \cal{O} {\rm~is~an~orbit~of~} 
\sigma {\rm ~satisfying~} {\rm Inv}(\cal{O}) = k \}. 
\end{equation}
This is so since clearly $ \cal{O} \subset \cal{M}_k$ 
whenever $ \mathrm{Inv}(\cal{O}) =k $. 
Conversely, if $ (g,h) \in \cal{M}_k $, then the 
$ \sigma $-orbit $ \cal{O} $ of $ (g,h) $ 
lies in $ \cal{M}_k $ and $ \mathrm{Inv}(\cal{O}) =k $. 

But for {\em any} orbit 
$ \cal{O} \leftrightarrow (q_1, \dots , q_n) $ 
of $ \sigma $
we have that
$$
\ddisp(\cal{O}):= 
\sum_{j =1}^n \ddisp ([q_j] ) \, \ddisp ([q_{j+1}]) = 0
$$
by the cyclic property of $ \ddisp $. 
 
Consequently, by the two previous equalities we see that 
$$
 \sum_{(g,h) \in \cal{M}_k} \!\!\!\! \ddisp([\delta_g]) \, \ddisp([\delta_h]) 
 \,\, = 
\!\!\! \sum_{\cal{O} \,:\, {\rm Inv}(\cal{O}) =k} 
\!\!\!\!\!\! \ddisp (\cal{O}) = 0. 
$$

\noindent 
Case~2:  
$ g = e $ and $ h \in \S $. 
We then have $ k = g h = h \in \S $, and so this case 
does not occur for $ k \notin \S $. 
But for $ k \in \S $ we do get one term, namely  
$$ 
\ddisp ([\delta_e]) \ddisp ([\delta_k]). 
$$

\noindent 
Case 3: $ g \in \S $ and $ h = e $. 
Then $ k = g h = g \in \S $ and so again, as in the previous 
case, there are no such terms if $ k \notin \S $. 
However, for $ k \in \S $ we get again just one term 
which now is
$$ 
\ddisp ([\delta_k])  \ddisp ([\delta_e]). 
$$

\noindent 
Case 4: $ g = e $ and $ h = e $. 
So $ k = g h = e \notin \S $. 
So the case is vacuous if $ k \ne e $. 
Note that when $ k = e \notin \S $ we will have 
in general terms from Case~1 as well. 
If $ k = e $, then 
this case gives us exactly one term in the summation, 
namely
$$
   \ddisp ([\delta_e]) \ddisp ([\delta_e]) = 0,
$$
since the product here is actually the wedge product 
of the de Rham differential calculus of elements 
in $ \hor^1 (P) = \cal{D}^1 $, the $ 1 $-forms on $ E $.  

Now we have to add up the results from these four cases. 
If $ k \in \S  $ then we add the results from Cases 1, 2
and 3 getting 
$$
\sum_{g h = k} \ddisp ([\delta_g]) \ddisp ([\delta_h]) =
\ddisp ([\delta_e]) \, \ddisp ([\delta_k])  + 
 \ddisp ([\delta_k]) \, \ddisp ([\delta_e]) = 0,
$$
since again the product in $ \hor^1 (P) $ is anti-commutative.  
If $ k \notin \S$, then we add up the results from 
Cases 1 and 4, and again we get zero. 

Therefore by substituting this into 
\eqref{getting-simpler} we find for all $ k \in G $ that 
$$
 r_\omega (\delta_k) = r_{\tilde{\omega}} (\delta_k). 
$$
Since the $ \delta_k $'s form a vector space basis 
of $ \cal{A} = \cal{F}(G) $, we conclude 
that $  r_\omega = r_{\tilde{\omega}} $. 
\end{proof}

We have an immediate consequence: 
\begin{corollary}
Under the hypothoses of the previous theorem 
we have that the cyclic Dunkl connection 
$ \omega $ has curvature zero provided that 
the initial connection $ \tilde{\omega} $ 
has curvature zero. 
\end{corollary}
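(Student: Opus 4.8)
The plan is to invoke the preceding theorem essentially verbatim. That theorem establishes the equality of maps $r_\omega = r_{\tilde{\omega}}$, where both sides are linear maps $\cal{A} \to \frak{hor}^2 (P)$ associated to the cyclic Dunkl connection $\omega = \tilde{\omega} + \ddisp$ and to its initial connection $\tilde{\omega}$, respectively. The hypothesis of the corollary is precisely that $r_{\tilde{\omega}} = 0$, the zero map. Combining these two facts by transitivity of equality yields $r_\omega = r_{\tilde{\omega}} = 0$, so that $\omega$ has curvature zero, as claimed.

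I expect there to be no genuine obstacle here: all of the substantive work — namely tracking how the cyclic Dunkl displacement $\ddisp$ contributes the correction terms $\sum_{gh=k} \ddisp([\delta_g])\,\ddisp([\delta_h])$ to the curvature, and showing through the case analysis that every such contribution cancels by the cyclic property and the anti-commutativity of $\hor^1(P) = \cal{D}^1$ — has already been carried out in the proof of the previous theorem. What remains is only the one-line specialization of the identity $r_\omega = r_{\tilde{\omega}}$ to the situation in which the right-hand side vanishes.

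It is worth recording in the proof why the hypothesis $r_{\tilde{\omega}} = 0$ is a natural and frequently satisfied condition, so that the corollary has genuine content. By Theorem~\ref{D-tilde-omega-is-de-Rham} the covariant derivative $D_{\tilde{\omega}}$ coincides with the complexified de~Rham differential $D_{dR}$ on $\cal{D} = \frak{hor}(P)$, so $r_{\tilde{\omega}}$ is governed by the ordinary de~Rham geometry of the underlying manifold $E$. In particular, whenever $E$ is flat — as in the motivating example where $E = V \setminus \bigcup_{\alpha \in R} H_\alpha$ is an open subset of Euclidean space — one has $r_{\tilde{\omega}} = 0$, and the corollary then recovers the characteristic zero-curvature property of Dunkl connections from \cite{DS} in this generalized setting.
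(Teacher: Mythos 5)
Your core argument is correct and is exactly the paper's: the corollary is presented there as an immediate consequence of the preceding theorem, and combining the identity $r_\omega = r_{\tilde{\omega}}$ with the hypothesis $r_{\tilde{\omega}} = 0$ is all there is to it.

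One caution about your final paragraph, which is supplementary but contains a misconception: the vanishing of $r_{\tilde{\omega}}$ has nothing to do with flatness of $E$. The quantum curvature \eqref{define-quantum-curvature} is not the Riemannian curvature of the manifold; the paper's very next theorem shows that $r_{\tilde{\omega}} = 0$ holds for the connection $\tilde{\omega}$ of \eqref{define-omega-tilde} on \emph{any} $E$, via the computation $r_{\tilde{\omega}}(a) = 1_{\cal{B}} \otimes \bigl( d^{\wedge}\pi(a) + \pi(a^{(1)})\pi(a^{(2)}) \bigr) = 0$, which reduces to the Maurer--Cartan identity --- a purely Hopf-algebraic fact about $\Gamma^\wedge$, independent of the geometry of $E$. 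So the hypothesis of the corollary is satisfied automatically for this $\tilde{\omega}$, not merely in the flat or Euclidean cases.
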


Because of this Corollary, the next result is relevant. 
\begin{theorem}
The connection $ \tilde{\omega} $ defined  
in \eqref{define-omega-tilde} has zero curvature.  
\end{theorem}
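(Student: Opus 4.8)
The plan is to compute the curvature $r_{\tilde{\omega}}(a)$ directly from the definition \eqref{define-quantum-curvature} and show that it vanishes for every $a\in\cal{A}$; since $\{\delta_k \mid k\in G\}$ is a vector-space basis of $\cal{A}=\cal{F}(G)$, this yields $r_{\tilde{\omega}}=0$. Writing $[a]:=\pi(a)$ and inserting $\tilde{\omega}(\theta)=1_{\cal{B}}\otimes\theta$ from \eqref{define-omega-tilde}, the two summands become $d_P\bigl(1_{\cal{B}}\otimes[a]\bigr)$ and the product $\bigl(1_{\cal{B}}\otimes[a^{(1)}]\bigr)\bigl(1_{\cal{B}}\otimes[a^{(2)}]\bigr)$ formed in $\Omega(P)$. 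My goal is to recognize that after simplification $r_{\tilde{\omega}}(a)=1_{\cal{B}}\otimes\bigl(d[a]+[a^{(1)}][a^{(2)}]\bigr)$ inside $\mathcal{D}^0\otimes\ginv^{\wedge 2}$, and then to kill the bracket via the Maurer--Cartan structure equation.

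First I would dispose of both operations on elements of the form $1_{\cal{B}}\otimes\eta$. The multiplication on $\Omega(P)=\mathcal{D}\otimes\ginv^{\wedge}$ defined in Appendix~B twists the left-hand factor through the right co-action ${}_{\mathcal{D}}\Phi$ of its $\mathcal{D}$-component; because ${}_{\mathcal{D}}\Phi(1_{\cal{B}})=1_{\cal{B}}\otimes 1_{\cal{A}}$ this twist is trivial here, so the product collapses to $1_{\cal{B}}\otimes\bigl([a^{(1)}][a^{(2)}]\bigr)$, where the product is now taken in $\ginv^{\wedge}\subset\Guni$. For the same reason $d_P$ acts on $1_{\cal{B}}\otimes[a]$ only through its internal piece: the de~Rham part differentiates the constant $1_{\cal{B}}$ to $0$, while the cross term of $d_P$ carries the factor $\varphi^{(0)}\otimes\pi(\varphi^{(1)})$, which for $\varphi=1_{\cal{B}}$ equals $1_{\cal{B}}\otimes\pi(1_{\cal{A}})=0$ since $\pi(1_{\cal{A}})=0$. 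Hence $d_P\bigl(1_{\cal{B}}\otimes[a]\bigr)=1_{\cal{B}}\otimes d[a]$ with $d$ the differential of $\Guni$, and the boxed expression above follows.

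It then remains to establish the Maurer--Cartan equation $d[a]=-[a^{(1)}][a^{(2)}]$ in $\Guni$. I would derive it from the standard presentation $[a]=\kappa(a^{(1)})\,d a^{(2)}$ of the quantum germ together with $d^2=0$: differentiating gives $d[a]=d\kappa(a^{(1)})\,d a^{(2)}$, and expanding $d\kappa(a^{(1)})$ and $d a^{(2)}$ by \eqref{diff-id} and collapsing the antipode--counit combinations via the Hopf-algebra axioms reassembles the right-hand side as $-[a^{(1)}][a^{(2)}]$. Equivalently, one may use the inner-derivation description $d\psi=-\bigl([\e]\psi-(-1)^{\partial\psi}\psi[\e]\bigr)$ together with the identity $[a^{(1)}][a^{(2)}]=[\e][a]+[a][\e]$, the latter obtained by applying $\e\otimes\id$ to the left-invariant expansion of the relation $0=d(da)$ with $da=a^{(1)}[a^{(2)}]$. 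Substituting $d[a]=-[a^{(1)}][a^{(2)}]$ into the boxed expression gives $r_{\tilde{\omega}}(a)=1_{\cal{B}}\otimes\bigl(-[a^{(1)}][a^{(2)}]+[a^{(1)}][a^{(2)}]\bigr)=0$, as desired.

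The step I expect to be the main obstacle is the second one: rigorously justifying that both the product of $\Omega(P)$ and the differential $d_P$ reduce to their $\ginv^{\wedge}$-internal versions on elements $1_{\cal{B}}\otimes\eta$. This rests entirely on the twisting built into the Appendix~B operations, and the whole argument hinges on the two clean facts ${}_{\mathcal{D}}\Phi(1_{\cal{B}})=1_{\cal{B}}\otimes 1_{\cal{A}}$ and $\pi(1_{\cal{A}})=0$. Once these reductions are secured, what is left is precisely the Maurer--Cartan equation for the calculus over $G$, which is a routine and citable Hopf-algebraic identity.
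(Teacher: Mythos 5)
Your proposal is correct and takes essentially the same approach as the paper: substitute \eqref{define-omega-tilde} into \eqref{define-quantum-curvature}, use the Appendix~B definitions of the product and $d_P$ to reduce everything to $1_{\cal{B}}\otimes\bigl(d^{\wedge}\pi(a)+\pi(a^{(1)})\pi(a^{(2)})\bigr)$, and kill this via the Maurer--Cartan identity, the only difference being that the paper simply cites that identity (Section~10.3 of \cite{Part-II}) while you sketch its derivation from $d^2=0$, the graded Leibniz rule and \eqref{diff-id}, which is valid in any hodc and hence in $\Guni$. One caveat: your ``equivalent'' second route via the graded inner-derivation formula $d\psi=-\bigl([\e]\psi-(-1)^{\partial\psi}\psi[\e]\bigr)$ is established in the paper only for $\Gamma^{\sim}$, where $[\e]^2=0$ holds by construction, and not for the universal calculus $\Guni$ actually used in $\Omega(P)$ (where $[\e]^2$ need not vanish), so the first derivation is the one to rely on.
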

\begin{proof}
This is a straightforward calculation. 
For $ a \in \cal{A} $ we have 
\begin{align*}
\tilde{r}_{\omega} (a) &= d_P (\omega(\pi(a))) 
+ \omega(\pi(a^{(1)}) \, \omega(\pi(a^{(2)}) 
\\
&= d_P (1_{\cal{B}} \otimes \pi(a)) 
+ (1_{\cal{B}} \otimes \pi(a^{(1)}) ) \, 
  (1_{\cal{B}} \otimes \pi(a^{(2)}) ) 
\\
&= 1_{\cal{B}} \otimes d^{\wedge} \pi(a) 
+  1_{\cal{B}} \otimes \pi(a^{(1)}) \pi(a^{(2)}) 
\\
&= 1_{\cal{B}} \otimes 
   \big (d^{\wedge} \pi(a) 
          + \pi(a^{(1)}) \pi(a^{(2)}) 
   \big) = 0.
\end{align*}
In the last equality we used the Maurer-Cartan identity.
(See Section~10.3 in \cite{Part-II}.) 
We also used the definitions given in Appendix~B of 
$ d_{P} $ and of the multiplication in $ \Omega (P) $.  
\end{proof}

\subsection{Multiplicativity of Cyclic Dunkl  Connections}

Another very important property of cyclic 
Dunkl connections is that they are 
always multiplicative relative 
to the acceptable hodc $\ginv^{\sim}$. 
Here is the general definition: 

\begin{definition}
Suppose that $\Gamma^{\square}$ 
is an acceptable hodc 
extending a bicovariant, $ * $-covariant fodc 
$d : \cal{A} \to \Gamma$. 
Let $P = (\mathcal{B}, \mathcal{A}, F)$ be a QPB
with hodc $ \Omega(P) $. 
Then we say that a 
quantum connection $\omega : \ginv \to \Omega^1 (P)$ is 
{\em multiplicative relative to the hodc $\Gamma^{\square}$}
if $ \omega $ extends to a unital, 
multiplicative morphism
$
\omega^{\square} : \Gamma^{\square}_{\mathrm{inv}} 
\to \Omega(P).
$
\end{definition}

\begin{remark}
\label{multiplicative-remark}
In the special case of 
this paper  $P = (\mathcal{B}, \mathcal{A}, F)$  is
a QPB for the finite group $ G $, 
where $ \mathcal{A} = \mathcal{F} (G) $. 
and we define the hodc to be 
$ \Omega(P) := \mathcal{D} \otimes \ginv^{\square} $ 
instead of using \eqref{define-Omega-P}.

As another comment, we note that if $ \omega^{\square} $
exists, then it is unique and consequently if 
$ \omega $ is a $ * $-morphism, then $ \omega^{\square} $
also is a $ *$-morphism.

If $\Omega(P) = \mathcal{D} \otimes \ginv^{\wedge}$, 
where $ \Guni $ is the universal envelope of $ \Gamma $, 
then the quantum connection 
$\omega : \ginv \to \Omega^1 (P)$ 
is multiplicative if and only if 
$$  
\omega ( \pi(a^{(1)}) ) \, \omega ( \pi(a^{(2)}) ) 
 = 0 \in \Omega^2 (P) 
$$
for all $a \in \cal{R}$, the right ideal in 
$\ker \, \e \subset \mathcal{A}$ 
used to define the fodc $ \Gamma $. 
Here we are using Sweedler's notation 
$\phi(a) = a^{(1)} \otimes a^{(2)}$. 
This non-trivial result is shown in \cite{MichoQPB1}. 
\end{remark}

\begin{theorem}
The connection $ \tilde{\omega} $ 
defined in \eqref{define-omega-tilde} is multiplicative 
relative to the universal differential envelope $ \Guni $ of $\Gamma$.  
\end{theorem}
\begin{proof}
Using the previous remark, we take 
$a \in \cal{R} \subset \cal{A}$ 
and calculate using  \eqref{define-omega-tilde} 
and the definition of multiplication in 
$ \Omega(P) $ as defined in Appendix~B to get 
\begin{multline*}
\tilde{\omega} ( \pi(a^{(1)}) ) \, \tilde{\omega} ( \pi(a^{(2)}) ) = 
 (1_{\cal{B}} \otimes \pi(a^{(1)}) ) \, 
  (1_{\cal{B}} \otimes \pi(a^{(2)}) ) 
\\
= 
 1_{\cal{B}} \otimes \pi(a^{(1)}) \pi(a^{(2)}) 
= - 1_{\cal{B}} \otimes d^{\wedge} \pi(a) = 0, 
\end{multline*}
since 
$ \cal{R} \subset \ker \, \pi $. 
Again, we used the Maurer-Cartan identity. 
\end{proof}

\subsection{The Main Theorem}

\noindent 
We now collect the results proved above 
into the main theorem of this paper: 
\begin{theorem}
Let $ P =( C^\infty (E), \cal{F} (G), F ) $ be a QPB 
with finite structure group 
$ G $ and right co-action $ F $ induced by a right 
action of $ G $ on $ E $. 
Let $ ( \Omega(P), \Gamma^\wedge , \FWP ) $ be the 
hodc associated to an fodc
$ d : \mathcal{A} \to \Gamma $, as defined above. 
Let $ \tilde{\omega} $ be the regular quantum connection 
defined in \eqref{define-omega-tilde}. 

Then, for every cyclic Dunkl displacement 
$ \ddisp : \ginv \to \hor^1 (P) $ 
with respect to $ \tilde{\omega} $, the 
cyclic Dunkl connection $ \omega = \tilde{\omega} + \ddisp $ has curvature zero, 
$ r_{\omega} \equiv 0 $. 

Moreover, the coordinate cyclic Dunkl operators 
$ \{ D_{\omega}^{j} ~|~ j = 1, \dots, n = \dim \, E \} $, 
defined in \eqref{gen-dunkl-operators} 
as the coordinates of the covariant 
derivative $ D_{\omega} $
of the quantum connection $ \omega $, 
commute among themselves. 
\end{theorem}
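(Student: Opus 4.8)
The vanishing of the curvature requires no new argument. The theorem computing $r_{\omega}$ for a cyclic Dunkl connection gives $r_{\omega} = r_{\tilde{\omega}}$, and the subsequent theorem shows $r_{\tilde{\omega}} = 0$; equivalently one invokes the preceding corollary together with the flatness of $\tilde{\omega}$. Either way $r_{\omega} \equiv 0$, and the entire remaining task is to pass from this flatness to the commutativity of the coordinate operators $D_{\omega}^{j}$ of \eqref{gen-dunkl-operators}. The mechanism is the familiar one: flatness forces the square of the covariant derivative to vanish, and that square, read off in the distinguished local frame, is precisely the matrix of commutators.

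First I would establish that $D_{\omega}^{2} = 0$ on $\mathcal{D}^{0} = C^{\infty}(E)$. Writing $\eta_{s} := \ddisp(\pi(\delta_{s}))$ and using the Sweedler notation ${}_{\mathcal{D}}\Phi(\varphi) = \varphi^{(0)} \otimes \varphi^{(1)} = \sum_{g} \varphi_{g} \otimes \delta_{g}$ of \eqref{swee-note}, the plan is to apply the explicit covariant-derivative formula \eqref{D-omega-varphi} twice. The Closed-ness property $D_{\tilde{\omega}}(\eta_{s}) = 0$ makes each $\eta_{s}$ a closed $1$-form, so the $d_{P}$-term in the second application drops, while the Covariance property, in the form $s \cdot \eta_{t} = \eta_{s t s^{-1}}$, lets one reindex the pulled-back terms. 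A short reorganization then collapses the resulting double sum over $S \times S$ into $-\varphi^{(0)} r_{\omega}(\varphi^{(1)}) = -\sum_{g \in G} \varphi_{g}\, r_{\omega}(\delta_{g})$, exactly as in the identity underlying Equation~(3.10) of \cite{DS}. Since $r_{\omega} \equiv 0$, this yields $D_{\omega}^{2}(\varphi) = 0$ for every $\varphi \in C^{\infty}(E)$.

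The final step is a purely local computation in the coordinates $x_{1}, \dots, x_{n}$. By the Dunkl gradient decomposition \eqref{define-Dunkl-gradient} we have $D_{\omega}(\varphi) = \sum_{j} D_{\omega}^{j}(\varphi)\, d x_{j}$ for $\varphi \in \mathcal{D}^{0}$. Applying \eqref{D-omega-varphi} once more in degree $k = 1$ --- where $D_{\tilde{\omega}}$ is the de~Rham differential, each $d x_{j}$ is closed and $G$-invariant (being pulled back from $U/G$), and $\eta_{s} = \sum_{i} \lambda_{i}(\pi(\delta_{s}))\, d x_{i}$ --- one finds
\[
    D_{\omega}^{2}(\varphi) = \sum_{a < b} [\, D_{\omega}^{a}, D_{\omega}^{b}\,](\varphi)\, d x_{a} \wedge d x_{b}.
\]
The point of the calculation is that the antisymmetrization imposed by $d x_{a} \wedge d x_{b} = - d x_{b} \wedge d x_{a}$ recombines the first-order operators of \eqref{gen-dunkl-operators} into their commutator, the difference terms $(\varphi - \varphi_{s})$ passing cleanly through the expansion thanks to the $G$-invariance of the frame. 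Since $\{\, d x_{a} \wedge d x_{b} \mid a < b \,\}$ is linearly independent over $C^{\infty}(U)$ and the left-hand side vanishes, every coefficient $[\, D_{\omega}^{a}, D_{\omega}^{b}\,](\varphi)$ is zero; as $\varphi$ was arbitrary, the operators commute.

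The main obstacle is the second paragraph: if one is unwilling to quote the identity $D_{\omega}^{2}(\varphi) = -\varphi^{(0)} r_{\omega}(\varphi^{(1)})$ from the general theory of quantum principal bundles, it must be rederived, and that rederivation repeats the case analysis over factorizations $g h = k$ and the cyclic property that already powered the curvature theorem. In other words, the real content is verifying that the twofold application of \eqref{D-omega-varphi} genuinely collapses onto the curvature via the relations $\sum_{s t = k} \eta_{s} \wedge \eta_{t} = 0$; the coordinate bookkeeping of the third paragraph is routine by comparison, demanding only care in tracking how $G$ acts on the local frame.
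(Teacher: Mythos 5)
Your proposal is correct and follows essentially the same route as the paper: curvature zero is quoted from the two preceding results, then the identity $D_\omega^2(\varphi) = -\varphi^{(0)}\, r_\omega(\varphi^{(1)})$ gives $D_\omega^2 = 0$, and finally the local expansion $D_\omega^2(\varphi) = \sum_{j<k}\bigl(D_\omega^j D_\omega^k - D_\omega^k D_\omega^j\bigr)(\varphi)\, d x_j \wedge d x_k$ together with linear independence of the $d x_j \wedge d x_k$ yields commutativity. The only divergence is in how the key identity is obtained: the paper derives $D_\omega^2(\varphi) = -\varphi^{(0)} r_\omega(\varphi^{(1)})$ abstractly from the definition \eqref{define-covariant-derivative}, the covariance of $D_\omega$, the graded Leibniz rule and $d_P^2 = 0$ (so no second pass through the cyclic-property combinatorics is needed), and it justifies the step $D_\omega(\varphi\, d x_k) = D_\omega(\varphi)\, d x_k$ by a side computation with $\hat{F}$-invariant forms in $\Omega(M)$ --- precisely the two points you flagged as the delicate ones.
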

\begin{proof}
Only the last statement remains to be proved. 
First off, we note 
for all homogeneous elements 
$ \varphi \in \cal{D}^k$ that 
we obtain a general formula relating 
the square of the covariant derivative 
with the curvature:
\begin{align*}
D_{\omega}^{2} (\varphi) &=
d_{P} (D_{\omega} \varphi) - 
(-1)^{\partial (D_{\omega}\varphi)} 
(D_{\omega} \varphi)^{(0)} 
\omega([ (D_{\omega}\varphi)^{(1)} ])
\\
&=
d_{P} (D_{\omega} \varphi) - (-1)^{1 +\partial \varphi} 
(D_{\omega} \varphi^{(0)})  
\omega([ \varphi^{(1)} ])
\\
&=
d_{P} (D_{\omega} \varphi) + (-1)^{\partial \varphi} 
(D_{\omega} \varphi^{(0)})  
\omega([ \varphi^{(1)} ])
\\
&= d_{P}^{2} \varphi - (-1)^{\partial \varphi} 
d_{P} \big( \varphi^{(0)} \omega([ \varphi^{(1)} ]) \big)
\\
& \,\,\,\,\,\, + (-1)^{\partial \varphi} 
\big\{  
d_{P} \varphi^{(0)} - (-1)^{\partial \varphi^{(0)}} 
\varphi^{(00)} \omega ([\varphi^{(01)}])
\big\} 
\omega([ \varphi^{(1)} ]) 
\\
&= d_{P}^{2} \varphi - (-1)^{\partial \varphi} 
d_{P} \big( \varphi^{(0)} \big) \omega([ \varphi^{(1)} ]) 
- (-1)^{\partial \varphi + \partial \varphi^{(0)}} 
\varphi^{(0)} d_{P} \big( \omega([ \varphi^{(1)} ]) \big)
\\
& \,\,\,\,\,\, + (-1)^{\partial \varphi} 
\big\{  
d_{P} \varphi^{(0)} - (-1)^{\partial \varphi^{(0)}} 
\varphi^{(00)} \omega ([\varphi^{(01)}])
\big\} 
\omega([ \varphi^{(1)} ]) 
\\
&= 
-\varphi^{(0)} d_{P} \big( \omega([ \varphi^{(1)} ]) \big) 
- \varphi^{(00)} \omega ([\varphi^{(01)}])
\omega([ \varphi^{(1)} ]) 
\\
&= 
-\varphi^{(0)} d_{P} \big( \omega([ \varphi^{(1)} ]) \big) 
- \varphi^{(0)} \omega ([\varphi^{(11)}])
\omega([ \varphi^{(12)} ]) 
\\
&= 
-\varphi^{(0)} 
\Big(  
d_{P} \big( \omega([ \varphi^{(1)} ]) \big) 
+ \omega ([\varphi^{(11)}])
\omega([ \varphi^{(12)} ]) 
\Big)
\\
&= 
-\varphi^{(0)} r_{\omega} (\varphi^{(1)}), 
\end{align*}
where we used the definition 
\eqref{define-covariant-derivative}
of $ D_{\omega} $ thrice, 
the covariance of $ D_{\omega} $ 
in the second equality
(Theorem 12.12 in \cite{Part-II}), 
$ \partial \varphi = \partial \varphi^{(0)} $, 
the graded Liebniz rule for~$ d_{P} $, 
$ d_{P}^{2} = 0 $, the co-action property
in Sweedler's notation and the 
definition \eqref{define-quantum-curvature} 
of the curvature $ r_{\omega} $. 
So we have $  D_{\omega}^{2} (\varphi) = 0 $, 
since $ r_{\omega} \equiv 0 $. 

But on the other hand a direct computation 
using \eqref{define-Dunkl-gradient} 
for $ \varphi \in C^\infty (E) $ gives 
\begin{multline*}
D_{\omega}^{2} (\varphi) = 
\sum_{k=1}^{n} D_{\omega} 
\left( D_{\omega}^{k} (\varphi \, d x_{k} ) \right) 
=
\sum_{k=1}^{n} D_{\omega} 
\left( D_{\omega}^{k} (\varphi) \right)  d x_{k} 
\quad \begin{pmatrix}\text{more about}\\ \text{this step later}\end{pmatrix}
\\
=
\sum_{j,k=1}^{n} 
D_{\omega}^{j} D_{\omega}^{k} (\varphi) 
\, d x_{j} \wedge d x_{k} 
=
\sum_{1 \le j<k \le 1}^{n} 
\left( D_{\omega}^{j} D_{\omega}^{k} (\varphi) -
D_{\omega}^{k} D_{\omega}^{j} (\varphi) \right) 
d x_{j} \wedge  d x_{k} .
\end{multline*}
For the last equality we used that the horizontal 
forms are the complexified de~Rham differential forms. 
Now the forms  $ d x_{j} \wedge d x_{k} $ for $ j < k $
are linearly independent, and so it follows 
from  $  D_{\omega}^{2} (\varphi) = 0 $ that 
$$
 D_{\omega}^{j} D_{\omega}^{k} (\varphi) = 
 D_{\omega}^{k} D_{\omega}^{j} (\varphi)
$$
for {\em all} $ j,k \in \{ 1, \dots, n \} $ 
and for all $ \varphi \in \cal{D}^0 = C^\infty (E) $. 

However, we still owe the reader more details about the 
second step in the above calculation. 
Among other things this depends on the fact 
that $ d x_{k} $ is a horizontal form, that 
is an element in $ \Omega(M) $ or in other 
words that it is $ \FWP $ invariant. 
(See \eqref{define-omega-M}). 
So we examine some relations for a homogeneous form
$ \w \in \Omega(M) $ and later specialize 
to the case when $ \w = d x_{k} $. 
We first note for all $ \theta \in \ginv $
that
\begin{equation}
\label{omega-theta-slash-w}
 \omega (\theta) \w = 
 (-1)^{\partial \w} \, \w \, \omega (\theta), 
\end{equation}
which follows from the definition of 
multiplication in $ \Omega (P) $. 
Next, we compute  
for any $ \varphi \in \mathcal{D}^k \cong \hor^{k}(P) $
that 
\begin{align*}
D_{\omega} (\varphi \w) &= d_{P} (\varphi \w) 
- (-1)^{\partial \varphi + \partial \w} 
(\varphi \w)^{(0)} \omega ( [ (\varphi \w)^{(1)} ])
\\
&= d_{P} (\varphi \w) 
- (-1)^{\partial \varphi + \partial \w} 
\varphi^{(0)} \w \, \omega ( [ \varphi^{(1)} ])
\\
&= d_{P} (\varphi \w) 
- (-1)^{\partial \varphi} 
\varphi^{(0)} \omega ( [ \varphi^{(1)} ]) \w 
\\
&= d_{P} (\varphi) \, \w 
+ (-1)^{\partial \varphi} \varphi \, d_{P} \w  
- (-1)^{\partial \varphi} 
\varphi^{(0)} \omega ( [ \varphi^{(1)} ]) \, \w 
\\
&=
D_{\omega} (\varphi) \, \w 
+ (-1)^{\partial \varphi} \varphi \, d \w  
\end{align*}
where we used the definition of $ D_{\omega} $, 
the formula for the co-action $ \FWP $ 
acting on $ \varphi \w $ in the second equality
(see Appendix~B), 
\eqref{omega-theta-slash-w} in the third
equality,
the Leibniz rule for $ d_{P} $, the fact that  
$ d_{P} $ on horizontal forms reduces to the 
de~Rham differential $ d $
and finally the definition of $ D_{\omega} $ again. 
To conclude we take $ \w = d x_{k} $ 
as we indicated earlier. 
So we get $ d \, \w = d d x_{k} = 0 $
and therefore
$$
D_{\omega} (\varphi \, d x_{k})  =  
D_{\omega} (\varphi ) \, d x_{k},
$$
which justifies the second step in the above argument.  
\end{proof}

\begin{remark} 
The main result of \cite{DS} can be understood as a consequence 
of the Dunkl QCD given here in \eqref{define-coxeter-ddisp} 
as being a cyclic Dunkl displacement. 
\end{remark}

\subsection{Some More on Multiplicativity}

\begin{theorem} 
Every cyclic Dunkl connection 
$ \omega = \tilde{\omega} + \ddisp $ 
with $ \tilde{\omega} $ multiplicative  
relative to $ \Guni $
is itself multiplicative relative to $ \Guni $. 
\end{theorem}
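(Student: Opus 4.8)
The plan is to reduce everything to the characterization of multiplicativity relative to $\Guni$ recorded in Remark~\ref{multiplicative-remark}. Since here $\Omega(P) = \mathcal{D} \otimes \ginv^{\wedge}$ is built from the universal hodc $\Guni$, that remark tells us $\omega$ is multiplicative relative to $\Guni$ if and only if
\begin{equation*}
\omega(\pi(a^{(1)})) \, \omega(\pi(a^{(2)})) = 0 \in \Omega^2 (P)
\end{equation*}
for every $a \in \cal{R}$, where $\phi(a) = a^{(1)} \otimes a^{(2)}$. So the entire proof amounts to verifying this single quadratic identity on $\cal{R}$, and I would not attempt to build $\omega^{\square}$ by hand.

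First I would observe that on $\cal{R}$ the curvature collapses to exactly this quadratic term. Indeed, since $\ker \pi = \cal{R} + \mathbb{C}\,1_{\cal{A}}$ we have $\cal{R} \subset \ker \pi$, so $\pi(a) = 0$ for every $a \in \cal{R}$. Hence the first summand $d_P(\omega(\pi(a)))$ in the curvature formula \eqref{define-quantum-curvature} vanishes, and for all $a \in \cal{R}$ one obtains
\begin{equation*}
r_\omega(a) = \omega(\pi(a^{(1)})) \, \omega(\pi(a^{(2)})),
\qquad
r_{\tilde{\omega}}(a) = \tilde{\omega}(\pi(a^{(1)})) \, \tilde{\omega}(\pi(a^{(2)})),
\end{equation*}
the very same reduction applying verbatim to $\tilde{\omega}$ because it too kills the $d_P$-term on $\cal{R}$.

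Next I would invoke the two results already established in the excerpt. Because $\omega = \tilde{\omega} + \ddisp$ is a cyclic Dunkl connection, the curvature-invariance theorem proved above gives $r_\omega = r_{\tilde{\omega}}$; combining this with the two displayed reductions yields
\begin{equation*}
\omega(\pi(a^{(1)})) \, \omega(\pi(a^{(2)})) = \tilde{\omega}(\pi(a^{(1)})) \, \tilde{\omega}(\pi(a^{(2)}))
\end{equation*}
for all $a \in \cal{R}$. Since $\tilde{\omega}$ is, by hypothesis, multiplicative relative to $\Guni$, the same characterization from Remark~\ref{multiplicative-remark} makes the right-hand side zero for every $a \in \cal{R}$. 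Therefore the left-hand side vanishes on $\cal{R}$, which is precisely the condition for $\omega$ to be multiplicative relative to $\Guni$, completing the argument.

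The argument is short because it recycles the hard analysis already carried out in the curvature computation; the only point demanding care---and the step I would double-check---is the reduction of the curvature to its quadratic term on $\cal{R}$. This rests on the identity $\ker \pi = \cal{R} + \mathbb{C}\,1_{\cal{A}}$ to annihilate the $d_P$-contribution, together with the observation that this reduction holds \emph{simultaneously} for $\omega$ and for $\tilde{\omega}$, so that the equality $r_\omega = r_{\tilde{\omega}}$ transfers directly to an equality of the quadratic terms rather than of expressions still entangled with differentials.
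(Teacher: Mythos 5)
Your proof is correct, but it takes a genuinely different route from the paper's. The paper verifies the quadratic condition of Remark~\ref{multiplicative-remark} head-on: it expands $\omega([a^{(1)}])\,\omega([a^{(2)}])$ into four terms, kills $\tilde{\omega}\tilde{\omega}$ by the multiplicativity hypothesis, kills $\ddisp\ddisp$ by the cyclic property via the orbit decomposition of $\cal{M}_g$ (working on the basis $\{\delta_g\}$ of $\cal{R}$), and then shows the two cross terms $\tilde{\omega}\ddisp$ and $\ddisp\tilde{\omega}$ cancel each other through an antipode computation that uses the regularity of $\tilde{\omega}$ and the QCD covariance of $\ddisp$. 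You instead observe that $\cal{R}\subset\ker\pi$ annihilates the $d_P$-term in \eqref{define-quantum-curvature}, so that on $\cal{R}$ the curvature \emph{is} the multiplicativity obstruction, and then transfer vanishing from $\tilde{\omega}$ to $\omega$ via the already-proven invariance $r_\omega = r_{\tilde{\omega}}$. This is legitimate: the curvature-invariance theorem precedes the multiplicativity theorem in the paper, its proof does not use multiplicativity (so there is no circularity), and its hypotheses (cyclic Dunkl displacement with respect to a regular $\tilde{\omega}$) are all in force here. Your argument is shorter and conceptually sharper---it makes explicit the principle, which the paper only states in a remark \emph{after} its theorem, that the residual curvature terms on $\cal{R}$ are exactly the obstacle to multiplicativity. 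What the paper's computation buys in exchange is self-containedness and finer information: it shows the cross terms cancel identically, exhibiting concretely how regularity and covariance enter, whereas in your proof those same ingredients are still used but are hidden inside the black box of the curvature theorem (whose proof rests on Equation~(3.10) of \cite{DS}, the closed-ness property, and the cyclic property).
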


\begin{proof}
Let 
$\omega^{\otimes}\colon\ginv^{\otimes}
\rightarrow \Omega(P)$ 
be the unital multiplicative extension of $\omega$. 
This is defined analogously to the definition 
given just before Theorem~\ref{cyclic-dunkl-qcd} of 
$ \ddisp^\otimes $. 
According to Remark \ref{multiplicative-remark}
we have to show that 
\begin{equation}
\label{have-to-show}
\omega ( [a^{(1)}] ) \omega ( [a^{(2)}] ) = 
m_{\Omega(P)} \, \omega^{\otimes 2} 
(\pi \otimes \pi) \phi (a) = 0
\end{equation}
holds for 
all $ a \in \mathcal{R} $. 
So we take $ a \in \mathcal{R} $ and 
use $ \omega = \tilde{\omega} + \ddisp $ 
to get 
\begin{align*}
&m_{\Omega(P)} \, \omega^{\otimes 2} 
(\pi \otimes \pi) \phi (a) = 
\omega ( [a^{(1)}] ) \omega ( [a^{(2)}] )
\\
&= 
\tilde{\omega} ( [a^{(1)}] ) 
\tilde{\omega} ( [a^{(2)}] )
\!+\! \tilde{\omega} ( [a^{(1)}] ) \ddisp ( [a^{(2)}] ) 
\!+\! \ddisp ( [a^{(1)}] ) \tilde{\omega} ( [a^{(2)}] ) 
\!+\! \ddisp ( [a^{(1)}] ) \ddisp ( [a^{(2)}] )
\\
&= 
\tilde{\omega} ( [a^{(1)}] ) \ddisp ( [a^{(2)}] ) 
+ \ddisp ( [a^{(1)}] ) \tilde{\omega} ( [a^{(2)}] ) 
+ \ddisp ( [a^{(1)}] ) \ddisp ( [a^{(2)}] ), 
\end{align*}
where we used $ \tilde{\omega} $ is 
multiplicative (Remark \ref{multiplicative-remark})
as well as the compact notation 
$ [\cdot] = \pi(\cdot) $. 
We note that $ \mathcal{R} $ has a basis 
$
\mathcal{B}_{\mathcal{R}} := 
\{\delta_{g} ~|~ g \notin \mathcal{K}_{\mathcal{R}} 
\}
$
by Corollary 13.1 in \cite{Part-II},
where 
$ \mathcal{K}_{\mathcal{R}} := 
\{ h \in G ~|~ f(h)=0 
\mathrm{~for~all~}f \in \mathcal{R} \} $. 
Since \eqref{have-to-show} is linear in 
$ a \in \mathcal{R} $, it suffices to prove it 
for the basis elements 
$ \delta_{g} \in \mathcal{B}_{\mathcal{R}} $. 
Now for $ a = \delta_{g} $ we have
\begin{align*}
a^{(1)} &\otimes a^{(2)} = \phi(\delta_{g}) 
=
\sum_{ hk = g } \delta_{h} \otimes \delta_{k} 
= \!\!\!\!
\sum_{ (h, k) \in \mathcal{M}_{g} } 
\delta_{h} \otimes \delta_{k} 
\\
&= \!\!
\sum_{\mathcal{O}: \mathrm{Inv}(\mathcal{O}) = g}
\sum_{ (h, k) \in \mathcal{O} } 
\delta_{h} \otimes \delta_{k}, 
\end{align*} 
where $ \mathcal{O} $ is a $ \sigma $-orbit and 
$ \mathrm{Inv}(\mathcal{O}) $ is defined in 
\eqref{define-group-invariant}. 
Also, $ \mathcal{M}_{g} $
is defined in \eqref{define-M-k} and 
satisfies \eqref{M-k-is-union-of-orbits}. 
We will use explicit summation notation 
(here with $ h, k \in G $)
instead of Sweedler's when the former is handier. 
It follows that 
$$
\ddisp ( [a^{(1)}] ) \ddisp ( [a^{(2)}] )~~~~ =
\sum_{\mathcal{O}: \mathrm{Inv}(\mathcal{O}) = g}
\,
\sum_{ (h, k) \in \mathcal{O} } 
\ddisp(\delta_{h})  \ddisp (\delta_{k}) = 0, 
$$
since $ \ddisp  $ satisfies the Cyclic Property and 
so the inner sum already is $ 0 $.  
So the equation above reduces to 
\begin{equation}
\label{reduced-eqn}
m_{\Omega(P)} \, \omega^{\otimes 2} 
(\pi \otimes \pi) \phi (a) = 
\tilde{\omega} ( [a^{(1)}] ) \ddisp ( [a^{(2)}] ) 
+ \ddisp ( [a^{(1)}] ) \tilde{\omega} ( [a^{(2)}] ).  
\end{equation}
Next, we analyze the first term on the right side
using $ \tilde{\omega} $ regular 
(see \eqref{define-regular-connection}) to get   
\begin{equation}
\label{omega-tidle-is-regular}
\tilde{\omega} ( [a^{(1)}] ) \ddisp ( [a^{(2)}] ) = 
\sum_{hk = g}
(-1)^{1} \ddisp ([\delta_{k}])^{(0)} 
\tilde{\omega} \big( [\delta_{h}] \circ \ddisp ([\delta_{k}])^{(1)} \big). 
\end{equation}
And next, using that $ \ddisp $ is a QCD in the
second equality, we calculate 
\begin{align*}
\ddisp ([\delta_{k}])^{(0)} &\otimes \ddisp ([\delta_{k}])^{(1)} = 
\FWP(\ddisp ([\delta_{k}]) ) 
= 
(\ddisp \otimes id) \, \mathrm{ad} ([\delta_{k}])
\\
&=
\ddisp ([\delta_{k}^{(2)}]) \otimes \
\kappa (\delta_{k}^{(1)}) \, \delta_{k}^{(3)}
=
\sum_{lmn=k} 
\ddisp ([\delta_{m}]) \otimes \
\kappa (\delta_{l}) \, \delta_{n}, 
\end{align*}
where $ l,m,n \in G $. 
We also used the formula \eqref{ad-ad-relation}
for ad. 
Then \eqref{omega-tidle-is-regular} becomes
\begin{align*}
&\tilde{\omega} ( [a^{(1)}] ) \ddisp ( [a^{(2)}] ) = 
- \!\!\! \sum_{hklm=g}
\ddisp ([\delta_{l}]) \, 
\tilde{\omega} \big( [\delta_{h}] \circ 
\kappa (\delta_{k}) \, \delta_{m} \big)
\\
&= 
- \!\!\! \sum_{hklm=g}
\ddisp ([\delta_{l}]) \, 
\tilde{\omega} \big( [\delta_{h}] \circ 
\delta_{k^{-1}} \, \delta_{m} \big)
\\
&= 
- \!\!\!\!\!\!\! \sum_{hm^{-1}lm=g}
\!\!\!\ddisp ([\delta_{l}]) \, 
\tilde{\omega} \big( [\delta_{h}] \circ \delta_{m} \big)
= 
- \sum_{lh=g}
\ddisp ([\delta_{l}]) \, 
\tilde{\omega} \big( [\delta_{h}] \big)
=
- \ddisp ( [a^{(1)}] ) \tilde{\omega} ( [a^{(2)}] ).
\end{align*}
In the second equality 
we used $ \kappa (\delta_{k}) = \delta_{k^{-1}} $ 
as the reader can verify. 
To get the third equality we summed on $ k $
and for the fourth we summed on $ m $ and 
used the properties of the $ \circ $ operation
mentioned just after
\eqref{Woronowicz-sigma-quadratic-formula}. 
And in the last equality we reverted back to 
Sweedler's notation. 
Substituting back into \eqref{reduced-eqn} we get
the desired result.  
\end{proof}

\begin{corollary} 
	Every cyclic Dunkl connection 
	$ \omega = \tilde{\omega} + \ddisp $ 
	with $ \tilde{\omega} $ multiplicative
	is extendible (in a necessarily unique way) 
	to a unital algebra morphism 
$$
\omega^{\sim}\colon\ginv^{\sim}\rightarrow\Omega(P). 
$$ 
\end{corollary}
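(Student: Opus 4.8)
The plan is to work, as permitted by Remark~\ref{multiplicative-remark}, with the realization $\Omega(P) = \cal{D}\otimes\ginv^{\sim}$, and to show that the unital algebra morphism $\omega^{\otimes}\colon\ginv^{\otimes}\to\Omega(P)$ determined by $\omega$ (defined exactly as $\ddisp^{\otimes}$ was just before Theorem~\ref{cyclic-dunkl-qcd}) descends to $\ginv^{\sim} = \ginv^{\otimes}/\cal{I}_{\sigma\mathrm{inv}}$. Because $\cal{I}_{\sigma\mathrm{inv}}$ is the quadratic ideal generated by $\ker(\id-\sigma)$ and $\omega^{\otimes}$ is an algebra morphism, this descent---which is the very meaning of being extendible to $\omega^{\sim}$---holds as soon as $\omega^{\otimes}$ annihilates the degree-two generators. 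Since the tensors $\tau(\ell)$ attached to the cyclic lines $\ell\leftrightarrow(q_1,\dots,q_n)$ form a basis of $\ker(\id-\sigma)$, it suffices to prove $\sum_{j}\omega(q_j)\omega(q_{j+1}) = 0$ for each cyclic line, all indices read cyclically. This is the exact analogue, for $\omega$, of the criterion that the theorem characterizing the cyclic property established for $\ddisp$.

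I would then expand $\omega = \tilde{\omega}+\ddisp$ and exploit the bigrading $\Omega^2(P) = \bigoplus_{k+l=2}\cal{D}^k\otimes\ginv^{\sim,l}$. The four resulting sums land in pairwise distinct bidegrees: the pure $\ddisp$ term $\sum_j\ddisp(q_j)\ddisp(q_{j+1})$ sits in $\cal{D}^2\otimes\ginv^{\sim,0}$, the two mixed terms in $\cal{D}^1\otimes\ginv^{\sim,1}$, and the pure $\tilde{\omega}$ term in $\cal{D}^0\otimes\ginv^{\sim,2}$, so it is enough to kill each bidegree separately. The pure $\ddisp$ term is precisely $\ddisp(\ell)$, which vanishes by the Cyclic Property~\eqref{cyclic-property}. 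The pure $\tilde{\omega}$ term equals $1_{\cal{B}}\otimes\sum_j [q_j][q_{j+1}]$, the product being taken in $\ginv^{\sim}$; as this second factor is the image of $\tau(\ell)\in\ker(\id-\sigma)$ under $\ginv^{\otimes 2}\to\ginv^{\sim}$, it is zero by construction of $\ginv^{\sim}$. This is exactly where the realization $\cal{D}\otimes\ginv^{\sim}$ is needed in place of $\cal{D}\otimes\ginv^{\wedge}$: a single $\tau(\ell)$ need not be a universal relation, so the corresponding product in $\ginv^{\wedge}$ need not vanish.

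The heart of the argument is the cancellation of the two mixed sums. Here I would invoke the regularity of $\tilde{\omega}$: applying \eqref{define-regular-connection} to $\tilde{\omega}(q_j)\,\ddisp(q_{j+1})$ with the horizontal one-form $\ddisp(q_{j+1})\in\hor^1(P)$, then substituting the covariance $\hat{F}(\ddisp(q_{j+1})) = \sum_{k\in G}\ddisp(k q_{j+1} k^{-1})\otimes\delta_k$ from \eqref{F-hat-of-ddisp} together with the action values $[q_j]\circ\delta_k = \delta_{k,q_j}[q_j]$ recorded after \eqref{Woronowicz-sigma-quadratic-formula}, the sum over $k$ collapses to its single term $k=q_j$ and yields $\tilde{\omega}(q_j)\,\ddisp(q_{j+1}) = -\ddisp(q_j q_{j+1} q_j^{-1})\,\tilde{\omega}(q_j)$. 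The flip-over relation $\sigma(q_j,q_{j+1}) = (q_{j-1},q_j)$ read off from \eqref{define-flip-over} and the orbit description \eqref{typical-orbit} turns this into $\tilde{\omega}(q_j)\,\ddisp(q_{j+1}) = -\ddisp(q_{j-1})\,\tilde{\omega}(q_j)$; summing over $j$ and reindexing $j\mapsto j+1$ exhibits the two mixed sums as exact negatives of one another, so they cancel.

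Collecting the three vanishings gives $\sum_j\omega(q_j)\omega(q_{j+1}) = 0$ for every cyclic line, whence $\omega^{\otimes}$ kills $\ker(\id-\sigma)$ and hence the whole ideal $\cal{I}_{\sigma\mathrm{inv}}$; the induced unital algebra morphism $\omega^{\sim}\colon\ginv^{\sim}\to\Omega(P)$ is the desired extension. Its uniqueness is automatic, since $\ginv^{\sim}$ is generated as an algebra by its degree-one part $\ginv$, on which $\omega^{\sim}$ must coincide with $\omega$. I expect the mixed-term cancellation to be the main obstacle: the regularity identity, the QCD-covariance of $\ddisp$, and the $\circ$-action values must be applied in the correct order, and one must recognize the flip-over relation $q_j q_{j+1} q_j^{-1} = q_{j-1}$ as precisely the identity that makes the reindexed sums align.
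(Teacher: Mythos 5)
Your proposal is correct, and it is in substance a genuinely self-contained proof of a statement that the paper itself never proves separately: in the paper the corollary is presented as an immediate restatement of the preceding theorem, whose proof checks the \emph{universal} relations $(\pi\otimes\pi)\phi(a)$ for $a\in\cal{R}$ (via the criterion of Remark~\ref{multiplicative-remark} from \cite{MichoQPB1}), with the mixed-term cancellation carried out globally as a sum over all factorizations $hk=g$ in $G$. You instead verify directly the defining degree-two relations of $\ginv^{\sim}$, namely $\ker(\id-\sigma)$ with its basis $\{\tau(\ell)\}$, one cyclic line at a time: the bidegree splitting of $\Omega^2(P)=\bigoplus_{k+l=2}\cal{D}^k\otimes\ginv^{\sim,l}$, the Cyclic Property~\eqref{cyclic-property} for the $\cal{D}^2$ component, the quotient construction of $\ginv^{\sim}$ for the $\ginv^{\sim,2}$ component, and the per-line identity $\tilde{\omega}([q_j])\,\ddisp(q_{j+1})=-\ddisp(q_{j-1})\,\tilde{\omega}([q_j])$ (regularity plus \eqref{F-hat-of-ddisp} plus the $\circ$-values) for the mixed components. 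This buys something real: since the universal degree-two relations are only the grouped sums $\sum_{\cal{O}:\,\mathrm{Inv}(\cal{O})=g}\tau(\cal{O})$ with $g\notin\S\cup\{\e\}$, the span of the universal relations is in general a \emph{proper} subspace of $\ker(\id-\sigma)$; hence the corollary is strictly stronger than the theorem and does not follow formally from the theorem's statement alone. Your line-by-line argument supplies exactly the finer information needed, whereas the paper leaves that passage implicit. Conversely, the paper's route has the advantage of resting on an established external criterion and of proving $\Guni$-multiplicativity in the form most directly usable for curvature projections.

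One small caveat: in disposing of the pure $\tilde{\omega}$ term you use the explicit canonical form $\tilde{\omega}(\theta)=1_{\cal{B}}\otimes\theta$ of \eqref{define-omega-tilde}, but the corollary's hypothesis allows any regular multiplicative initial connection. For the general case, replace that step by invoking the hypothesis itself: multiplicativity of $\tilde{\omega}$ relative to $\Gamma^{\sim}$ means precisely that $m_{\Omega(P)}\,\tilde{\omega}^{\otimes 2}$ kills the degree-two relations of $\ginv^{\sim}$, so $m_{\Omega(P)}\,\tilde{\omega}^{\otimes 2}(\tau(\ell))=0$ tautologically. With that one-line adjustment your argument covers the corollary in the generality in which it is stated.
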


\begin{remark} If we use the calculus $\Gamma^\sim$ as the  
hodc over the fodc $\Gamma$, then the above proposition 
can be restated as the multiplicativity of Dunkl connections. In particular, they are always multiplicative for the 
universal differential algebra 
$\Guni$ over $\Gamma$. 

So cyclic Dunkl connections provide 
a nice class of examples 
for non-regular (in general) but nevertheless 
multiplicative connections.  
For example, the Dunkl operators as originally defined 
in \cite{dunkl-1989} do not satisfy the Leibniz rule 
as is shown in \cite{DS}. 
Then by Theorem 12.14 of \cite{Part-II} 
this implies that the 
corresponding Dunkl connection is not regular, 
but by what we have shown here they are multiplicative. 
\end{remark}

\begin{remark} 
For multiplicative connections $\omega$ 
there is a very nice interpretation of the curvature 
as a measure of the deviation 
of $\omega^{\sim}\colon\ginv^{\sim}\rightarrow 
\Omega(P)$ from being a differential algebra morphism. 
Indeed, for every $a\in\cal{A}$ we have 
$$ 
\bigl\{d\omega^{\sim}-\omega^{\sim}d\bigr\}\pi(a)=
d\omega\pi(a)+\omega\pi(a^{(1)})\omega
\pi(a^{(2)})=r_\omega(a), 
$$
where we used the Maurer-Cartan formula 
$d \pi(a) = - \pi(a^{(1)}) \pi(a^{(2)})$ and 
where we write $\phi(a) = a^{(1)} \otimes a^{(2)}$ 
in Sweedler's notation. 

In this case, there are no residual curvature terms
for $a\in\cal{R}$ and 
the curvature map 
$r_\omega\colon\mathcal{A} \rightarrow\hor^2(P)$
naturally projects down to 
$r_\omega\colon\ginv\rightarrow\hor^2(P)$. 
As we know from the general theory 
\cite{MichoQPB2,MichoQPB3}
the residual curvature terms are a manifestation of an 
interesting purely quantum phenomena, where a specific 
quadratic combination
$\omega(\pi(a^{1)})\omega(\pi(a^{(2)})$ 
of `vertical' elements given by values of the connection 
form, surprisingly turns out to be horizontal 
for $a\in\cal{R}$. 
The presence of these curvature terms can be understood as 
the obstacle to the multiplicativity of the connection. 
\end{remark}

\section{Example: Complex Reflection Groups}
\label{sec-6}

\noindent \indent 
As an important special example 
of this construction, 
we consider complex reflection groups 
and their associated Dunkl operators. 
These operators were introduced in the paper \cite{dunkl-opdam}.

\subsection{Basic Definitions}

A {\em complex reflection} $s$ is a unitary transformation 
acting on a complex, 
finite dimensional vector space $ V $ 
with $\dim V =n  \ge 1$ and 
with a Hermitian inner product 
such that $s$ has finite order in the unitary group $\mathrm{U}(V)$ of $V$ 
and exactly one of the eigenvalues of $s$ is not
equal to $1$ and has multiplicity $1$. 
In particular, $s \ne e$, the identity.

Suppose that $S^\prime$ 
is a non-empty set of
complex reflections acting on $V$ and that 
$S^\prime$ generates a {\em finite} subgroup $G$ of $\mathrm{U}(V)$. 
Then we say that $G$ is 
{\em finite complex reflection group} acting on $V$. 
Let $\S$ be the smallest set containing $S^\prime$ 
such that $\S$ is closed under conjugation 
by arbitrary elements $g \in G$ and $S^{-1} = S$. 
So 
$S = 
\bigcup \{ g^{-1} s g, g^{-1} s^{-1} g ~|~ g \in G, s \in S^\prime \}$. 
It follows that $e \notin S$ and $\S$ is non-empty. 
So $\S$ determines a unique bicovariant, $*$-covariant fodc 
$\Gamma$ for the
Hopf algebra $\mathcal{A} = \mathcal{F} (G)$.
In particular, let $d : \mathcal{A} \to \Gamma$ 
denote the differential of this fodc. 

For every one-dimensional complex 
subspace $L$ of $V$ let $\xi_L$ be 
the {\em (singular) 
classical differential complex $1$-form} 
on $V$ defined by 
\begin{equation}\label{L-dunkl}
\xi_L (x)=
\frac{ 1 }{\langle x, \alpha\rangle} 
\sum_{j=1}^{n} \alpha_{j} d \overline{x}_{j} =
\frac{ \tilde{\alpha} }{\langle x, \alpha\rangle}, 
\end{equation}
where $0 \ne \alpha = (\alpha_{1}, \dots , \alpha_{n}) \in L$ 
is called a {\em representative vector} of $\xi_L$
and $x\in V$ 
satisfies $\langle x, \alpha\rangle \neq 0$, that is, 
$ x \notin L^{\perp} $, the hyperplane orthogonal 
to $ L $. 
In this section we use the notation 
$ \tilde{\alpha} = \sum_{j=1}^{n} \alpha_{j} d \overline{x}_{j} $ and the $ d \overline{x}_{j} $'s are
the standard $ (0,1) $ one-forms associated to some given
orthonormal basis of $ V $. 

\begin{remark} 
It is easy to see that the above definition \eqref{L-dunkl}  
does not depend on 
the representative vector $\alpha$, since our convention is that 
the inner product is linear in its second entry. 
And it also does not depend on the choice of the 
orthonormal basis. 
We include singular anti-meromorphic forms here, 
since  the scalar product 
$ \langle x , \alpha \rangle  $
is zero
for all points $x\in L^\bot \ne \emptyset$, 
the orthogonal complement of $L$. 
If $\dim \,V \ge 2$, then $L^\bot \ne 0$ as well. 
The exterior differential graded *-algebra structure remains 
well-defined for such singular $1$-forms, 
although with implicitly defined domain restrictions.
\end{remark}

\subsection{Preliminary Results}

\begin{prop} 
\label{3-L-I}
For every three linearly dependent 
$ 1 $-dimensional subspaces $X$, $Y$ and $Z$ of $V$ we have
\begin{equation}\label{superposition-3}
\xi_X\xi_Y+\xi_Y\xi_Z+\xi_Z\xi_X = 0 . 
\end{equation}
\end{prop}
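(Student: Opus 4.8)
The plan is to reduce the stated identity to a purely algebraic cancellation among anti-holomorphic two-forms. First I would fix nonzero representative vectors $a\in X$, $b\in Y$, $c\in Z$, so that by the Remark preceding the statement $\xi_X=\tilde{a}/\langle x,a\rangle$, $\xi_Y=\tilde{b}/\langle x,b\rangle$ and $\xi_Z=\tilde{c}/\langle x,c\rangle$, where $\tilde{a}=\sum_{j}a_{j}\,d\overline{x}_{j}$, and similarly for $b,c$. Since the product in $\mathcal{D}$ is the wedge product and the scalar factors $\langle x,\cdot\rangle^{-1}$ have degree $0$ (hence are central), each product takes the form $\xi_X\xi_Y=(\langle x,a\rangle\langle x,b\rangle)^{-1}\,\tilde{a}\wedge\tilde{b}$, and the three terms combine over the common denominator $\langle x,a\rangle\langle x,b\rangle\langle x,c\rangle$ on the common domain $V\setminus(X^{\perp}\cup Y^{\perp}\cup Z^{\perp})$. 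Thus \eqref{superposition-3} is equivalent to the vanishing of the numerator
$$
\langle x,c\rangle\,\tilde{a}\wedge\tilde{b}
+\langle x,a\rangle\,\tilde{b}\wedge\tilde{c}
+\langle x,b\rangle\,\tilde{c}\wedge\tilde{a}=0 .
$$

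The key observation is that both maps $\alpha\mapsto\tilde{\alpha}$ and $\alpha\mapsto\langle x,\alpha\rangle$ are linear in the representative vector $\alpha$: the first obviously from its coordinate formula, and the second by the convention that the Hermitian product is linear in its second entry. Exploiting the hypothesis that $X,Y,Z$ are linearly dependent, I would treat the generic case in which $a,b$ are linearly independent and write $c=\lambda a+\mu b$; then $\tilde{c}=\lambda\tilde{a}+\mu\tilde{b}$ and $\langle x,c\rangle=\lambda\langle x,a\rangle+\mu\langle x,b\rangle$. Substituting these into the numerator and using anti-commutativity of one-forms (so that $\tilde{a}\wedge\tilde{a}=\tilde{b}\wedge\tilde{b}=0$ and $\tilde{b}\wedge\tilde{a}=-\tilde{a}\wedge\tilde{b}$), every surviving term becomes a multiple of $\tilde{a}\wedge\tilde{b}$, and a short count shows the coefficients add up to $\lambda\langle x,a\rangle+\mu\langle x,b\rangle-\lambda\langle x,a\rangle-\mu\langle x,b\rangle=0$. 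The degenerate cases, in which $a,b$ are already parallel (so two or all three of the lines coincide as subspaces), are even simpler: after the same substitutions every wedge reduces to a multiple of $\tilde{a}\wedge\tilde{a}=0$ and the sum telescopes to $0$.

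I do not expect a serious obstacle here, since the content is essentially linear algebra. The one point requiring care is confirming that linear dependence of the \emph{subspaces} is exactly the hypothesis that drives the cancellation: it is precisely the relation $c=\lambda a+\mu b$ (independent of the choice of representatives, as each $\xi_L$ is) that forces the diagonal terms $\tilde{a}\wedge\tilde{a}$ and $\tilde{b}\wedge\tilde{b}$ to appear and vanish. A secondary bookkeeping matter is the singular nature of the forms: the identity is an equality of smooth $2$-forms on the open dense domain where none of the three denominators vanishes, and I would state it there. Finally, I would remark that this computation is the analytic shadow of the abstract cyclic property \eqref{cyclic-property}, which is what links this Proposition to the cyclic Dunkl displacement machinery developed earlier.
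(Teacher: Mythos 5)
Your proposal is correct and takes essentially the same route as the paper's proof: fix representative vectors, use the linearity of $\alpha \mapsto \tilde{\alpha}$ and $\alpha \mapsto \langle x,\alpha\rangle$ together with the anticommutativity of classical $1$-forms, and substitute the dependence relation $c = \lambda a + \mu b$ to force the cancellation. The only cosmetic difference is organizational: you clear all three denominators and show the symmetric numerator vanishes (which also absorbs the cases $Z=X$ or $Z=Y$ where $\lambda$ or $\mu$ is zero), whereas the paper dismisses coincident subspaces as trivial up front and then combines the first two terms to identify them with $-\xi_Z\xi_X$.
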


\begin{proof} 
The property is trivial if $X=Y$, $Y=Z$ or $Z=X$. 
Hence we assume that 
$X\neq Y\neq Z\neq X$. 
Choose representative vectors $\alpha\in X$, $\beta\in Y$ and 
$\gamma\in Z$. 
These give us $3$ distinct vectors, 
since $X$, $Y$ and $Z$ are $3$ distinct 
one-dimensional subspaces. 
Every pair taken from $\{ \alpha, \beta, \gamma \}$ forms 
a basis of the 2-dimensional subspace spanned by  
all $3$ of them.  
We now compute  
\begin{align*}
\xi_X\xi_Y+\xi_Y\xi_Z \!&=\! 
\frac{\tilde{\alpha}}{\langle x, \alpha\rangle}
\frac{\tilde{\beta}}{\langle x, \beta\rangle}
{+}\frac{\tilde{\beta}}{\langle x, \beta\rangle}
\frac{\tilde{\gamma}}{\langle x, \gamma\rangle}
\!\!=\!\!\frac{\tilde{\alpha}\, \tilde{\gamma}}{\langle x, \alpha\rangle
\langle x, \beta\rangle\langle x, \gamma\rangle}
\frac{\langle x, \gamma\rangle-a\langle x, \alpha\rangle}{b}
\\
&= 
\frac{\tilde{\alpha}\, \tilde{\gamma}}{\langle x, \alpha\rangle\langle 
x, \gamma\rangle}=\xi_X\xi_Z = - \xi_Z\xi_X, 
\end{align*}
where $\gamma=a\alpha+b\beta$ and $a,b\in\Bbb{C}\setminus\{0\}$, using the graded 
commutativity of classical $ 1 $-forms in 
the very last step.  
\end{proof}

\begin{theorem} 
Let $n \ge 1$ be an integer. 
Then for every family $W_1$, \dots, $W_n$ 
of one-dimensional subspaces of $V$, 
every three of which are linearly dependent, we have that 
\begin{equation}
\label{W1-Wn}
\xi_{W_1}\xi_{W_2}+\xi_{W_2}\xi_{W_3}+
\cdots+\xi_{W_{n-1}}\xi_{W_n}+\xi_{W_n}\xi_{W_1}=0.
\end{equation}
\end{theorem}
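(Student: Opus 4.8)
The plan is to prove \eqref{W1-Wn} by induction on $n$, taking Proposition~\ref{3-L-I} as the crucial input and using only it together with the graded anticommutativity of classical $1$-forms. Throughout I abbreviate $\xi_j := \xi_{W_j}$ and read indices cyclically, so that $W_{n+1} = W_1$. The quantity to be shown to vanish is the cyclic sum $S_n := \xi_1\xi_2 + \xi_2\xi_3 + \cdots + \xi_{n-1}\xi_n + \xi_n\xi_1$.

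For the base cases $n = 1$ and $n = 2$ the hypothesis on triples is vacuous, and the statement is immediate from anticommutativity of $1$-forms: $\xi_1\xi_1 = 0$ settles $n=1$, while $\xi_1\xi_2 + \xi_2\xi_1 = 0$ settles $n = 2$. The case $n = 3$ is exactly Proposition~\ref{3-L-I}, but it will also fall out of the inductive step below.

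For the inductive step I would suppose $n \ge 3$ and that the result holds for families of size $n-1$. First I would apply Proposition~\ref{3-L-I} to the triple $(W_{n-1}, W_n, W_1)$, which is linearly dependent because by hypothesis every three of the $W_j$ are. This gives $\xi_{n-1}\xi_n + \xi_n\xi_1 + \xi_1\xi_{n-1} = 0$, hence $\xi_{n-1}\xi_n + \xi_n\xi_1 = -\xi_1\xi_{n-1} = \xi_{n-1}\xi_1$, the last equality being graded anticommutativity. Substituting this into $S_n$ collapses its final two summands into the single term $\xi_{n-1}\xi_1$, so that $S_n = \xi_1\xi_2 + \cdots + \xi_{n-2}\xi_{n-1} + \xi_{n-1}\xi_1$. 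But this is precisely the cyclic sum $S_{n-1}$ attached to the shortened family $W_1, \dots, W_{n-1}$, which still satisfies the triple-dependence hypothesis as a subfamily. By the induction hypothesis $S_{n-1} = 0$, and therefore $S_n = 0$.

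I do not expect a genuine obstacle here: the argument is a short sign-bookkeeping computation resting entirely on Proposition~\ref{3-L-I} and anticommutativity. The two points needing a little care are, first, confirming that each invocation of Proposition~\ref{3-L-I} is legitimate---that the chosen triple of subspaces is indeed linearly dependent, which is guaranteed by the global hypothesis and inherited by every subfamily---and, second, the implicit domain restrictions attached to these singular (anti-meromorphic) $1$-forms: the products and cancellations are understood to hold on the common domain where all the relevant forms are defined, exactly as in the proof of Proposition~\ref{3-L-I}. An alternative, non-inductive route would be to expand each middle product $\xi_i\xi_{i+1}$ (for $2 \le i \le n-1$) via the triple relation involving $W_1$ and then collect terms; this yields $S_n = -\sum_{j=3}^{n-1}(\xi_1\xi_j + \xi_j\xi_1)$, which vanishes term by term by anticommutativity. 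The induction, however, seems cleaner to present.
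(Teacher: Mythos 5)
Your proof is correct and takes essentially the same route as the paper's: induction on $n$ with the vacuous/trivial base cases $n=1,2$, Proposition~\ref{3-L-I} for $n=3$, and an inductive step that applies the triple relation \eqref{superposition-3} to the first and last subspaces of the cycle so that the last two terms of the cyclic sum collapse, by anticommutativity, into the single term closing the shorter cycle. The only cosmetic difference is that the paper phrases the step as passing from $n$ to $n+1$ using the triple $W_1, W_n, W_{n+1}$, whereas you pass from $n-1$ to $n$ using $W_{n-1}, W_n, W_1$; the computation is identical.
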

\begin{proof} The hypothesis on the family $W_1$, \dots, $W_n$ is vacuously true 
for every family in the cases $n=1$ and $n=2$. 
But in those cases the conclusion is trivially true. 
The case $n=3$ was proved in Proposition~\ref{3-L-I}. 
For $n \ge 4$ one argues 
inductively applying the cyclic property 
\eqref{superposition-3} for the three spaces $W_1, W_n, W_{n+1}$ 
and taking \eqref{W1-Wn} as the induction hypothesis.  
\end{proof}

All of the subspaces $W_1$, \dots, $W_n$ in the previous theorem 
are contained in the common subspace 
$U = W_1 + \cdots + W_n$ of $V$ 
and the dimension of $U$ is $1$ or $2$. 
We say that this subspace $U$ is the {\em container} 
of $W_1$, \dots, $W_n$. 

Let us note that $G$ acts faithfully on $V$ 
by unitary transformations, and 
every element of the set $\S$ is a complex reflection. 
For every $s \in \S$, let $W_s \subset V$ denote 
the eigenspace of $s$ whose eigenvalue is not equal to $1$. 
So $W_s$ is one-dimensional. 
Then we define
\begin{equation}
\label{define-Omega}
       \Omega := \{ W_s ~|~ s \in \S \},
\end{equation}
the set of these eigenspaces.  

We also define 
$\sw\colon \S\rightarrow \Omega$ by $\sw(s):= W_s$ 
for all $s \in \S$. 
Then $\sw$ is a surjection, though it need not 
be an injection. 
It is clear that $G$ acts naturally on $\Omega$ by 
$ W_s \cdot g := W_{g^{-1} s g} $ 
for $s \in \S$ and $ g \in G $. 
Also,
$ G $ acts on $ \S $ by 
$s \cdot g := g^{-1} s g \in \S$. 
These are right actions. 
Then $\sw$ is covariant with respect to these actions of $ G $. 
The cyclic structure on $\S$ projects to a 
cyclic structure on $\Omega$, namely 
$ \sigma : \Omega \times \Omega \to \Omega \times \Omega $ is 
given by $ \sigma (W_s , W_t) = ( W_{s t s^{-1}} , W_s) $ 
for $ s,t \in \S $. 

Let us now observe that if $\alpha$ and $\beta$ are eigenvectors 
corresponding to complex reflections $u$ and $v$ 
(respectively)
and whose eigenvalues are $ \ne 1 $, 
then we claim that 
\begin{equation}\label{complex-colinearity}
\beta-u(\beta)=a\alpha, 
\end{equation}
for some $a\in \Bbb{C}$. 
Indeed, we can decompose 
$\beta=\beta_{\parallel}+\beta_{\bot}$ into parallel and 
orthogonal components to $\alpha$, and so  
$$u(\beta)=c\beta_{\parallel}+\beta_{\bot},$$ 
where $ u (\alpha) = c \alpha $ with $ c \ne 1 $.
From this 
the colinearity condition 
\eqref{complex-colinearity} follows from
$ \beta-u(\beta)=\beta_{\parallel}+\beta_{\bot} -
(c\beta_{\parallel}+\beta_{\bot}) = (1-c) 
\beta_{\parallel} $ 
with $ 1 -c \ne 0$. 
Using $ \beta_{\parallel} = b \alpha $ for 
some $ b \in \Bbb{C} $, then gives 
\eqref{complex-colinearity}. 
As we shall now see, this condition implies that every 
cyclic line in $\Omega$ possesses a one or 2-dimensional 
container, and hence the cyclic relation \eqref{W1-Wn} 
holds. We shall actually prove a more general result, 
regarding  groups of Coxeter type which we now define.

\subsection{Groups of Coxeter Type}

\begin{definition}
A finite group $G$, realized as a subgroup of the unitary group $\mathrm{U}(V)$, is of {\em Coxeter type}, if there is a (finite) generating subset $\St$ of $G$ and a function $\sw\colon 
\St\rightarrow \mathrm{CP}(V)$, 
the complex projective space of $V$ (namely 
the space whose elements are the one-dimensional 
subspaces of $ V $), 
such that:

\smallskip\bla{i} If $u,v\in\St$ then $uvu^{-1}\in\St$. 
This property implies that $\St$ is 
invariant under conjugations by arbitrary elements from $G$.  
In other words $ \St $ splits into one or more entire 
conjugacy classes. 

\smallskip\bla{ii} For $u,v\in\St$ we have $\sw(uvu^{-1})=u[\sw(v)]$. In particular, if 
$u$ and $v$ commute (for example, $u=v$) we see that $\sw(v)$ is $u$-invariant, and thus $u$ acts on the subspace $\sw(v)$ 
as multiplication by a unitary scalar. 

\smallskip\bla{iii} The space $u[\sw(v)]$ is contained in the linear span of $\sw(u)$ and $\sw(v)$. 
In other words if $0 \ne \alpha\in \sw(u)$ 
and $0 \ne \beta\in\sw(v)$, 
then 
\begin{equation}\label{generalized-colinearity}
u(\beta)=a\alpha+b\beta
\end{equation}
for some $a,b\in\Bbb{C}$. 
\end{definition}

Obviously, every complex reflection group is of Coxeter type. 
In this definition we are requiring  
that the "generalized reflections" $u \in \St$ transform 
the "root vector" representatives from $\sw(v)$ in a very local way, namely 
the transformed vector is always a linear combination of the initial vector $ \beta $
and the reflection transformation vector $ \alpha $.

\subsection{The Natural Cyclic Structure}

We can now proceed exactly as above by first 
defining the family $\Omega:=\mathrm{Ran} \, \mu$, 
a finite subset of  $ \mathrm{CP}(V)$, 
and then the natural cyclic structure 
induced by $ \mu $ on it. 
\begin{prop}\label{2-container} 
If $ G $ is of Coxeter type, then 
every cyclic line 
in $\Omega$ possesses 
a one or 2-dimensional container. 
Thus the above cyclic relation \eqref{W1-Wn} holds.  
\end{prop}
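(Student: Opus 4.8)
The plan is to reduce the statement to the cyclic relation \eqref{W1-Wn}, which has already been proved for any finite family of one-dimensional subspaces every three of which are linearly dependent. Thus the entire task is to establish the first assertion, that the container of a cyclic line in $\Omega$ has dimension at most two; granting this, any three of the subspaces appearing in such a line lie in a common space of dimension $\le 2$ and are therefore linearly dependent, so the theorem yielding \eqref{W1-Wn} applies directly.

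First I would describe a cyclic line in $\Omega$ concretely. Such a line is the image under $\mu$ of an orbit $(q_1, q_2, \dots, q_n)$ of the flip-over operator $\sigma$ on $\St$ (which is defined on $\St$ by \eqref{define-flip-over}, since $\St$ is closed under conjugation by property (i)), giving the ordered sequence $(W_{q_1}, \dots, W_{q_n})$ with $W_{q_i} := \mu(q_i)$. The orbit relation coming from \eqref{define-flip-over} is $q_{k-1} = q_k q_{k+1} q_k^{-1}$ for every index $k$, read cyclically with $q_0 := q_n$ and $q_{n+1} := q_1$. Applying property (ii) with $u = q_k$ and $v = q_{k+1}$ gives $W_{q_{k-1}} = \mu(q_k q_{k+1} q_k^{-1}) = q_k[\mu(q_{k+1})]$, and then property (iii) yields the local containment
\[
W_{q_{k-1}} = q_k[W_{q_{k+1}}] \subseteq \mathrm{span}(W_{q_k}, W_{q_{k+1}})
\]
for each $k$. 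This is precisely the abstract counterpart of the colinearity \eqref{complex-colinearity} that held automatically for genuine complex reflections.

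Next I would bootstrap this local relation into a global one. Setting $P := \mathrm{span}(W_{q_1}, W_{q_2})$, a space of dimension at most two, the case $k = 1$ of the containment gives $W_{q_n} \subseteq P$, after which a downward induction on the cyclic index (from $k = n$ to $k = 3$) shows $W_{q_{k-1}} \subseteq P$ at each step, since $W_{q_k}$ and $W_{q_{k+1}}$ are by then already known to lie in $P$. Hence $W_{q_i} \subseteq P$ for all $i$, so the container $U = W_{q_1} + \cdots + W_{q_n}$ equals $P$ and satisfies $\dim U \le 2$; the degenerate cases $n = 1, 2$ are immediate. Invoking \eqref{W1-Wn} then finishes the argument.

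I expect the main obstacle to be bookkeeping rather than anything conceptual. The two delicate points are to align the conjugation convention of the orbit relation with the convention $\mu(uvu^{-1}) = u[\mu(v)]$ of property (ii) — that is, to verify which of $\sigma$ and $\sigma^{-1}$ advances the index and that the active reflection is $q_k$ rather than $q_{k+1}$ — and to confirm that the induction closes correctly around the whole cycle even when $\mu$ is not injective, so that coincidences among the $W_{q_i}$ (and thus a possibly one-dimensional container) create no difficulty.
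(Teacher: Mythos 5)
Your proposal is correct and takes essentially the same route as the paper: both extract the local containment $W_{q_{k-1}} = q_k[W_{q_{k+1}}] \subseteq \mathrm{span}(W_{q_k}, W_{q_{k+1}})$ from the orbit relation together with properties \bla{ii} and \bla{iii}, and then propagate it around the cycle to trap every subspace in the span of the first two. The only difference is one of detail: the paper compresses the propagation into the phrase ``proceed inductively,'' whereas you spell out the downward induction and the convention-checking (which of $\sigma$, $\sigma^{-1}$ advances the index, and harmlessness of coincidences among the $W_{q_i}$) explicitly.
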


\begin{proof} 
We consider  $u$, $v$ 
in $\S$ ("generalized reflections")
and their associated cyclic line $\ell_{u,v}$. 
If $ u = v $, the result is trivial. 
So we assume that $ u \ne v $. 
The point preceding $ u $ on $\ell_{u,v}$ 
is $uvu^{-1}$. Let $\gamma\in \sw(uvu^{-1})\setminus\{0\}$ and choose non-zero vectors 
$\alpha$ and $\beta$ from $\sw(u)$ and $\sw(v)$, respectively. 
Since 
$\sw(uvu^{-1}) = u [\sw(v)] $, the third condition implies that  
$\gamma=u(\beta) = a \alpha + b \beta$ 
for some $a,b\in\Bbb{C}$. 

Now we can proceed inductively to conclude that every 
one-dimensional subspace associated to the points on the 
cyclic line $\ell_{u,v}$ is contained in the subspace spanned by $\alpha$ and $\beta$. 
\end{proof}

\subsection{}

{\bf The Quantum Principal Bundle
	and \\ Its Cyclic Quantum Displacement}
\vskip 0.2cm 
Let $E$ be the dense open subset of $V$  
consisting of all the vectors with trivial $G$-stabilizer. 
By definition $G$ acts freely on $E$. 
Furthermore, every vector $x\in E$ is not 
orthogonal to any of the subspaces $ W_s $ in $\Omega$ 
as defined in \eqref{define-Omega}. 
As we have seen 
$ P = (C^\infty(E), \cal{F}(G), F) $ is then 
a QPB, where $ F $ is the pull-back of the right action 
of $ G $ on $ E. $ 
Let $ \Omega(P) $ be the associated hodc which was 
constructed as in \cite{DS} and reviewed 
in Subsection~\ref{QPB} and Appendix~B.

Let us use the same symbol $\xi_W$ for the corresponding 
restricted differential, complex $1$-form on $E$. 
Notice that $\xi_W$ when restricted to $E$ has no 
singularities. 
In other words $\xi_W$ is 
a smooth (that is, $ C^\infty $) section, 
defined on all of $E$, of the 
complexified cotangent bundle 
$ T^*(E) \otimes \mathbb{C} $. 
In accordance with the above comments $\xi_W$ is 
identified as a horizontal $1$-form in $\Omega(P)$, 
namely $ \xi_W \in \hor^1 (P)$. 
Let us define $\lambda\colon\S\rightarrow\hor^1(P)$ as 
$$
\ddisp(s)=\nu(s) \, \xi_{\sw(s)},  
$$
where $\nu\colon\S\rightarrow\Bbb{C}$ is any function 
which is 
constant on each conjugation class of $\S$.  

\begin{theorem} 
The map $\ddisp$ is a cyclic Dunkl displacement for $P$, 
where $P$ is viewed as a quantum principal bundle with 
structure group $G$ and is 
equipped with the canonical connection $\tilde{\omega}$ 
defined in \eqref{define-omega-tilde}.
\end{theorem}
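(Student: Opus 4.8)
The plan is to verify directly the three defining conditions of Definition~\ref{define-cyclic-dunkl} for the map $\ddisp(s)=\nu(s)\,\xi_{\sw(s)}$, namely the Covariance Property, the Closed-ness Property, and the Cyclic Property. The first two are short and essentially geometric; the Cyclic Property is where the real content lies and is the step I expect to be the main obstacle.

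For the Covariance Property I would first record that $F^{\wedge}_{g}$ acts on a horizontal form as the geometric left action $g\cdot(-)$, and that this action sends $\xi_{\sw(s)}$ to $\xi_{g[\sw(s)]}$, since $\xi_{W}(x)=\tilde{\alpha}/\langle x,\alpha\rangle$ is assembled entirely from the $G$-invariant Hermitian inner product and $g$ is unitary. Combining this with the covariance $\sw(gsg^{-1})=g[\sw(s)]$ coming from condition~(ii) of the Coxeter-type hypothesis, and with $\nu(gsg^{-1})=\nu(s)$ since $\nu$ is constant on conjugacy classes, gives $F^{\wedge}_{g}(\ddisp(s))=\nu(s)\,\xi_{g[\sw(s)]}=\ddisp(gsg^{-1})$, as required. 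For the Closed-ness Property I would use Theorem~\ref{D-tilde-omega-is-de-Rham}, which identifies $D_{\tilde{\omega}}$ with the complexified de~Rham differential, so that it suffices to check $d\xi_{\sw(s)}=0$. This is immediate from the observation that on $E$ (where $\langle x,\alpha\rangle\neq 0$) one has $\xi_{W}=d\log\langle x,\alpha\rangle$, an exact and hence closed form; equivalently $d\tilde{\alpha}=0$ and $\tilde{\alpha}\wedge\tilde{\alpha}=0$.

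The Cyclic Property is the heart of the argument. Fixing a cyclic line $\ell\leftrightarrow(q_{1},\dots,q_{n})$ and writing $W_{j}:=\sw(q_{j})$, I would expand
$$
\ddisp(\ell)=\sum_{j=1}^{n}\nu(q_{j})\nu(q_{j+1})\,\xi_{W_{j}}\xi_{W_{j+1}},
\qquad q_{n+1}=q_{1}.
$$
The key point is that the scalar $c_{j}:=\nu(q_{j})\nu(q_{j+1})$ does not depend on $j$. Indeed, from $\sigma^{-1}(q_{j},q_{j+1})=(q_{j+1},q_{j+2})$ together with formula~\eqref{define-flip-over} one gets $q_{j+2}=q_{j+1}^{-1}q_{j}q_{j+1}$, so $q_{j+2}$ is conjugate to $q_{j}$ in $G$; since $\nu$ is a class function this yields $\nu(q_{j+2})=\nu(q_{j})$ and therefore $c_{j+1}=\nu(q_{j+1})\nu(q_{j+2})=\nu(q_{j+1})\nu(q_{j})=c_{j}$. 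Factoring out this common constant $c$, it then remains to see that $\sum_{j}\xi_{W_{j}}\xi_{W_{j+1}}=0$, which is exactly the cyclic relation~\eqref{W1-Wn}; its hypothesis is supplied by Proposition~\ref{2-container}, since the container $W_{1}+\cdots+W_{n}$ has dimension at most $2$, so any three of the $W_{j}$ are linearly dependent. Hence $\ddisp(\ell)=c\cdot 0=0$.

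The only genuinely delicate point, and the obstacle I would foreground, is this constancy of $c_{j}$ along the line: a priori $\nu$ could take two different values on $\ell$ (one on the odd-indexed and one on the even-indexed elements), and it is precisely the conjugacy $q_{j+2}\sim q_{j}$ built into the flip-over operator $\sigma$ that forces every consecutive product $\nu(q_{j})\nu(q_{j+1})$ to coincide, so that the multiplicity weighting does not disturb the purely geometric cyclic relation~\eqref{W1-Wn}.
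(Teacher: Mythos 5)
Your proposal is correct and follows essentially the same route as the paper's proof: closed-ness via the identification $D_{\tilde{\omega}}=d$ and the closedness of each $\xi_W$ (your observation that $\xi_W$ is $d\log\langle x,\alpha\rangle$ is just a slicker packaging of the paper's direct computation), covariance via the identification of $F^\wedge_g$ on horizontal forms with the geometric action $g\cdot(-)$ together with the equivariance of $\sw$ and the class-function property of $\nu$, and the cyclic property via Proposition~\ref{2-container} and the relation~\eqref{W1-Wn}. The one place where you go beyond the paper's text is the explicit check that the coefficient $\nu(q_j)\nu(q_{j+1})$ is constant along a cyclic line (using $q_{j+2}=q_{j+1}^{-1}q_j q_{j+1}$, so that $\nu$ takes at most two values, alternating, and consecutive products coincide); the paper simply declares the cyclic property to be a ``direct consequence'' of Proposition~\ref{2-container}, leaving exactly this coefficient bookkeeping implicit, and your argument supplies it correctly --- without it, \eqref{W1-Wn} alone would not suffice, so this is a genuine and worthwhile completion rather than a digression.
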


\begin{proof} 
We first prove that the form $\xi_W$ is closed with 
respect to the covariant derivative $ D_{\tilde{\omega}} $. 
But $ D_{\tilde{\omega}} = d$, 
the complexified de Rham differential. 
Recall that 
$
\xi_W (x) =\tilde{\alpha} / \langle x , \alpha \rangle 
$, 
where $ \alpha \in W \setminus \{ 0 \} $. 
Then we calculate that 
\begin{align*}
   &d \, \xi_W 
   = d \Big( \sum_j \dfrac{\alpha_j}{\langle x , \alpha \rangle} d \bar{x}_j 
        \Big)
   = \sum_j d \Big[
              \dfrac{\alpha_j}{\langle x , \alpha \rangle}
           \Big] \wedge \, d \bar{x}_j 
   \\
   &= \sum_j \sum_k \dfrac{\partial}{\partial \bar{x}_k} 
   \Big[
              \dfrac{\alpha_j}{\langle x , \alpha \rangle}
           \Big] \, d \bar{x}_k \wedge d \bar{x}_j 
   = - \sum_j \sum_k 
   \dfrac{\alpha_j \alpha_k}{\langle x , \alpha \rangle^2} \, 
   d \bar{x}_k \wedge d \bar{x}_j 
   = 0, 
\end{align*}
where we used that
$
\dfrac{\partial}{\partial {x}_k} 
\Big[
\dfrac{\alpha_j}{\langle x , \alpha \rangle}
\Big] = 0,
$
since the function inside the brackets is anti-holomorphic
on its domain of definition. 
This proves that $\xi_W$ is closed. 
Since $ \nu (s) $ is simply a complex number, it is immediate
that $d \ddisp(s)= d (\nu(s) \, \xi_{ \sw(s)} ) = 0$, that is
$\ddisp(s)$ is closed for all $ s \in \S$. 

The cyclic property is a direct consequence of 
Proposition \ref{2-container}.  For the covariance property we see on the one hand that
\begin{align*}
F_g^\wedge ( \ddisp(s) ) &= F_g^\wedge ( \nu(s) \, \xi_{\sw(s)} ) 
= \nu(s) \, F_g^\wedge (  \xi_{\sw(s)} ) 
= \nu(s) \, (\id \otimes g) F^\wedge (  \xi_{\sw(s)} ) 
\\
&= \nu(s) \, (\id \otimes g) 
\sum_{k \in G} k \cdot \xi_{\sw(s)} \otimes \delta_k 
= \nu(s) \, \sum_{k \in G} k \cdot \xi_{\sw(s)} \otimes \delta_k (g) 
\\
&= \nu(s) \, (g \cdot \xi_{\sw(s)}). 
\end{align*}
On the other hand, by using 
$ W_s \cdot g := W_{g^{-1} s g} $, we compute
\begin{align*}
   \ddisp( g s g^{-1}) &= 
   \nu ( g s g^{-1}) \, \xi_{\sw (g s g^{-1})} 
   = 
   \nu (s) \, \xi_{\sw (g s g^{-1}) } 
   \\
   &= 
   \nu (s) ( \xi_{\sw(s) \cdot g^{-1}} )  
   = \nu(s) \, (g \cdot \xi_{\sw(s)}). 
\end{align*}
In the last equality we used 
$ g \cdot \xi_{W} = \xi_{W \cdot g^{-1}}$ which holds 
since the action by $ g $ is an orthogonal 
transformation.  

This finishes the proof of the 
covariance property and of the theorem. 
\end{proof}

\subsection{The Covariant Derivative}

Let us now compute the covariant derivative of the Dunkl connection $\omega=\tilde{\omega}+\lambda$. 
According to the general theory, 
for $\varphi \in \mathcal{D}^k$ we have that 
\begin{multline*}
        D_{\omega} (\varphi) = 
         D_{ \tilde{\omega} } (\varphi) - (-1)^k\varphi^{(0)}\lambda\pi(\varphi^{(1)})=
         \dP(\varphi)-(-1)^k \sum_{g \in G} (\varphi_g  \lambda\pi(\delta_g) )\\
         =\dP(\varphi)+(-1)^k\sum_{s \in \S} (\varphi 
         - \varphi_s) \lambda [s] =\dP(\varphi)+(-1)^k\sum_{s\in\S} (\varphi-\varphi_s)\nu(s)\xi_{\sw(s)}. 
\end{multline*}
This expression has the same abstract structure as the previously mentioned formula 
\eqref{D-omega-varphi}. 
\begin{remark} 
It is worth recalling that we chose to deal with anti-meromorphic displacements, the meromorphic version is obtained by the simple complex conjugation of the relevant objects. 
We can write $\dP=\partial_P+\bar{\partial}_P$
in the complex case, the decomposition into the holomorphic and anti-holomorphic differentials,  satisfying $\partial_P^2=\bar{\partial}_P^2=\partial_P\bar{\partial}_P+\bar{\partial}_P\partial_P=0$. Each 
of these differentials can be taken as the representative of the differential calculus on $P$. Of course, in this case 
we would lose the *-compatibility, but as already mentioned, this property was not essential for the present considerations. 
It is also important to observe that our expression holds in the very general context of 
all groups $ G $ of Coxeter type. 
\end{remark}

\subsection{The 
Covariant Derivative Viewed as 
Dunkl Operators}

In order to make an explicit link with the 
Dunkl operators introduced in \cite{dunkl-opdam}, let us fix for every $s \in \S$, 
an element $\alpha_s\in \sw(s)$, and use the formula \eqref{L-dunkl}. For the classical differential let us use the anti-holomorphic differential $\bar{\partial}_P$. We have 
\begin{equation}
D_\omega(\varphi)=\bar{\partial}_P(\varphi)+
(-1)^k\sum_{s\in\S} (\varphi-\varphi_s)\nu(s)\frac{\tilde{\alpha}_s}{\langle x,\alpha_s\rangle}
\end{equation}
which shows that our operators effectively accomodate, as a special case and modulo trivial modifications and 
a change of notation, those of \cite{dunkl-opdam}. In particular, in \cite{dunkl-opdam} the set $\S$ is 
naturally labeled by the hyperplanes associated to the complex reflections.
\begin{remark}
It is also worth noticing that with such change $\dP\rightsquigarrow \partial_P, \bar{\partial}_P$, the holomorphic (anti-holomorphic) forms on $P$ are fully preserved under the Dunkl covariant derivative action.  If in addition $G$ 
is a complex reflection group, then the corresponding polynomial forms will be preserved, as the 
singular contributions of the type $\langle x, \alpha \rangle^{-1}$ will always be canceled out. 
\end{remark}

\section{Example: Cuntz Algebras}

A rich class of truly quantum principal bundles equipped with cyclic Dunkl operators can be constructed 
from Cuntz algebras. We shall sketch here a kind of a universal 
construction for the full differential structure on these quantum principal bundles. 

If a finite-dimensional unitary space $V$ is given, its Cuntz algebra $\cal{O}(V)$ is defined as 
the *-algebra generated by the linear space $V$ and relations
\begin{equation}
\psi_i^*\psi_j=\delta_{ij}\qquad \sum_{k=1}^n\psi_k^{\phantom{*}}\psi_k^*=1
\end{equation}
where $\psi_1\dots\psi_n$ is an arbitrary orthonormal basis of $V$ and $n=\dim V\geq 2$. The algebra $\cal{O}(V)$ is independent of the choice of this orthonormal basis---it depends only on the scalar product on $V$. 

The Cuntz algebras \cite{Cuntz} naturally appear in the study of partial isometries in Hilbert spaces. Their enveloping  
C*-algebras are simple, and play a fundamental role in constructive algebraic quantum 
field theory \cite{DR}, where they provide a key structure to describe the superselection sectors, and related to 
this, the duality theory for compact groups \cite{DR2}. The Cuntz algebras are 
also (together with their matrix siblings) the building blocks for the quantum classifying spaces associated to 
compact matrix quantum groups \cite{M-Qclsp}. 

If $G$ is an arbitrary compact matrix quantum group unitarily acting on $V$, via a unitary representation 
$u\colon V\rightarrow V\otimes\cal{A}$ so that 
$$u(\psi_j)=\sum_{i=1}^n\psi_i\otimes u_{ij}$$
then $u$ uniquely extends to a *-homomorphism $F\colon\cal{O}(V)\rightarrow\cal{O}(V)\otimes\cal{A}$ which is 
the action of $G$ on $\cal{O}(V)$.  

If the representation $u$ is faithful and such that the conjugate representation $\bar{u}$ is contained as a subrepresentation of a tensor power $u^{\otimes m}$, then the action $F$ will be free (in the dual sense). 

Indeed, in this case the 
whole algebra $\cal{A}$ is generated by the matrix entries $u_{ij}$ (without involving the conjugates $u^*_{ij}$). On the other hand the above defining relations give 
$$ \psi^*_i F(\psi_j)=1\otimes u_{ij} $$ 
which can be iterated to prove that the relations of the form $\Sum_\alpha q_\alpha F(b_\alpha)=1\otimes a$ 
hold for all the elements of $\cal{A}$, which is precisely (the dual form of) the freeness condition for $F$.  

Interestingly, if $G$ is any finite classical group, this condition is always satisfied. In other words we have a 
quantum principal bundle $P$ based on $\cal{O}(V)$. The "base space" algebra $\cal{V}$ is given by the 
invariants of this action and it is highly non-commutative.  

Let $\cal{O}^+(V)$ be the subalgebra generated by $V$. If $S\subseteq G\setminus{\e}$ is the defining 
set for a bicovariant *-calculus $\Gamma$ over $G$ then we can always construct a map $\zeta\colon\ginv\rightarrow\cal{O}^+(V)$ which intertwines the adjoint action $\ad$ and $F$ and such that its values on 
$S$ are all non-zero. This is a direct consequence of the above generating property for the representation $u$ 
and the definition of the Cuntz algebras.  

On the other hand, by universality, the action $F$ extends uniquely to a homomorphism of graded differential *-algebras $F\colon\Omega\cal{O}(V)\rightarrow \Omega\cal{O}(V)\otimes\cal{A}$, 
where $\Omega\cal{O}(V)$ is the universal differential envelope of the algebra $\cal{O}(V)$. 

We would like to obtain here a cyclic Dunkl displacement. A natural candidate is the differential of $\zeta$ 
in $\Omega\cal{O}(V)$. If we thus define $\lambda(s):=d\zeta(s)$, the Covariance and Closed-ness properties 
in the definition of cyclic Dunkl displacements are automatically satisfied. The only property which is possibly 
not fulfilled is the Cyclic Property. But this is easily `remedied' by observing that the second-order 
elements of the form 
$\lambda(\ell)=\lambda(s_1)\lambda(s_2)+\cdots+\lambda(s_{n-1})\lambda(s_n)+\lambda(s_n)\lambda(s_1)$ 
are all exact and transform covariantly, so they generate a graded differential *-ideal $\cal{I}$ 
in $\Omega\cal{O}(V)$ such that $F(\cal{I})\subseteq\cal{I}\otimes \cal{A}$. 
If we now set $\hor(P):=\Omega\cal{O}(V)/\cal{I}$ and project $F$ and $\lambda$ 
from $\Omega\cal{O}(V)$ to $\hor(P)$, then such a projected $\lambda$ is a cyclic Dunkl displacement for 
$\hor(P)$. 

We can then proceed with the general constructions to craft the complete calculus, as described 
in Section~5. It is interesting to observe that
here $\Omega\cal{O}(V)$ is used to build the horizontal forms only, an appropriate 
full calculus on $P$ is constructed as a 
twisted tensor product $\Omega(P)\leftrightarrow\hor(P)\otimes\ginv^\wedge$. The projected differential 
$d\colon\hor(P)\rightarrow\hor(P)$ is interpreted as the covariant derivative of an associated  
`initial' zero-curvature connection $\tilde{\omega}$.  

\section{Concluding Observations}
\label{sec-7}

This paper deals with the cyclic Dunkl operators
exclusively from the 
point of view of purely algebraic aspects of 
non-commutative geometry. 
In a future paper we will present the analytic properties 
of these new operators, such as a generalized 
Dunkl transform. 
We also intend to analyze in detail 
examples involving the groups of the Coxeter type, beyond 
the complex reflection groups. The geometry of the quantum 
principal bundle is essentially given by the choice of 
$\S$, which determines 
the differential calculus. 

Another interesting class of examples, which we leave for a future study, comes from appropriately quantized Euclidean spaces on which a finite group $G$ acts, essentially without changing 
the action on basic coordinates. Among other 
purely quantum phenomena in this context, we can mention an automatic freeness of the group action--as there are simply no points to possibly manifest themselves as
elements in a non-trivial stabilizer. 
Also, as we mentioned earlier the case when the 
tangent bundle of $ E $ is trivial merits closer
examination, something we plan to do in a future 
paper.

\section*{Appendix A: Cyclic Structures}
\label{sec-8}

Here we shall explain how the basic geometry of oriented cycles, 
that naturally emerges in the context of
differential calculi on finite groups, can be formalized, 
so that we can recover the group and 
the calculus from a simple set of axioms. 

Let $\Omega$ be a finite set, equipped with a family $\cal{T}$ 
of cyclically oriented subsets of $ \Omega $.  
Every such set in $\cal{T}$ is representable 
as some ordered $n$-tuple $(\omega_1, \dots, \omega_n)$ with $n \ge 1$ 
mutually distinct elements 
$\omega_1 , \dots , \omega_n \in \Omega$. 
Here we identify those $n$-tuples that can be 
obtained one from another by using cyclic permutations. 

The first property that we shall assume here 
is that the elements of $\cal{T}$ behave like "oriented lines". 
Specifically, for every ordered pair $(x, y)\in\Omega\times\Omega$ 
with $x\neq y$ we assume there exists a unique 
$\ell_{x,y}\leftrightarrow(\omega_1,\dots, \omega_n)$ 
in $\cal{T}$ such that $\omega_1=x$, $\omega_2=y$ 
and $\omega_j \in \Omega$ for all $ 3 \le j \le n $. 

In this case we necessarily have a discrete `line' 
with $n \ge 2$ points. 
To this we add a kind of normalization property 
by also assuming that there exists a line 
$\ell_{x,x}=(x) \in \cal{T}$ 
for every $x\in\Omega$. 
So in this case the `line' has exactly one point. 

We can think of this as an oriented version of one 
of Euclid's axioms: 
Every oriented pair of points determines a unique 
oriented line on which they lie. 

\begin{definition} 
Every pair $(\Omega, \cal{T})$ satisfying the above properties 
will be called a {\it cyclic space}. 
\end{definition}

We are mainly interested in the case $\Omega=\S$, 
a subset of a finite group $G$ closed under conjugation 
by all $ g \in G $ and $\cal{T}$ 
being associated to the orbits of the action of 
the flip-over operator $\sigma$ on $\S\times \S$. 
However, there are different realizations of this simple scheme. 
In this regard 
it seems natural to introduce the following definition. 
\begin{definition} 
Let $\mathfrak{h}$ be an algebra and 
$(\Omega,\cal{T})$ a cyclic space. 
Then we define a 
{\em Dunkl representation} of $(\Omega,\cal{T})$ in 
$\mathfrak{h}$ 
to be a map $\xi\colon\Omega\rightarrow\mathfrak{h}$ satisfying 
\begin{equation}
\label{A5}
\xi(w_1)\xi(w_2)+\xi(w_2)\xi(w_3)+\cdots+\xi(w_{n-1})\xi(w_n)
+\xi(w_n)\xi(w_1)=0
\end{equation}
for every line 
$\ell\leftrightarrow(w_1,\dots,w_n)$ in $\cal{T}$. 
In particular we always have $\xi(w)^2=0$ for every 
$w\in\Omega$.  
Also, if a line consists of two distinct points 
$w_1$ and $w_2$ only 
(which is equivalent to saying 
that $ w_1 $ and $ w_2 $ commute 
when $ \Omega \subset G $ as described 
in the main body of this paper), 
then \eqref{A5} says 
that $\xi(w_1)$ and $\xi(w_2)$ anti-commute in $ \mathfrak{h} $. 
\end{definition}

In the main part of this paper the example of a 
Dunkl representation in the algebra 
$\mathfrak{h} = \hor(P) $ was given. 
In that example it happens to be the case that $ \mathfrak{h} $ 
is actually a $ * $-algebra. 

For example, we can consider the Fano plane, 
consisting of the $7$ imaginary units in 
the classical non-associative algebra of octonions $\Bbb{O}$. 
In this case $\cal{T}$ consists of $21$ elements. 
The elements of $\cal{T}$ are the $7$ lines $\ell_{x,x}$ 
for each of the $7$ points $x$ of the 
Fano  plane plus $14$ oriented cycles 
which correspond to the $7$ lines of the Fano plane, 
each line containing 3 points and taken with both 
of the two possible cyclic orientations.  
For this example to work out, one has to prove that 
for each of the $7 \cdot 6 = 42$ ordered pairs of points $x \neq y$
in the Fano plane there exists a unique oriented cycle
$(x,y,w_3)$ in $\cal{T}$. 
For more on the Fano plane and octonions see \cite{baez}. 

On the other hand, as we shall see below, 
if we add a couple of simple additional properties, 
this entire context becomes equivalent to that 
of the quantum differential calculus as presented 
in the main part of this paper. 

Given a cyclic space $(\Omega, \cal{T})$ we can introduce 
a natural action in the following way. 
We start with an ordered pair $(x,y) \in \Omega\times\Omega$ 
with $x\neq y$
and consider the uniquely determined element 
$\ell_{x,y}  \in \cal{T}$ represented by the ordered $n$-tuple 
$(x,y,\omega_3,\dots, \omega_n)$. 
Then for $n \ge 3$ we define $x\wact y:=\omega_3$, while for
$n = 2$ we define $x\wact y:= x$. 
Finally, for the diagonal elements $(x,x)\in\Omega\times\Omega$
we define $x\wact x=x$. 
In short, we have defined a function
$\wact\colon\Omega\times\Omega\rightarrow\Omega$.  
We can think that $y$ `acts' upon $x$ from the right 
in the expression $x\wact y$. 

\begin{prop} 
Let $(\Omega, \cal{T})$ be a cyclic space. 
Then we have this cancellation property: 
If $a\wact x=b\wact x$, then $a=b$.  
\end{prop}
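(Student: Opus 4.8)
The plan is to fix $x \in \Omega$ and show that the map $a \mapsto a \wact x$ is injective, by describing how to reconstruct $a$ from the value $c := a \wact x$ together with $x$. The organizing observation I would establish first is that, whenever $a \neq x$, the element $a \wact x$ is exactly the cyclic successor of $x$ along the unique line $\ell_{a,x} \leftrightarrow (a, x, \omega_3, \dots, \omega_n)$, while $a$ is simultaneously the cyclic predecessor of $x$ on that same line. This reading is uniform across line lengths: for $n \geq 3$ one reads $a \wact x = \omega_3$ directly from the definition, whereas for $n = 2$ the line is the two-cycle $(a,x)$, on which both neighbours of $x$ equal $a$, matching the convention $a \wact x := a$.

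First I would dispose of the degenerate possibility $c = x$. Here I would argue that $a \wact x = x$ forces $a = x$: if instead $a \neq x$, then $c$ is the successor of $x$ on $\ell_{a,x}$ and hence is distinct from $x$, because the entries of a line are mutually distinct by the definition of a cyclic space. Thus $c = x$ gives $a = x$, and identically $b = x$, so $a = b$ in this case; this also absorbs the diagonal convention $x \wact x = x$.

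The substantive case is $c \neq x$, where the previous step already guarantees $a \neq x$ and $b \neq x$. The central step is to prove that the labels $\ell_{a,x}$ and $\ell_{x,c}$ name the same cyclic line. Since $c$ is the successor of $x$ on $\ell_{a,x}$, the ordered pair $(x,c)$ occurs consecutively in $\ell_{a,x}$, so some cyclic permutation of $\ell_{a,x}$ has the form $(x, c, \dots)$; by the uniqueness axiom for the line determined by an ordered pair, this identifies $\ell_{a,x}$ with $\ell_{x,c}$. On this common line the predecessor of $x$ is $a$. Running the identical argument with $b$ in place of $a$ gives $\ell_{b,x} = \ell_{x,c}$ with $b$ the predecessor of $x$. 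Because $x$ occurs exactly once along $\ell_{x,c}$, its predecessor there is uniquely determined, forcing $a = b$.

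I expect the main obstacle to be the bookkeeping around short lines rather than any conceptual difficulty: one must verify that the successor/predecessor description of $\wact$ is valid simultaneously in the generic case $n \geq 3$ and in the two-point case $n = 2$, where the definition $a \wact x := a$ at first resembles an unrelated special convention, and one must confirm that the uniqueness axiom applies cleanly when passing from the pair $(a,x)$ to the pair $(x,c)$. Once the uniform interpretation of $\wact$ as cyclic successor is in place, the cancellation property follows at once from the uniqueness of the line through an ordered pair together with the distinctness of the entries along any line.
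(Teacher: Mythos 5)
Your proof is correct and follows essentially the same route as the paper's: both split into the degenerate case $a\wact x = x$ (forcing $a = x = b$) and the main case, where the uniqueness of the oriented line through the ordered pair $(x, a\wact x)$ identifies $\ell_{a,x}$ with $\ell_{x,\,a\wact x}$, so that $a$ and $b$ are each the uniquely determined predecessor of $x$ on that common line. Your added care with the two-point lines and the successor/predecessor reading of $\wact$ only makes explicit what the paper's proof uses implicitly.
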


\begin{proof} 
Let $w:= a\wact x=b\wact x$. 
Then $\ell_{x,w} \in \cal{T}$ is uniquely determined. 
We first consider the case when $x \neq w$. 
Therefore, $\ell_{x,w}$ is represented by the $n$-tuple
$(x,w, \dots, w_{n-1}, w_n)$ as well as by 
$(w_n, x,w, \dots, w_{n-1})$. 
But $a\wact x = w$ means that $\ell_{a,x}$ is represented 
by $(a,x,w, \dots )$ for some $m$-tuple. 
But the uniqueness of $\ell_{x,w}$ forces $\ell_{x,w} = \ell_{a,x}$. 
In particular, it follows that $a=w_n$. 
Similarly, $b\wact x = w$ implies $b=w_n$. 
Consequently, $a = b$ as desired. 

For the case when $x = w$ we have $a\wact x=x$.
This forces $a = x$. 
Similarly, 
$b\wact x=x$ forces $b = x$ in this case.
It follows that $ a = b$. 
\end{proof}

To every $x\in\Omega$ we can associate a map 
$(\cdot)\wact x\colon\Omega\rightarrow\Omega$ 
by mapping $a \in\Omega$ to $a \wact x \in\Omega$. 
In accordance with the above cancellation property, 
all these maps are injective. 
But since $\Omega$ is finite, they are all actually 
permutations of $\Omega$.  
We let $ \mathrm{Perm}(\Omega) $
denote the finite group of permutations 
of the finite set $ \Omega $. 
Let us now assume that a symmetric left cancellation property holds. 
Namely
$$
(\forall a\in \Omega) \quad a \wact x=a\wact y
\quad \Rightarrow \quad x=y. 
$$
This means the elements of $x\in\Omega$ are faithfully 
represented by the permutations 
$(\cdot)\wact x$ of $\Omega$. 
To this we shall add the following:

\vskip 0.2cm \noindent
{\bf Non-triviality Assumption}: 
Every such permutation $(\cdot)\wact x$ 
is non-trivial, in other words, 
for every $x\in\Omega$ there exists at least one 
$a\in\Omega$ such that $a\wact x\neq a$. 

\begin{remark} 
In our context of quantum differential calculus on finite groups, 
this assumption is not necessarily the case, 
although it holds in the most interesting examples, 
where the group $G$ acts faithfully by 
conjugation on the basis set $\S$ of $ \ginv $.  
\end{remark}

Our next assumption deals with the compatibility 
between cyclical lines and the defined action. 
We postulate that 
\begin{equation}\label{lines-action}
(z\wact x)\wact y=(z\wact y)\wact (x\wact y)
\end{equation}
for every $x,y,z\in\Omega$. 

\begin{remark}
Algebraically, 
this equation says the right action $  \wact y $ 
distributes over the binary operation $ \wact $. 
Geometrically, it says that the ordered sequence of 
consecutive points $z, x, z \wact x$ 
on their uniquely determined
cyclic line transforms 
under the right action $ ( \cdot ) \wact y $
into the ordered sequence of 
consecutive points 
$ z \wact y, x \wact y, (z \wact x) \wact y $ 
on their uniquely determined cyclic line. 

This property trivializes for $x=y$ or $x=z$. 
On the other hand, in the special case 
$y=z$ it reduces to 
$$
(y\wact x)\wact y=y\wact (x\wact y)
$$
for every $x,y\in\Omega$, which is an  
associativity property for any two elements 
$ x $ and $ y $. 
\end{remark}

\begin{prop} 
Let $(\Omega, \cal{T})$ be a cyclic space 
that satisfies \eqref{lines-action}. 
Then for every cyclic line 
$\ell \in \cal{T}$ with $\ell \leftrightarrow (w_1, \dots, w_n)$  
the composition 
$$ 
x \mapsto (x \wact w_k) \wact w_{k+1} 
$$ 
of the right actions of any two 
cyclically consecutive 
elements $(w_k$, $ w_{k+1}) $ 
is a permutation of $\Omega$ that does not depend on $k$, but 
only depends on $ \ell $. 
Here $1 \le k \le n$ and $ w_{n+1} = w_1 $. 
\end{prop}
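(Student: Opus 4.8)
The plan is to translate the geometric postulate \eqref{lines-action} into a single identity between permutations and then apply it just once. The guiding intuition comes from the motivating group example of the main text, where $\Omega = \S \subseteq G$, the action is $x \wact y = y^{-1} x y$, and the composition $x \mapsto (x \wact w_k)\wact w_{k+1}$ equals conjugation by $w_k w_{k+1}$; the latter is the orbit invariant $\mathrm{Inv}(\cal{O})$ of Definition~\ref{define-group-invariant} and is manifestly independent of $k$. The abstract argument will reproduce exactly this phenomenon formally, without reference to any group.

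First I would fix, for each $y \in \Omega$, the map $\rho_y \colon \Omega \to \Omega$ given by $\rho_y(x) := x \wact y$. By the cancellation property together with the finiteness of $\Omega$ established just before the proposition, each $\rho_y$ is a permutation of $\Omega$. The composition in the statement is then $\phi_k := \rho_{w_{k+1}} \circ \rho_{w_k}$, and as a product of two permutations it is itself a permutation; this settles the first half of the assertion with no further work.

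Next I would recast \eqref{lines-action} as a relation between these permutations. Since $(z \wact x)\wact y = \rho_y(\rho_x(z))$ while $(z \wact y)\wact(x \wact y) = \rho_{\rho_y(x)}(\rho_y(z))$, the postulate reads, for all $x, y \in \Omega$,
\[
\rho_y \circ \rho_x = \rho_{\rho_y(x)} \circ \rho_y .
\]
The second ingredient is the identity $w_{k+2} = w_k \wact w_{k+1}$: representing the single cyclic line $\ell$ with base point $w_k$ produces the sequence $(w_k, w_{k+1}, w_{k+2}, \dots)$, and by the very definition of $\wact$ the third entry of $\ell_{w_k, w_{k+1}}$ is $w_k \wact w_{k+1}$; hence $\rho_{w_{k+1}}(w_k) = w_{k+2}$. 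Applying the displayed identity with $y = w_{k+1}$ and $x = w_k$ then yields
\[
\phi_k = \rho_{w_{k+1}} \circ \rho_{w_k} = \rho_{\rho_{w_{k+1}}(w_k)} \circ \rho_{w_{k+1}} = \rho_{w_{k+2}} \circ \rho_{w_{k+1}} = \phi_{k+1},
\]
and iterating around the cycle (with the convention $w_{n+1} = w_1$) gives $\phi_1 = \cdots = \phi_n$, so $\phi_k$ depends only on $\ell$.

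I expect the difficulty to be bookkeeping rather than substance. The self-distributivity law can be matched to the two-fold composition in more than one way, and only the assignment $y = w_{k+1}$, $x = w_k$ collapses $\rho_{\rho_{w_{k+1}}(w_k)}$ to $\rho_{w_{k+2}}$, so choosing the other orientation leads nowhere. I would also take care to confirm that re-basing the cyclic representative at $w_k$ and reading off its third entry genuinely computes $w_k \wact w_{k+1}$, and to check that the degenerate lines $n = 1$ and $n = 2$ are subsumed by the same formula, since there $w_k \wact w_{k+1}$ collapses to $w_k$, which is precisely $w_{k+2}$ under the cyclic convention.
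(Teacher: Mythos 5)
Your proof is correct and follows essentially the same route as the paper's: both rest on the two facts that $w_{k+2} = w_k \wact w_{k+1}$ on a cyclic line and that \eqref{lines-action}, applied with $x = w_k$, $y = w_{k+1}$ and $z$ arbitrary, converts one consecutive composition into the next. Your permutation-identity packaging $\rho_y \circ \rho_x = \rho_{\rho_y(x)} \circ \rho_y$ and the neighbor-to-neighbor chain $\phi_k = \phi_{k+1}$ is just a cleaner phrasing of the induction the paper runs anchored at $k=1$.
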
 

\begin{proof} 
The assertion is trivially true if $ n = 1 $. 
For $ n \ge 2 $ 
the statement of the theorem is equivalent to saying that
\begin{equation}
\label{wact-equivalence}
    (x \wact w_k) \wact w_{k+1} = (x \wact w_1) \wact w_2
\end{equation}
holds for all $ x \in \Omega $ and all $1 \le k \le n$. 
The case $ k = 1 $ is trivial. 
For $ k =2 $ we compute 
$$
    (x \wact w_2) \wact w_3 =  (x \wact w_2) \wact ( w_1 \wact w_2) 
    = (x \wact w_1) \wact w_2, 
$$
where we used first used $ w_{3} = w_{1} \wact w_{2} $ 
and then  
the identity \eqref{lines-action} in 
the second equality. 
Then \eqref{wact-equivalence} in general follows by an induction 
that uses \eqref{lines-action} again.  
\end{proof}

Let $G$ be the subgroup of $ \mathrm{Perm}(\Omega) $ 
generated by all the permutations
 $(\cdot)\wact x$, where $x\in\Omega$. 
That is, 
$ G:= 
\big\langle \, (\cdot)\wact x ~|~ x \in \Omega \,  \big\rangle
\subset \mathrm{Perm}(\Omega) $, 
and so $G$ is a finite group. 
But $\Omega$ can also be viewed 
via $\Omega\ni x\rightsquigarrow (\cdot)\wact x\in G$ 
as a subset of $G$. 
So we can extend the right action 
$\wact : \Omega \times \Omega \to \Omega$ 
to a right action 
$\wact\colon \Omega\times G\rightarrow \Omega$. 
In particular for $ x,y,z \in \Omega $ we have 
$(z \wact x) \wact y = z \wact (x y)$, where 
we are using the identifications of $x$ and $y$ 
as the permutations $ ( \cdot ) \wact x $ and 
$ ( \cdot ) \wact y $, respectively, and therefore
taking $x y$ to mean their product in $\mathrm{Perm}(\Omega) $. 

\begin{prop} 
Let $(\Omega, \cal{T})$ be a cyclic space 
that satisfies \eqref{lines-action} as well as 
the Non-triviality Assumption.  
Then in terms of the above identification, we have 
$$
x\wact y=y^{-1}xy 
$$
for every $x,y\in\Omega$. 
In particular, the set $\Omega$ is the disjoint union 
of conjugation classes of 
$G$ and $\e\notin \Omega$. 
\end{prop}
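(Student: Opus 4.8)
The plan is to prove the identity $x\wact y = y^{-1}xy$ by showing that, as elements of $G\subset\mathrm{Perm}(\Omega)$, the two sides act identically on every point of $\Omega$. Since a permutation is determined by its values, it suffices to verify that $z\wact(x\wact y)=z\wact(y^{-1}xy)$ for all $z\in\Omega$, where on the left $x\wact y\in\Omega\subset G$ is applied through the binary operation, and on the right we use the extended right action $\wact\colon\Omega\times G\to\Omega$ together with the product $y^{-1}xy$ taken in $G$. Establishing this pointwise equality identifies the permutation attached to $x\wact y$ with the group element $y^{-1}xy$.

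First I would expand the right-hand side using the right-action property $z\wact(gh)=(z\wact g)\wact h$, obtaining $z\wact(y^{-1}xy)=\big((z\wact y^{-1})\wact x\big)\wact y$. Then I would set $w:=z\wact y^{-1}$, so that $w\wact y=z\wact(y^{-1}y)=z$, and rewrite the expression as $(w\wact x)\wact y$. At this point the distributivity postulate \eqref{lines-action}, namely $(w\wact x)\wact y=(w\wact y)\wact(x\wact y)$, applies directly and yields $(w\wact x)\wact y=(w\wact y)\wact(x\wact y)=z\wact(x\wact y)$. This is precisely the desired equality of the two actions, so $y^{-1}xy=x\wact y$ under the identification; I note that this step uses only \eqref{lines-action} and the action axioms, and not the Non-triviality Assumption.

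For the final assertions I would argue as follows. Because $\wact$ maps $\Omega\times\Omega$ into $\Omega$, the identity just proved shows $y^{-1}xy\in\Omega$ for all $x,y\in\Omega$; moreover each map $(\cdot)\wact y$ is a permutation of $\Omega$, hence surjective, which forces $yxy^{-1}\in\Omega$ as well (the preimage $x'$ of $x$ under $(\cdot)\wact y$ satisfies $y^{-1}x'y=x$, i.e.\ $x'=yxy^{-1}\in\Omega$). Thus $\Omega$ is stable under conjugation by every generator $y\in\Omega$ and its inverse, hence---since $G=\langle(\cdot)\wact x\mid x\in\Omega\rangle$---under conjugation by all of $G$; therefore $\Omega$ is a disjoint union of conjugacy classes. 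Finally, the Non-triviality Assumption enters exactly here: it states that each permutation $(\cdot)\wact x$ is non-trivial, whereas $\e\in\mathrm{Perm}(\Omega)$ fixes every point, so no $x\in\Omega$ can be identified with $\e$, giving $\e\notin\Omega$.

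I expect the only delicate point to be the bookkeeping of orientation conventions: one must apply the right-action axiom in the order $z\wact(y^{-1}xy)=\big((z\wact y^{-1})\wact x\big)\wact y$ so that the substitution $w=z\wact y^{-1}$ aligns the expression with the left-hand pattern of \eqref{lines-action}; a slip here would produce the conjugate $yxy^{-1}$ rather than $y^{-1}xy$. Everything else is a direct application of the postulate together with the previously established facts that the maps $(\cdot)\wact x$ are permutations of $\Omega$ and generate $G$.
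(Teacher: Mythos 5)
Your proof is correct and is essentially the paper's own argument: both reduce the pointwise identity $z\wact(y^{-1}xy)=z\wact(x\wact y)$ to the postulate \eqref{lines-action} by the same reparametrization (your substitution $w=z\wact y^{-1}$ is exactly the paper's ``replace $z$ by $z\wact y$''), and both then deduce the conjugacy-class decomposition from the fact that $\Omega$ generates $G$, with the Non-triviality Assumption giving $\e\notin\Omega$. Your explicit check that $yxy^{-1}\in\Omega$ (via surjectivity of $(\cdot)\wact y$) is a small point the paper glosses over, but it is the same route.
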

\begin{proof} 
We have to prove that 
$$ z\wact (y^{-1}xy)=z\wact (x\wact y)$$
for every $x,y,z\in\Omega$. 
Without a lack of generality we can replace $z$ by $z\wact y$, 
in which case the equality to prove becomes
$$ z\wact (xy)=(z\wact y)\wact (x\wact y) $$
because of $(z\wact y)\wact (y^{-1}xy)=z\wact (xy)$. 
However, this is just a rewritten form of the
property \eqref{lines-action}. 

We see that $\Omega$ is invariant under conjugations 
by elements of $\Omega$ and, since $\Omega$ generates $G$, 
it follows that $\Omega$ is invariant under all conjugations
by elements in $ G $, 
that is, it is a disjoint union 
of conjugation classes of $ G $. 
The fact that $\e\notin\Omega$ is nothing but another way 
of expressing  
the non-triviality assumption for the elements of $\Omega$.  
\end{proof}

In particular, the cyclical lines are precisely the projected orbits 
of the action of the inverse of the canonical flip-over operator 
$\sigma^{-1}$ on $\Omega\times\Omega$. 

\begin{remark} 
It is worth observing that the condition \eqref{lines-action} 
can be restricted only for those three elements $x$, $y$ and $z$ 
belonging to a single line. 
This would include the appropriate non-associative structures, 
like finite Moufang loops, in which every two elements generate 
an associative subgroup. 
As explained in \cite{u2} such non-associative objects can still 
be viewed as diagrammatic groups, within a more general framework 
of diagrammatic categories and collectivity structures. 
\end{remark}

\section*{Appendix B: Operations in $\Omega (P)$} 
\label{sec-9}

We include here for the reader's convenience 
the definitions of the four basic 
operations on $\Omega (P)= \cal{D} \otimes \ginv^{\wedge}$. 
In the following for $\varphi \in \mathfrak{D}$ 
we use Sweedler's notation 
$$
_{\mathfrak{D} }\Phi (\varphi) 
= \varphi^{(0)} \otimes \varphi^{(1)} 
\in \mathfrak{D} \otimes \mathcal{A} 
$$
for the right co-action $ _{\mathfrak{D} }\Phi $ 
defined in \eqref{define-Coxeter-co-action}. 

Then for 
$\psi \otimes \theta , \varphi \otimes \eta \in \Omega (P)
= \cal{D} \otimes \ginv^{\wedge}
$ 
with $\deg (\theta)=k$ and $\deg (\varphi) = j$ 
we define their product by 
\begin{equation}
\label{define-twisted-mult}
   (\psi \otimes \theta )(\varphi \otimes \eta):= 
   (-1)^{j k} \psi \varphi^{(0)} \otimes 
   (\theta \circ \varphi^{(1)}) \eta.
\end{equation}
This bilinear expression defines a linear map, 
also called the multiplication, and 
denoted as 
$  m_{\Omega (P)} : \Omega (P) \otimes \Omega (P) \to \Omega (P)$. 

The $*$-operation is defined for $\deg(\varphi) = j$ 
and $\deg (\theta)=k$ by 
$$
   (\varphi \otimes \theta)^* := 
   (-1)^{j k}  \varphi^{(0)*} \otimes (\theta^* \circ \varphi^{(1)*}).
$$

The differential $ d_P $ in $\Omega (P)$ is defined 
for $\deg (\varphi) = j$ by
$$
    d_P (\varphi \otimes \theta) := 
    D(\varphi) \otimes \theta 
    + (-1)^{j} \varphi^{(0)} \otimes \pi (\varphi^{(1)} ) \theta
    + (-1)^{j} \varphi \otimes d^\wedge (\theta),
$$
where 
$d^\wedge :  \, \ginv^\wedge 
\to \, \ginv^\wedge$ 
is the restriction of the differential $d^\wedge$ 
defined on the acceptable algebra $\Gamma^\wedge$. 
Also, $ D $ is the complexified de Rham differential. 

Finally, 
there is a right co-action 
$ \FWP : \Omega (P) \to \Omega (P) \otimes \Gamma^\wedge $
of $\Gamma^\wedge$ on $\Omega(P)$ 
that is explicitly defined by 
$$
\FWP (\varphi \otimes \theta) := 
\varphi^{(0)} \otimes \theta^{(0)} \otimes \varphi^{(1)} \theta^{(1)}.
$$Since 
$\hatphi : \Gamma^\wedge \to \Gamma^\wedge \otimes \Gamma^\wedge$
restricts to 
$\hatphi : \, \ginv^\wedge 
\to  \, \ginv^\wedge \otimes \Gamma^\wedge$, 
for $\theta \in \,\! \ginv^\wedge$ 
we also are using Sweedler's notation 
$$
\hatphi (\theta) = \theta^{(0)} \otimes \theta^{(1)} \in  
\,\! \ginv^\wedge \otimes \Gamma^\wedge. 
$$ 

The right co-action $ \FWP $ extends $ F $ 
and is a differential, unital, 
degree zero $*$-morphism of graded  algebras. 

Hence $\Omega(P)$ is a graded differential, unital $*$-algebra. 
The differential $d_P$ satisfies  
the graded Leibniz rule with respect to 
the product \eqref{define-twisted-mult}
on $\Omega(P)$, 
is covariant with respect to the co-action $\FWP$
and is a $*$-morphism. 
And therefore 
$( \Omega (P), \Gamma^\wedge, \FWP )$ is an hodc which extends
the QPB $  P = (C^\infty (E) , \cal{F} (G) , F ) $. 

It is worth mentioning that each of these four operations 
involve a `twisting' coming from the right co-action 
$ _{\mathfrak{D} }\Phi $. 
If this co-action is trivial, (i.e.,
$ _{\mathfrak{D} }\Phi (\varphi) = \varphi \otimes 1 $ 
for all $\varphi \in \mathfrak{D}$) , 
then these operations reduce to tensor product formulas, and
so in this particular case it is correct 
to think that the structure of the total space
is that of a tensor product. 
But in general these operations are 
not tensor products and so the total 
space is not simply a tensor product 
despite what definition 
\eqref{define-Omega-P}
 might otherwise suggest.

\section*{Appendix C: A Technical Proof}
\label{sec-10}

In this Appendix we present a rather long 
technical proof of \eqref{sigma-eta-vartheta} 
in order to have a more 
complete presentation without interrupting the 
flow of the main body of this paper. 

The motivation for considering the expression 
$\sigma (\eta \otimes \vartheta) $ for both entries 
being left invariant starts with the observation 
that the braiding operation $\sigma $
is the flip {\em provided} that $\eta $ 
is left invariant and $\vartheta $ is right 
invariant, namely 
$\sigma (\eta \otimes \vartheta) = \vartheta \otimes \eta $ 
in this case.  
This property of $\sigma $ is essentially its 
definition. 
But in classical differential geometry one considers 
either left invariant forms without ever mentioning 
right invariant forms (the usual convention) 
or, on the other hand, only 
right invariant forms without ever mentioning 
left invariant forms. 
So it becomes a matter of curiosity to understand 
how $\sigma $ acts in the `un-mixed' case when 
both forms are left invariant. 
And the resulting identity then turns out 
to have its own utility. 

We start out by establishing some notation. 
See Chapter~5 of \cite{Part-II} for more details 
on this notation and related properties. 
Throughout we take $\Gamma$ to be a bicovariant fodc over 
a Hopf algebra $\cal{A}$, though the result 
holds in the more general setting of 
Chapter~5 of \cite{Part-II}. 
We let $\{ \omega_i ~|~ i \in I \}$ be a basis 
of the vector space $\ginv$ of left invariant 
forms in $\Gamma $. 
And we also let $\{ \eta_i ~|~ i \in I \}$ be a basis 
of the vector space of right invariant 
forms in $\Gamma $. 
These bases are related by 
$$
    \omega_i = \sum_{j \in I} \eta_j R_{ji} 
\quad \mathrm{and} \quad 
    \eta_j = \sum_{i \in I} \omega_i \kappa (R_{ij} )
$$
for unique elements $R_{ij} \in \cal{A} $. 
(These identities are inverses of each other.) 

Since we are assuming that both $\eta$ and $\vartheta $
are left invariant, we expand them in terms of the basis 
of $\ginv$ as
$
   \eta = \sum_{j \in I} \lambda_j \omega_j 
$
and
$
   \vartheta = \sum_{i \in I} \sw_i \omega_i  
$, 
where $\lambda_j , \sw_i \in \mathbb{C}$. 

We have this identity for the right adjoint 
co-action $\ad$ and the right canonical co-action $ _\Gamma \Phi $,  
which are equal when evaluated 
on left invariant elements, namely that 
$
  \ad (\omega_i) = \, _\Gamma \Phi (\omega_i) =
  \sum_{j \in I} \omega_j \otimes R_{ji} 
$. 
From this it immediately follows that 
$$
\ad (\vartheta) = 
\vartheta^{(0)} \otimes \vartheta^{(1)} = 
\sum_i \sw_i \ad (\omega_i) = 
\sum_{ij} \sw_i \omega_j \otimes R_{ji} = 
\sum_{ij} \omega_j \otimes \sw_i R_{ji}.  
$$
Some other identities that we will use are:  
\begin{equation*}
 \omega_i \, b = \sum_j ( f_{ij} \ast b ) \, \omega_j, 
 \quad
 \eta_i \, b = \sum_j (b \ast g_{ij}) \eta_j, 
 \quad
 \phi (R_{ij}) = \sum_k R_{ik} \otimes R_{kj}. 
\end{equation*}
Here $b \in \cal{A}$ and $f_{ij}, g_{ij} : \cal{A} \to \mathbb{C}$
are a doubly indexed families of linear functionals known 
as the {\em structure representations}. 
(In this context $f_{ij} = g_{ij} $ 
a fact we do not need.) 
Also, the symbol $\ast$ refers to two different 
{\em convolution products} between elements of $\cal{A}$ and linear 
functionals on $\cal{A}$. 

Here is the derivation of \eqref{sigma-eta-vartheta}. 
We take $\eta,  \vartheta \in \ginv $ and compute 
using all of the above identities as follows: 

\begin{align*}
&\sigma ( \eta \otimes \vartheta ) = 
\sum_{il} \lambda_l \, \sw_i \, \sigma ( \omega_l \otimes \omega_i )
= \sum_{ijl} \lambda_l \, \sw_i \, \sigma ( \omega_l \otimes \eta_j R_{ji} ) 
\\
&= \sum_{ijkl} \lambda_l \, \sw_i \, \sigma ( \omega_l \otimes (R_{ji} \ast g_{jk} ) \, \eta_k )  
= \sum_{ijkl} \lambda_l \, \sw_i \, \sigma \big( \omega_l (R_{ji} \ast g_{jk} ) \otimes \eta_k \big)   
\\
&= \sum_{ijklm} \lambda_l \, \sw_i \, \big( f_{lm} \ast (R_{ji} \ast g_{jk} ) \big) \, 
\sigma ( \omega_m  \otimes \eta_k )\\
&= \sum_{ijklm} \lambda_l \, \sw_i \, \big( (f_{lm} \ast R_{ji}) \ast g_{jk} ) \big) \, 
( \eta_k \otimes \omega_m  ) 
\\
&= \sum_{ijklm} \lambda_l \, \sw_i \, \big( (f_{lm} \ast R_{ji}) \ast g_{jk} ) \, 
 \eta_k \big) \otimes \omega_m 
=  \sum_{ijlm} \lambda_l \, \sw_i \,  
 \big( \eta_j  (f_{lm} \ast R_{ji}) \otimes \omega_m \big)   
 \\
&= \sum_{ijlm} \lambda_l \, \sw_i \,  
 \big( \eta_j \otimes (f_{lm} \ast R_{ji}) \, \omega_m \big) 
= \sum_{ijl} \lambda_l \, \sw_i \,  
 \eta_j \otimes   \, \omega_l \, R_{ji} 
\end{align*}
\begin{align*}
&= \sum_{ij} \sw_i \,  
 \eta_j \otimes   \, \sum_l \lambda_l \, \omega_l \, R_{ji} 
= \sum_{ij} \sw_i \,  
 \eta_j \otimes \eta \, R_{ji} 
\\ 
&= \sum_j  
 \eta_j \otimes \eta \sum_i \sw_i \, R_{ji} 
= \sum_{jk}   
  \omega_k \, \kappa (R_{kj})  \otimes \eta \sum_i \sw_i \, R_{ji} 
\\
&= \sum_{jk}   
  \omega_k \otimes \kappa (R_{kj}) \, \eta \sum_i \sw_i \, R_{ji} 
= \sum_{k}   
  \omega_k \otimes \sum_i \sw_i \, \sum_j \kappa (R_{kj}) \, \eta \, R_{ji} 
  \\
&= \sum_{k}   
  \omega_k \otimes \sum_i \sw_i \, ( \eta \circ R_{ki} ) 
  = \sum_{ik}   
  \omega_k \otimes ( \eta \circ \sw_i \, R_{ki} ) 
  = \vartheta^{(0)} \otimes ( \eta \circ \vartheta^{(1)} ).  
\end{align*}

We also used the identity 
$\vartheta \circ a = \kappa (a^{(1)}) \, \vartheta \, a^{(2)}$ 
and an associativity property of the convolution operations.

\end{document}